\documentclass[10pt,a4paper]{article}
\usepackage[latin1]{inputenc}
\usepackage{amsmath}
\usepackage{amsfonts}
\usepackage{amssymb}
\usepackage{amsthm}
\usepackage{graphicx}
\usepackage{enumitem}
\usepackage{epstopdf}

\usepackage{float} 

\usepackage{soul} 

\usepackage[a4paper,margin=2.0cm]{geometry}


\usepackage{caption}
\captionsetup[figure]{labelfont={bf},font={small}}
\captionsetup[table]{labelfont={bf},font={small}} 

\usepackage{natbib}
\usepackage{aas_macros}

\usepackage{hyperref}

\usepackage{pdflscape}	

\usepackage{siunitx}

\DeclareSIUnit\Msol{M_\odot}
\DeclareSIUnit\pc{pc}
\DeclareSIUnit\kpc{\kilo\pc}
\DeclareSIUnit\yr{yr}
\DeclareSIUnit\Myr{\mega\yr}
\DeclareSIUnit\Gyr{\giga\yr}
\DeclareSIUnit\AU{AU}

\newcommand*\diff{\mathop{}\!\mathrm{d}} 
\newcommand*\R{\mathbb R} 
\newcommand*\N{\mathbb N} 
\newcommand*\measure{\mathcal L}
\newcommand*\scriptQ{\mathcal Q}
\newcommand*\Epot{E_{pot}}
\newcommand*\Epottilde{\tilde E_{pot}}
\newcommand*\Ekin{E_{kin}}
\newcommand*\Etot{E_{tot}}
\newcommand*\He{\mathcal H_E}
\newcommand*\Hb{\mathcal H_B}

\newcommand\VQMS{(VQMS)}

\DeclareMathOperator\supp{supp}
\DeclareMathOperator\divergence{div}

\newtheorem{thm}{Theorem}[section]
\newtheorem{lem}[thm]{Lemma}
\newtheorem{prop}[thm]{Proposition}

\newtheorem*{generalassumptions}{General assumptions}
\newtheorem*{assumptionsPsi}{Assumptions on $\Psi$}
\newtheorem*{assumptionsPhi}{Assumptions on $\Phi$}
\newtheorem*{assumptionsLambda}{Assumptions on $\lambda$}

\theoremstyle{definition}
\newtheorem{defn}[thm]{Definition}
\theoremstyle{remark}
\newtheorem*{rem*}{Remark}
\newtheorem{rem}[thm]{Remark}

\numberwithin{equation}{section}

\title{\LARGE\textbf{Non-Linear Stability of Spherical Models in MOND}}
\author{Joachim Frenkler\\ Fakult\"at f\"ur Mathematik, Physik und Informatik\\Universit\"at Bayreuth \\ D-95440 Bayreuth, Germany \\ joachim.frenkler@uni-bayreuth.de}
\date{\today}

\graphicspath{{./}}
\usepackage{subfiles}

\begin{document}
	
	\maketitle
	
	\begin{abstract}
		We prove the non-linear stability of a large class of spherically symmetric equilibrium solutions of the collisonless Boltzmann equation and of the Euler equations in MOND. This is the first such stability result in MOND that is proven with mathematical rigour. Several difficulties that arise from the intrinsic non-linearity of the Mondian field equations we solve by restricting our analysis to spherical symmetry. We discuss every point where this extra assumption was necessary and outline how in future works one could get along without it. In the end we show at a the example of a polytropic model how our stability result can be applied.
	\end{abstract}
	
	\section{Introduction}
	
	One of the biggest mysteries in astrophysics is the nature of dark matter: Is it an unobserved form of matter? Or is it just a phantom due a the breakdown of Newtonian physics on galactic scales? Some 40 years ago \cite{1983ApJ...270..365Milgrom} developed MOND (MOdified Newtonian Dynamics), a theory that proposes that below a critical acceleration $a_0 \approx \SI{1.2e-10}{\m\per\s\squared}$ Newtonian physics are no longer valid. If the gravitational acceleration $g_N$ of an object expected from Newtonian physics would be much smaller than $a_0$, then MOND predicts that the real gravitational acceleration $g_{real}$ of this object is actually given by $g_{real} = \sqrt{a_0g_N}$.	With its single modification MOND is capable of explaining many astrophysical phenomena usually attributed to dark matter \citep{2012LRR....15...10FamaeyMcGaugh}, most notably flat rotation curves \citep{2011A&A...527A..76GentileFamaeyBlok}. 
	
	In this paper we couple both the collisionless Boltzmann equation\footnote{In mathematics the collisionless Boltzmann equation is often called the Vlasov equation.} and the Euler equations with the Mondian field equations. With mathematical rigour we prove the non-linear stability of a large class of equilibrium solutions of these equations. This is the first time that such results are proven using Mondian physics. In contrast to Newtonian physics, where several such stability results exist \citep{1999GuoRein,2007Rein,2012InMat.187..145LemouMehatsRaphael}, there is only a numerical stability analysis available in MOND for spherical objects \citep{2011Nipoti}. The only analytical, Mondian stability treatments we are aware of are the linearized analyses of the Toomre instability in disc-like systems by \cite{1989ApJ...338..121Milgrom} and \cite{2018arXiv180810545Banik}. Objects that can be described with models similar to the ones treated in this paper cover for example globular cluster \citep{2011ApJ...743...43Ibata,2012MNRAS.422L..21Sanders}, dwarf spheriodal galaxies and galaxy groups \citep{2021Milgrom}.
	

	
	Our final stability theorems are not the only benefit one can dervie from the present paper. Across this paper, we develop new, genuinely Mondian ideas how to answer the arising mathematical questions. For example there is one point in our argumentation where it is necessary to prove that the mass of a certain sequence\footnote{The models we discuss are constructed as minimizers of a variational problem. To find such a minimizer one takes first a minimizing sequence, subsequently shows that this sequence converges and finally proves that the limit is indeed a minimizer. It is important to prove that the mass of such a minimizing sequence always remains concentrated.} remains concentrated. Here, our proof explicitly uses the differences between Mondian and Newtonian physics and this way becomes more direct then corresponding proofs in Newtonian physics (Lemma \ref{lemma masses remain concentrated along minimizing sequences}) Further there are many technical difficulties that we have to solve and which originate from the intrinsic non-linear nature of the Mondian field equations. We solve all these arising difficulties (see, e.g., Lemma \ref{lemma estimates for Epot}, Theorem \ref{thm existence of minimizers of Hr} or Lemma \ref{lemma derivative of EpotQ}). Several of them made it necessary to restrict our results to spherical symmetry. We do not linger on this restricting, extra assumption, but discuss all points where the assumption of spherical symmetry was crucial. For several points we outline how they could be solved without it. For the others we illustrate how hard it might be to find a solution (Section \ref{section getting rid of spherical symmetry}).
	Thus the present paper is both an important step forward toward a better understanding of the stability of spherical systems in MOND and a pool of new ideas how problems arising from Mondian physics can be solved with mathematical rigour.
	
	Before we start with strict mathematics, we state in Sections \ref{principle results and basic ideas collisionless} and \ref{principle results and basic ideas fluid} our principle stability results in an informal way and explain the basic ideas that lead to these results in a simple language. In Section \ref{section Potential theory} we collect and prove all results from Newtonian and Mondian potential theory that are necessary for the subsequent sections. In Section \ref{section finding minimizers} we study two variational problems and prove the existence of minimizers for both problems. Further, we study the regularity of these minimizers. In Section \ref{section stability} we prove two stability results: One for equilibrium solutions of the collisionless Boltzmann equation and one for equilibrium solutions of the Euler equations. In Section \ref{section getting rid of spherical symmetry} we discuss all points where the assumption of spherical symmetry was crucial. In the last section, Section \ref{section minimizer must be unique}, we demonstrate how to apply our results to prove the non-linear stability of a particular model.
	
	\subsection{Principle results and basic ideas - Collisionless models} \label{principle results and basic ideas collisionless}
	
	Instead of following the trajectories of all stars in a globular cluster or a galaxy, we model the evolution of these systems with a distribution function $f=f(t,x,v)$ that evolves on position-velocity space; $t\in[0,\infty)$ is the time variable and $x,v\in\R^3$ are the position and velocity variable. $f$ shall solve the collisionless Boltzmann equation
	\begin{equation} \label{Boltzmann equation}
		\partial_t f + v\cdot \nabla_x f - \nabla U^M_f \cdot \nabla_v f = 0
	\end{equation}
	where $U^M_f$ is the Mondian gravitational potential created by $f$ itself (for the exact definition of $U^M_f$ see the next section). Equation \eqref{Boltzmann equation} is a transport equation and states that $f$ is constant along solutions of the ordinary differential equation
	\begin{align} \label{characteristic system}
		\dot{x} & = v, \\
		\dot{v} & = -\nabla U^M_{f(t)}(x). \nonumber
	\end{align}
	A non-negative, spherically symmetric equilibrium solution $f_0=f_0(x,v)$ of \eqref{Boltzmann equation} is a good model for a globular cluster. Such models were, e.g., used in the vivid discussion whether the globular cluster NGC 2419 does or does not contradict MOND \citep{2011ApJ...743...43Ibata,2012MNRAS.422L..21Sanders}. A spheroidal galaxy can be modelled in the same way, however \cite{2021Milgrom} chose to model such systems with a fluid approach. We prove stability for such fluid models too; see Section \ref{principle results and basic ideas fluid}.
	
	Since our model $f_0$ shall be time independent also the gravitational potential $U^M_{f_0}$ is time independent and thus the local energy
	\begin{equation*}
		E(x,v) = \frac{|v|^2}{2}  + U^M_{f_0}(x), \quad x,v\in\R^3,
	\end{equation*}
	is constant along solutions of \eqref{characteristic system}. Thus it is a good idea to search models that are functions of the local energy. This way we guarantee that the model is constant along solutions  of \eqref{characteristic system} and thus solves \eqref{Boltzmann equation}. To guarantee that the resulting model has finite mass and support it is convenient to introduce a cut-off energy $E_0\in\R$ and search for models of the form
	\begin{equation} \label{ansatz}
		f_0(x,v) =
		\begin{cases}
			\phi(E_0 - E), &  E < E_0, \\
			0, & E \geq E_0
		\end{cases}
	\end{equation}
	where $\phi:[0,\infty)\rightarrow[0,\infty)$ is an ansatz function.
	
	\cite{2015Rein} proved the existence of a large class of such models but with the methods used there it was not possible to analyse further the stability of the models. To be able to do so, we do a trick here. We rewrite the first line of \eqref{ansatz} in the following way:
	\begin{equation*}
		f_0 = \phi(E_0-E) \quad \Leftrightarrow \quad E + \phi^{-1}(f_0) = E_0 \cdot 1.
	\end{equation*}
	While the form on the left hand side is our ansatz to construct an equilibrium solution of \eqref{Boltzmann equation}, the right hand side can be interpreted in a completely different way. The local energy $E$ is the functional derivative of the total energy of the entire system $\Etot(f_0)$. We give a precise definition of the total energy below. $\phi^{-1}(f_0)$ is the functional derivative of the Casimir functional $\mathcal C(f_0) = \iint \Phi(f_0) \diff x\diff v$\footnote{If not stated otherwise, the integral sign $\int$ extends always over the entire space $\R^3$.}, where $\Phi$ is the antiderivative of $\phi^{-1}$. $E_0$ takes the role of a Lagrange multiplier. And the number $1$ is the functional derivative of the mass of the system $\iint f_0\diff x \diff v$. Thus the right hand side can be interpreted as the Euler-Lagrange equation of the variational problem
	\begin{equation} \label{variational problem}
		\text{minimize } \Etot(f) + \mathcal C(f) \text{ s.t. } \iint f \diff x \diff v = M, f\geq 0
	\end{equation}
	for some prescribed mass $M>0$. The additional constraint $f\geq 0$ is necessary because mass must be non-negative. Further it guarantees that a minimizers of the above variational problem vanishes if $E\geq E_0$ as demanded in \eqref{ansatz}.
	
	Now we have a new possibility to construct equilibrium solutions of  \eqref{Boltzmann equation}. Instead of solving \eqref{ansatz} directly (as it was done in \cite{2015Rein}), we can search a minimizer of the variational problem \eqref{variational problem}. This way is more costly in in terms of labour, but it guarantees additionally the non-linear stability of the equilibrium solution. The basic idea behind the stability result is the following:
	
	If we have initial data $f(0)$ that is -- in a certain sense -- close to the minimizer $f_0$ of \eqref{variational problem}, then also $\Etot(f(0))$ and $\mathcal C(f(0))$ are close to $\Etot(f_0)$ and $\mathcal C(f_0)$. Since $\Etot$ and $\mathcal{C}$ are conserved quantities of solutions of \eqref{Boltzmann equation}, $\Etot(f(t))$ and $\mathcal{C}(f(t))$ will stay close to $\Etot(f_0)$ and $\mathcal C(f_0)$ for all times $t>0$. From the two quantities $\Etot$ and $\mathcal C$, we can construct tools to measure distances between the minimizer $f_0$ and disturbed distribution functions $f$. Combing these distance measures with the above conservation laws and the fact that $f_0$ is a minimizer, we can prove that not only $f(0)$ was close to $f_0$ initially, but that $f(t)$ stays close to $f_0$ for all times $t>0$.
	
	To prove such a stability statement we make the following assumptions on $\Phi$:
	
		\begin{assumptionsPhi}
		$\Phi\in C^1([0,\infty))$, $\Phi(0)=\Phi'(0)=0$ and it holds:
		\begin{enumerate}[label=($\Phi$\arabic*)]
			\item $\Phi$ is strictly convex, \label{Phi convex}
			\item $\Phi(f)\geq Cf^{1+1/k}$ for $f\geq 0$ large, where $0 < k < 3/2$. \label{Phi of f is large for f large}
		\end{enumerate}
		
	\end{assumptionsPhi}

	In view of \eqref{ansatz} it is useful to translate the assumptions on $\Psi$ also in assumptions on $\phi$. Using that $\phi = (\Phi')^{-1}$ the above assumptions imply that $\phi$ is continuous and strictly increasing with $\phi(0) = 0$ and $\phi(\eta) \leq C\eta^k$ for $\eta$ large\footnote{To see that $\phi$ is indeed bounded by $C\eta^k$ we need to use both \ref{Phi convex} and \ref{Phi of f is large for f large}. Compare the analog argument for the function $\Psi$ in the proof of Lemma \ref{lemma minimizers are continuous}.}.
 	With the above assumptions the following theorem holds:
	
	\begin{thm} \label{thm summarization vlasov}
		A minimizer $f_0$ of the variational problem \eqref{variational problem} exists if we limit the variational problem to spherically symmetric distribution functions $f$. This minimizer is of the form \eqref{ansatz} where $\phi=(\Phi')^{-1}$. If this minimizer is unique, then every spherically symmetric solution $f(t)$ of \eqref{Boltzmann equation} with initial data close to $f_0$ remains close to $f_0$ for all times $t>0$.
	\end{thm}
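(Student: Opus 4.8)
\emph{Overall strategy.} The plan is to apply the energy--Casimir (Lyapunov) method, under which the three assertions are three linked steps: existence of a minimizer by the direct method, its identification through the Euler--Lagrange equation, and stability through a coercivity estimate combined with conservation along the flow. Throughout, write $\mathcal{H}(f) := \Etot(f) + \mathcal{C}(f)$ for the functional minimized over the constraint set $\mathcal{F}_M := \{ f \ge 0 : \iint f \diff x \diff v = M,\ f \text{ spherically symmetric} \}$, and $\rho_f := \int f \diff v$ for the induced spatial density.

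\emph{Existence.} First I would run the direct method on $\mathcal{H}|_{\mathcal{F}_M}$. Assumption \ref{Phi of f is large for f large} bounds $\|f_j\|_{L^{1+1/k}}$ uniformly along a minimizing sequence, and together with the mass constraint and $\Ekin \ge 0$ this controls the kinetic energy and, by interpolation, the densities $\rho_{f_j}$; one must first check that $\mathcal{H}$ is bounded below on $\mathcal{F}_M$, using the Mondian potential estimate of Lemma \ref{lemma estimates for Epot} to absorb the negative potential term against the positive kinetic and Casimir terms. After extracting a weak limit $f_j \rightharpoonup f_0$, the decisive point is that no mass escapes to spatial infinity, so that $f_0$ still carries mass $M$; here I would invoke Lemma \ref{lemma masses remain concentrated along minimizing sequences}, whose genuinely Mondian proof the introduction advertises. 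Convexity of $\Phi$ (assumption \ref{Phi convex}) and of $v \mapsto |v|^2/2$ gives weak lower semicontinuity of $\Ekin + \mathcal{C}$, while for the potential energy I would prove strong convergence $\Epot(f_j) \to \Epot(f_0)$ from compactness of the spherically symmetric densities; together these yield $\mathcal{H}(f_0) \le \liminf \mathcal{H}(f_j)$, so that $f_0$ minimizes (essentially the content of Theorem \ref{thm existence of minimizers of Hr}).

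\emph{Euler--Lagrange equation and form of $f_0$.} Varying $f_0$ inside $\mathcal{F}_M$ and using that the first variation of $\Etot$ is the local energy $E$ and that of $\mathcal{C}$ is $\Phi'(f_0)$, the constraint $f \ge 0$ produces the Karush--Kuhn--Tucker relation, with Lagrange multiplier $E_0$ for the mass, that $E + \Phi'(f_0) = E_0$ where $f_0 > 0$ and $E + \Phi'(f_0) \ge E_0$ where $f_0 = 0$. Since \ref{Phi convex} makes $\Phi'$ a strictly increasing bijection with inverse $\phi = (\Phi')^{-1}$, inverting yields exactly the ansatz \eqref{ansatz}. The regularity of $f_0$ and $U^M_{f_0}$ needed to justify these manipulations is supplied by the regularity analysis of Section \ref{section finding minimizers}.

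\emph{Stability.} Assuming $f_0$ unique, the heart of the matter is the coercivity estimate: for $f \in \mathcal{F}_M$, collapsing the linear terms via the Euler--Lagrange relation and the shared mass gives
\begin{equation*}
\mathcal{H}(f) - \mathcal{H}(f_0) \;\ge\; d(f, f_0) + \Big( \Epot(f) - \Epot(f_0) - \int U^M_{f_0} (\rho_f - \rho_{f_0}) \diff x \Big),
\end{equation*}
where $d(f, f_0) := \iint \big[ \Phi(f) - \Phi(f_0) - \Phi'(f_0)(f - f_0) \big] \diff x \diff v \ge 0$ is the nonnegative Bregman-type distance coming from strict convexity of $\Phi$. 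The final bracket is the genuinely Mondian obstacle: in the Newtonian problem the analogous term equals $-\tfrac{1}{8\pi} \| \nabla (U_f - U_{f_0}) \|_{L^2}^2$, a clean negative quadratic, whereas the non-linear Mondian field equation makes it a non-quadratic expression that must be shown to be dominated by a small multiple of $d(f,f_0)$ plus the kinetic energy difference --- this is precisely where Lemma \ref{lemma estimates for Epot} and the derivative formula of Lemma \ref{lemma derivative of EpotQ} enter, and where spherical symmetry is used most heavily. Granting this estimate, stability follows by the standard argument: $\Etot$ and $\mathcal{C}$, hence $\mathcal{H}$, are conserved along solutions of \eqref{Boltzmann equation}, so if $f(0)$ is close to $f_0$ then $\mathcal{H}(f(0)) - \mathcal{H}(f_0)$ is small and therefore $\mathcal{H}(f(t)) - \mathcal{H}(f_0)$ stays small for all $t > 0$; the coercivity estimate then keeps $d(f(t), f_0)$ small, i.e.\ $f(t)$ remains near $f_0$. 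Uniqueness of the minimizer is what converts ``the functional value stays near its minimum'' into ``the state stays near $f_0$'': via a contradiction/compactness argument it rules out the solution drifting toward a distinct minimizer of equal energy. The hardest part throughout will be estimating the non-linear potential-energy remainder in the coercivity inequality, since it is there that the non-linearity of the Mondian field equations resists the Newtonian techniques and forces the restriction to spherical symmetry.
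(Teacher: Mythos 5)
Your existence and Euler--Lagrange steps are sound in outline, but they take a different route from the paper: you run the direct method on $\Hb$ at the level of distribution functions $f$, whereas the paper never does this. It instead reduces to the density level (following \cite{2007Rein}): the $v$-dependence is factored out by defining $\Psi(r)=\inf_{g\in\mathcal G_r}\int(\tfrac12|v|^2g+\Phi(g))\diff v$, the reduced functional $\He(\rho)=\Epot^M(\rho)+\int\Psi(\rho)\diff x$ is minimized over $\mathcal R_M$ (Theorem \ref{thm existence of minimizers of Hr}), and the minimizer $\rho_0$ is then lifted to $f_0$ via the ansatz, with $\Hb(f)\geq\He(\rho_f)$ and $\Hb(f_0)=\He(\rho_0)$ making $f_0$ a minimizer of $\Hb$ (Theorem \ref{thm f0 is minimizer of HC and definition of f0}). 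This matters beyond bookkeeping: all the concentration and compactness lemmas you cite (Lemmas \ref{lemma estimates for Epot} and \ref{lemma masses remain concentrated along minimizing sequences}) are formulated for minimizing sequences of $\He$ in $\mathcal R_M$, so your $f$-level argument would have to re-derive them, or pass through the reduction anyway. The reduction also buys the one-to-one correspondence between minimizers of $\He$ and $\Hb$, which is exactly what lets the uniqueness hypothesis be checked at the density level in Section \ref{section minimizer must be unique}.

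The genuine gap is in your stability step. The coercivity estimate you posit --- that the bracket $\Epot(f)-\Epot(f_0)-\int U^M_0(\rho_f-\rho_0)\diff x$ can be ``dominated by a small multiple of $d(f,f_0)$ plus the kinetic energy difference'' --- is not available and is not how the paper proceeds. By Lemma \ref{lemma Taylor of HC} that bracket equals $-\frac{1}{8\pi}\|\nabla U^N_f-\nabla U^N_0\|_2^2$ up to an error of order $\|\nabla U^N_f-\nabla U^N_0\|_{3/2}^{3/2}$; it carries the \emph{wrong sign}, so the inequality $\Hb(f)-\Hb(f_0)\geq d(f,f_0)+(\text{bracket})$ gives no control of $d$ by the conserved quantity alone, and there is no inequality bounding the field-difference norms by $d$ or by kinetic-energy differences. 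The paper resolves this by putting the field norms $\|\nabla U^N_f-\nabla U^N_0\|_2$ and $\|\nabla U^N_f-\nabla U^N_0\|_{3/2}$ \emph{into the distance itself} (hypothesis and conclusion of Theorem \ref{thm stability}) and arguing by contradiction: if solutions $f_j$ started $1/j$-close but escaped at times $t_j$, conservation of energy and Casimir functionals makes $(f_j(t_j))$ a minimizing sequence of $\Hb$, hence $(\rho_{f_j(t_j)})$ one of $\He$; the compactness built into Theorem \ref{thm existence of minimizers of Hr} --- including the delicate strong $L^{3/2}$ convergence of the fields, which itself needs the compact support of $\rho_0$ (Lemma \ref{lemma minimizers have compact support}) and the convergence $\Epot^Q(\rho_j)\to\Epot^Q(\rho_0)$ extracted inside that proof --- then drives the field norms to zero, and Lemma \ref{lemma Taylor of HC} converts this into $d(f_j(t_j),f_0)\to 0$, a contradiction. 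Uniqueness enters precisely here, so that the minimizing sequence is attracted to $f_0$ and not to some other minimizer; you assigned the contradiction/compactness argument only this auxiliary role, when in the paper it \emph{is} the stability proof, replacing the pointwise coercivity bound that the sign of the quadratic remainder forbids.
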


	Some remarks to this theorem. An important ingredient in proving Theorem \ref{thm summarization vlasov} is that the Mondian potential $U^M(x)$ diverges when $|x|\rightarrow \infty$. This is different from Newtonian potentials that tend to zero when $|x|\rightarrow\infty$. This way Mondian potentials confine mass more effectively. For this reason we need less assumptions on $\Phi$ than in the Newtonian situation and it even makes the proof of Theorem \ref{thm summarization vlasov} more direct than in comparable Newtonian situations (compare the proof of an analog theorem in the Newtonian situation in \cite{2007Rein}). However the intrinsic non-linearity of the Mondian field equations introduces several technical difficulties. This made it necessary to limit Theorem \ref{thm summarization vlasov} to spherically symmetric perturbations at first. We discuss the reasons for this and which efforts must be undertaken to get rid of this restriction in detail in Section \ref{section getting rid of spherical symmetry}. Further, we limited our stability statement to minimizers that are unique. However, this is not a strong restriction. In Section \ref{section minimizer must be unique}, we apply our stability result to a polytropic model and illustrate at this example how one can deal with the uniqueness assumption.
	
	And a remark about the quality of our stability result. Theorem \ref{thm summarization vlasov} is a full, non-linear stability statement. Such a result is much stronger than a linearized stability result would be. For a linearized stability result one would first simplify the collisionless Boltzmann equation \eqref{Boltzmann equation} by linearizing it. We do nothing the like. We treat the full, non-linear, collisionless Boltzmann equation and prove that solutions of the non-linear equation stay close to the minimizer if they start close to the minimizer. Theorem \ref{thm summarization vlasov} is the first stability result in Mondian physics of this quality (For more details on the quality of different stability results see \cite{2003MNRAS.344.1296ReinGuo}).
	
	\subsection{Principle results and basic ideas - Fluid models} \label{principle results and basic ideas fluid}
	
	\cite{2021Milgrom} used fluid models to describe dwarf spheroidal galaxies. In such a model a galaxy is described with a density $\rho(t,x)$ that evolves on position space, a velocity field $u(t,x)$ and a pressure $p(t,x)$. Comparing this to the above collisionless situation, $\rho$ is related to $f$ via
	\begin{equation*}
		\rho(t,x) = \int f(t,x,v) \diff v,
	\end{equation*}
	$u(t,x)$ is the mean velocity of $f$ at position $x$, i.e.,
	\begin{equation*}
		u(t,x) = \int v f(t,x,v) \diff v,
	\end{equation*}
	and the effect of the velocity dispersion on the evolution of the system is attributed to the pressure $p$, which is related to $\rho$ via an equation of state. $\rho$, $u$ and $p$ shall solve the following Euler equations
	\begin{align} \label{Euler equations}
		& \partial_t \rho + \divergence(\rho u) = 0, \\
		& \rho \,\partial_t u + \rho(u \cdot \nabla) u = - \nabla p - \rho\,\nabla U^M_\rho. \nonumber
	\end{align}
	
	As, e.g., done by \cite{2021Milgrom}, we can construct an equilibrium solution $\rho_0(x)$ with vanishing velocity field $u_0(x) = 0$ of these equations via the ansatz
	\begin{equation} \label{ansatz equilibrium solution fluid models}
		\rho_0(x) =
		\begin{cases}
		\psi(E_0 - U^M_{\rho_0}(x)), & \text{if } U^M_{\rho_0} < E_0, \\
		0, & \text{if } U^M_{\rho_0} \geq E_0,
		\end{cases}
	\end{equation}
	where $E_0\in\R$ is a cut-off energy and $\psi$ is a certain ansatz function. Rewriting this ansatz as above in the collisionless situation, we see that we can construct such an equilibrium solution as the minimizer of the variational problem
	\begin{equation} \label{variational problem fluid}
		\text{minimize } \Epot^M(\rho) + \mathcal C(\rho) \text{ s.t. } \int \rho \diff x = M, \rho \geq 0,
	\end{equation}
	where $M>0$ is some prescribed mass. Here, $\Epot(\rho)$ is the (Mondian) potential energy of $\rho$ due to gravity. Analoug to above, $\mathcal C(\rho) = \int \Psi(\rho(x)) \diff x$ with $\Psi$ beinig the antiderivative of $(\psi')^{-1}$. Thus $\mathcal C$ has the form of a Casimir functional, but now in the fluid situation it represents the energy attributed to the pressure. $\int \rho\diff x$ is the mass of the model. Observe, that in contrast to above we do not need to minimize the kinetic part of the energy of the system because it is obiviously minimal when the velocity field vanishes (above in the collisionless situation the kinetic energy is hidden in the total energy $\Etot$). The equation of state for the so constructed density $\rho_0$ is
	\begin{equation} \label{equ equation of state in introduction}
		p(x) = \rho \Psi'(\rho) - \Psi(\rho).
	\end{equation}
	Under the following assumptions on $\Psi$, we prove a non-linear stability result for equilibrium solutions that are of the form \eqref{ansatz equilibrium solution fluid models}
	
	\begin{assumptionsPsi}
		$\Psi \in C^1([0,\infty))$, $\Psi(0)=\Psi'(0)=0$ and it holds:
		\begin{enumerate}[label=($\Psi$\arabic*)]
			\item $\Psi$ is strictly convex, \label{Psi convex}
			\item $\Psi(\rho)\geq C\rho^{1+1/n}$ for $\rho> 0$ large, where $0 < n < 3$. \label{Psi of rho is large for rho large}
		\end{enumerate}
	\end{assumptionsPsi}
	
	With these assumptions, the following theorem holds
	
	\begin{thm} \label{thm summarization fluid}
		A minimizer $\rho_0$ of the variational problem \eqref{variational problem fluid} exists if we limit the variational problem to spherically symmetric densities $\rho$. This minimizer is of the form \eqref{ansatz equilibrium solution fluid models} where $\psi=(\Psi')^{-1}$. If this minimizer is unique, then every spherically symmetric solution $(\rho(t),u(t))$ of \eqref{Euler equations} that conserves energy and that has initial data close to $(\rho_0,0)$ remains close to $(\rho_0,0)$ for all times $t>0$.
	\end{thm}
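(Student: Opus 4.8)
The plan is to follow the energy--Casimir method, reading all three assertions of the theorem off the variational structure of the free-energy functional $\mathcal H(\rho) := \Epot^M(\rho) + \mathcal C(\rho)$ on the constraint set $\{\rho\geq 0 : \int\rho\,\diff x = M\}$ of spherically symmetric densities. Existence of a minimizer I would obtain by the direct method. First I would show $\mathcal H$ is bounded below: the Casimir part is non-negative since $\Psi\geq 0$, and $\Epot^M$ is controlled from below by Lemma \ref{lemma estimates for Epot}. Taking a minimizing sequence $(\rho_j)$, the growth condition \ref{Psi of rho is large for rho large} together with the mass constraint yields an $L^{1+1/n}$ bound on $\rho_j$, hence weak compactness along a subsequence $\rho_j \rightharpoonup \rho_0$. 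The delicate point is to exclude loss of mass to infinity, and this is exactly where the Mondian setting helps: since $U^M$ diverges as $|x|\to\infty$, mass far out is energetically very expensive, so Lemma \ref{lemma masses remain concentrated along minimizing sequences} forces the sequence to stay concentrated. Combined with lower semicontinuity of $\mathcal H$ (convexity of $\Psi$ for the Casimir term, the potential-theory estimates for $\Epot^M$), this shows $\rho_0$ attains the infimum with mass still $M$; this is the content of Theorem \ref{thm existence of minimizers of Hr}.

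Second, the form \eqref{ansatz equilibrium solution fluid models} I would extract from the Euler--Lagrange equation. Perturbing $\rho_0$ inside the constraint set, the first variation of $\mathcal H$ must be non-negative, which on the interior of the support forces $\Psi'(\rho_0) + U^M_{\rho_0} = E_0$, with $E_0$ the Lagrange multiplier of the mass constraint, while the variational inequality together with $\rho\geq 0$ forces $\rho_0 = 0$ wherever $U^M_{\rho_0}\geq E_0$. Inverting via $\psi = (\Psi')^{-1}$ gives precisely \eqref{ansatz equilibrium solution fluid models}. That $\rho_0$ is regular enough for this pointwise identity to make sense, in particular continuous on its support, I would take from Lemma \ref{lemma minimizers are continuous}; reading \eqref{equ equation of state in introduction} off the same computation identifies the equation of state, so $(\rho_0,0)$ is genuinely an equilibrium of \eqref{Euler equations}.

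For stability I would use that the Euler system \eqref{Euler equations} conserves the total energy $\mathcal H(\rho(t)) + \Ekin(t)$, where $\Ekin(t) = \tfrac12\int \rho(t)|u(t)|^2\,\diff x$; this is exactly the hypothesis that the solution conserves energy (needed because weak solutions of Euler may dissipate). If $(\rho(0),u(0))$ is close to $(\rho_0,0)$, then $\mathcal H(\rho(0)) + \Ekin(0)$ is close to $\mathcal H(\rho_0)$, and conservation propagates this for all $t>0$. Since $\rho_0$ is a minimizer, $\mathcal H(\rho(t)) - \mathcal H(\rho_0)\geq 0$, so both this free-energy excess and $\Ekin(t)$ stay small for all time. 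The crux is then a coercivity estimate showing that smallness of $\mathcal H(\rho)-\mathcal H(\rho_0)$ controls a genuine distance between $\rho$ and $\rho_0$. I would split $\mathcal H(\rho)-\mathcal H(\rho_0)$ into a Casimir part, which by strict convexity \ref{Psi convex} of $\Psi$ gives a non-negative relative-entropy-type lower bound on $\rho-\rho_0$ over the support, and a potential-energy part $\Epot^M(\rho)-\Epot^M(\rho_0)$.

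Controlling that potential-energy part is where I expect the main obstacle, and where the Mondian non-linearity really bites. In the Newtonian case $\Epot$ is quadratic in $\rho$ through the linear Poisson equation, so the difference splits cleanly into a linear cross term plus a sign-definite quadratic term in $\rho-\rho_0$. In MOND the field equation is non-linear, $\Epot^M$ is no longer quadratic, and no such exact splitting exists. Instead I would invoke the convexity and monotonicity of the Mondian potential-energy functional — the derivative formula of Lemma \ref{lemma derivative of EpotQ} together with the estimates of Lemma \ref{lemma estimates for Epot} — to bound $\Epot^M(\rho)-\Epot^M(\rho_0)$ below by its linearization at $\rho_0$ minus a controllable error, again leaning on spherical symmetry to keep the non-linear potential estimates tractable. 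Adding this to the Casimir lower bound yields the desired coercivity. Finally, the uniqueness hypothesis is what closes the argument: without it, smallness of the free-energy excess would only force $\rho(t)$ to approach the \emph{set} of minimizers, possibly one different from $\rho_0$; uniqueness collapses that set to $\{\rho_0\}$ and gives that $(\rho(t),u(t))$ stays close to $(\rho_0,0)$ for all $t>0$.
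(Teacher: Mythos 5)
Your existence and Euler--Lagrange steps track the paper faithfully: existence via the direct method with the Mondian concentration Lemma \ref{lemma masses remain concentrated along minimizing sequences} is Theorem \ref{thm existence of minimizers of Hr}, and the form \eqref{ansatz equilibrium solution fluid models} with $\psi=(\Psi')^{-1}$ is Theorem \ref{thm Euler Lagrange equation} combined with Lemmas \ref{lemma minimizers have compact support} and \ref{lemma minimizers are continuous}. The gap is in your stability step. The coercivity estimate you posit --- that smallness of $\He(\rho)-\He(\rho_0)$ controls a genuine distance to $\rho_0$ --- is not available, because the potential energy enters the expansion with the \emph{wrong} sign. The map $\rho\mapsto\nabla U^N_\rho$ is linear, so $\Epot^N(\rho)=-\frac{1}{8\pi}\int|\nabla U^N_\rho|^2\diff x$ is a \emph{concave} quadratic functional; the exact expansion (see the proof of Lemma \ref{lemma Taylor of HC}) is
\begin{equation*}
\Epot^N(\rho)-\Epot^N(\rho_0)-\int U^N_{\rho_0}(\rho-\rho_0)\diff x = -\frac{1}{8\pi}\left\|\nabla U^N_\rho-\nabla U^N_{\rho_0}\right\|_2^2 \leq 0.
\end{equation*}
Likewise $\rho\mapsto\int\scriptQ(|\nabla U^N_\rho|)\diff x$ is convex, so $\Epot^Q$ is also concave in $\rho$, not convex and monotone as you assert; Lemma \ref{lemma Q differentiable} controls its second-order remainder only in absolute value, by $C\|\nabla U^N_\rho-\nabla U^N_{\rho_0}\|_{3/2}^{3/2}$. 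Hence $\He(\rho)-\He(\rho_0)=d(\rho,\rho_0)-\frac{1}{8\pi}\|\nabla U^N_\rho-\nabla U^N_{\rho_0}\|_2^2+R$ with $|R|\leq C\|\nabla U^N_\rho-\nabla U^N_{\rho_0}\|_{3/2}^{3/2}$, and nothing prevents the negative gradient terms from eating the non-negative relative-entropy term $d$; smallness of the free-energy excess alone bounds neither $d$ nor the field differences. Your own closing remark betrays the inconsistency: if a genuine local coercivity estimate around $\rho_0$ existed, the uniqueness hypothesis would be superfluous, whereas in the theorem it is essential.

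What the paper does instead (Theorem \ref{thm stability Euler}, proved exactly like Theorem \ref{thm stability}) is a contradiction--compactness argument, and this is the structural idea your proposal lacks. Suppose stability fails along reasonable solutions $(\rho_j,u_j)$ at times $t_j$. Conservation of mass and energy, $d\geq 0$, and the Taylor estimate show $\He(\rho_j(t_j))\rightarrow\He(\rho_0)=\min_{\mathcal R_M}\He$ and $\frac12\int|u_j(t_j)|^2\rho_j(t_j)\diff x\rightarrow 0$, so $(\rho_j(t_j))$ is a minimizing sequence of $\He$ over $\mathcal R_M$. The extra content of Theorem \ref{thm existence of minimizers of Hr} beyond bare existence is engineered precisely for this moment: along minimizing sequences one has \emph{strong} convergence $\nabla U^N_{\rho_j(t_j)}\rightarrow\nabla U^N_{\rho_0}$ in $L^2(\R^3)$ and $\nabla U^N_{\rho_j(t_j)}-\nabla U^N_{\rho_0}\rightarrow 0$ in $L^{3/2}(\R^3)$ (the latter requiring the compact support of $\rho_0$ from Lemma \ref{lemma minimizers have compact support} and, in its proof, spherical symmetry in an essential way), while the uniqueness assumption is what identifies the limiting minimizer with $\rho_0$ rather than some other element of the minimizing set. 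Feeding these convergences back into the Taylor estimate forces $d(\rho_j(t_j),\rho_0)\rightarrow 0$, contradicting the assumed instability. In short: the substitute for your missing coercivity is a compactness statement along minimizing sequences, not a pointwise functional inequality, and without it the final step of your argument fails.
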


	The same remarks as for Theorem \ref{thm summarization vlasov} apply.
	
	\section{Potential theory} \label{section Potential theory}
	
	In the previous two sections we have already used the notions $U^M$ and $\Epot^M$ for the gravitational potential and the potential energy in MOND. In this section we introduce these notions precisely.
	
	First we have to take a look on Newtonian potentials. For a given density $\rho(x)$, the corresponding Newtonian potential $U^N_{\rho}(x)$ is the solution of
	\begin{equation} \label{Poisson equation}
		\Delta U^N_\rho = 4\pi G \rho, \quad \lim_{|x|\rightarrow\infty} U^N_\rho(x) = 0;
	\end{equation}
	since the exact value of the gravitational constant $G$ does not affect our analysis, we set $G=1$. According to Newtonian physics we would expect that an object, which is located at position $x\in\R^3$, feels an acceleration $g_N= -\nabla U^N_\rho(x)$.
	The basic MOND paradigm states that this is only true for large accelerations. Depending on whether $g_N$ is lower or larger than $a_0$, in Mondian physics $g_N$ and the real acceleration $g_{real}$ are related via
	\begin{equation*}
		\begin{array}{ll}
		g_{real} \approx \sqrt{a_0g_N} & \text{if } g_N \ll a_0, \\
		g_{real} \approx g_N, & \text{if } g_N \gg a_0.
		\end{array}
	\end{equation*}
	We have to precise how $g_{real}$ and $g_N$ are connected for $g_N \approx a_0$. For this we take an interpolation function $\lambda:[0,\infty)\rightarrow[0,\infty)$ such that
	\begin{equation*}
		\begin{array}{ll}
		\lambda(\sigma) \approx \sqrt{a_0}/\sqrt{\sigma} & \text{if } \sigma \ll a_0, \\
		\lambda(\sigma) \approx 0 & \text{if } \sigma \gg a_0,
		\end{array}
	\end{equation*}
	and say that $g_{real}$ and $g_N$ are related via
	\begin{equation*}
		g_{real} = g_N + \lambda\left(\left|g_N\right|\right)g_N .
	\end{equation*}
	This is, e.g., the form how the basic MOND paradigm is implemented in the frequently used N-body code Phantom of RAMSES \citep{2015CaJPh..93..232Lueghausen}. There are different suggestions about the correct choice for $\lambda$ based on observational data \citep[see, e.g., ][]{2012LRR....15...10FamaeyMcGaugh}. Here in this paper, we prove all our results under the following, very general assumptions on $\lambda$:
	
	\begin{assumptionsLambda}
		$\lambda:[0,\infty)\rightarrow[0,\infty)$ is always a measurable function and for each statement, which we prove in this paper, a selection of the follwing three assumptions shall hold:
		\begin{enumerate}[label=($\Lambda$\arabic*)]
			\item There is $\Lambda_1>0$ such that $\lambda(\sigma) \geq \Lambda_1/\sqrt{\sigma}$, for $\sigma>0$ small, \label{lambda bounded from below}
			\item There is $\Lambda_2>0$ such that $\lambda(\sigma) \leq \Lambda_2/\sqrt{\sigma}$, for every $\sigma>0$, \label{lambda bounded from above}
			\item $\lambda\in C^1((0,\infty))$, $\lambda(\sigma)\rightarrow0$ as $\sigma\rightarrow\infty$ and there is $\Lambda_2>0$ such that $-\Lambda_2/(2\sigma^{3/2}) \leq \lambda'(\sigma)\leq 0$, for $\sigma>0$. \label{lambda Prime bounded from below}
		\end{enumerate}
		We precise for each statement which assumptions are necessary for its proof.
	\end{assumptionsLambda}
	\ref{lambda bounded from below} is necessary such that accelerations due to MOND are much stronger than their Newtonian counterpart in the low acceleration regime. \ref{lambda bounded from above} takes care that the accelerations in MOND stay in their physically motivated range. \ref{lambda Prime bounded from below} is actually only a stronger version of \ref{lambda bounded from above} that grants more regularity, as we show below.
	
	Now, we would like to define
	\begin{equation} \label{definition MOND potential in spherical symmetry}
		\nabla U^M_{\rho} := \nabla U^N_\rho + \lambda\left(\left|\nabla U^N_\rho\right|\right)\nabla U^N_\rho
	\end{equation}
	and say that $U^M_\rho$ is simply the antiderivative of $\nabla U^M_\rho$. However, this is only possible if $\nabla U^M_\rho$ is an irrotational vector field. As long as $\nabla U^M_\rho$ is spherically symmetric, this is always true. Since for most parts of this paper we consider spherically symmetric situations, the above definition of $U^M_\rho$ is just fine. But if we leave spherical symmetry, the above definitions of $U^M_\rho$ and $\nabla U^M_\rho$ are wrong and one has to rely on more refined implementations of the basic MOND paradigm.
	
	Once we leave spherical symmetry we will make use of the QUMOND theory (QUasi linear formulation of MOND) \citep{2010MNRAS.403..886Milgrom}. In QUMOND the Mondian potential $U^M_\rho$ is related to $U^N_\rho$ via
	\begin{equation} \label{definition MOND potential QUMOND UM}
	U^M_\rho = U^N_\rho + U^\lambda_\rho
	\end{equation}
	where
	\begin{equation} \label{definition MOND potential QUMOND Ulambda}
		U^\lambda_\rho (x) := \frac{1}{4\pi} \int \lambda\left(|\nabla U^N_\rho(y)|\right)\nabla U^N_\rho(y) \cdot \left(\frac{x-y}{|x-y|^3} + \frac{y}{|y|^3}\right) \diff y, \quad x\in\R^3.
	\end{equation}
	In \cite{2024Frenkler} we have studied QUMOND in detail from a mathematicians point of view. In Sections \ref{section Newtonian potentials} and \ref{section Mondian potentials} we summarize several important lemmas from this work, upon which we rely subsequently.


	\subsection{Newtonian potentials} \label{section Newtonian potentials}

	The three lemmas of this section are all proven in a slightly more general version in \cite{2024Frenkler}. Therefore, we omit their proofs.

	\begin{lem}[Regularity of Newtonian potentials] \label{lemma Newtonian potential}
		If $\rho\in C_c^1(\R^3)$ then
		\begin{equation*}
			U^N_\rho(x) := - \int \frac{\rho(y)}{|x-y|} \diff y, \quad x\in\R^3,
		\end{equation*}
		is the unique solution of \eqref{Poisson equation} in $C^2(\R^3)$. Its first derivative is given by
		\begin{equation*}
			\nabla U^N_\rho(x) = \int \frac{x-y}{|x-y|^3} \rho(y) \diff y, \quad x\in\R^3.
		\end{equation*}
		We say that $\rho$ is an $L^p$ function on $\R^3$, $1<p<\infty$, if the following norm of $\rho$ is finite:
		\begin{equation*}
			\|\rho\|_p := \left(\int |\rho(x)|^p \diff x\right)^{1/p} < \infty.
		\end{equation*}
		If $\rho\in L^1\cap L^p(\R^3)$ for some $1<p<\infty$ (not necessarily differentiable) then $U^N_\rho$ defined as above is twice weakly differentiable and the above formula for $\nabla U^N_\rho$ and the following estimates hold:
		\begin{enumerate}[label=\alph*)]
			\item If $1<p<\frac{3}{2}$ and $3<r<\infty$ with $\frac{1}{3} + \frac{1}{p} = 1 + \frac{1}{r}$ then
			\begin{equation*}
			\|U^N_\rho\|_r \leq C_{p,r} \|\rho\|_p.
			\end{equation*}
			\item If $1<p<3$ and $\frac{3}{2} < s < \infty $ with $\frac{2}{3} + \frac{1}{p} = 1 + \frac{1}{s}$ then
			\begin{equation*}
			\|\nabla U^N_\rho\|_s \leq C_{p,s} \|\rho\|_p.
			\end{equation*}
			\item For every $1<p<\infty$
			\begin{equation*}
			\|D^2U^N_\rho\|_p \leq C_p \|\rho\|_p.
			\end{equation*}
		\end{enumerate}
	\end{lem}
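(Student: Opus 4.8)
The plan is to handle the two regularity regimes separately and then bridge them by a density argument. First, for $\rho\in C_c^1(\R^3)$, I would write $U^N_\rho = -\rho * N$ with the Newtonian kernel $N(x)=|x|^{-1}$. Since $N\in L^1_{loc}(\R^3)$ and $\rho$ is smooth with compact support, the convolution is well defined and a single derivative may be shifted onto the kernel: $\nabla U^N_\rho = -(\nabla\rho)*N = -\rho*(\nabla N)$, the second equality by integration by parts. This yields the claimed formula $\nabla U^N_\rho(x)=\int \frac{x-y}{|x-y|^3}\rho(y)\diff y$, the integral converging absolutely because $|\nabla N(x)|=|x|^{-2}$ is still locally integrable in dimension three. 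To reach $C^2$ regularity and the Poisson equation I would differentiate once more after the change of variables $z=x-y$, writing $\nabla U^N_\rho(x)=\int \frac{z}{|z|^3}\rho(x-z)\diff z$ and differentiating under the integral, which is legitimate since $\rho\in C_c^1$ and the kernel is locally integrable. The identity $\Delta U^N_\rho = 4\pi\rho$ then follows from the distributional relation $\Delta N=-4\pi\delta_0$, while the decay $U^N_\rho(x)\to 0$ as $|x|\to\infty$ is immediate from the compact support of $\rho$. Uniqueness in $C^2(\R^3)$ is then standard: the difference $w$ of two solutions is harmonic and vanishes at infinity, hence $w\equiv 0$ by the maximum principle.

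For the second regime, $\rho\in L^1\cap L^p(\R^3)$, I would recognize the three estimates as standard harmonic-analytic facts about Riesz kernels. Estimate (a) is the Hardy--Littlewood--Sobolev inequality applied to $U^N_\rho = -I_2\rho$ (up to a dimensional constant), the Riesz potential of order two, whose kernel $|x|^{-1}=|x|^{2-3}$ has exactly the homogeneity making the exponent relation $\frac{1}{r}=\frac{1}{p}-\frac{2}{3}$ the sharp HLS scaling; the hypothesis $1<p<\frac{3}{2}$ guarantees $r\in(3,\infty)$. Likewise, estimate (b) is HLS for the order-one Riesz potential with kernel $\frac{x}{|x|^3}$, $|x|^{-2}=|x|^{1-3}$, giving $\frac{1}{s}=\frac{1}{p}-\frac{1}{3}$, and the hypothesis $1<p<3$ places $s$ in $(\frac{3}{2},\infty)$; here the vector-valued kernel is handled componentwise via the pointwise bound $|\nabla U^N_\rho|\leq I_1|\rho|$.

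The hard part will be estimate (c), which I expect to be the main obstacle, since it is the only one of the three where the integral is genuinely singular rather than absolutely convergent. Here each kernel $\partial_i\partial_j N$ is homogeneous of degree $-3$ with vanishing mean over spheres, so the operator $\rho\mapsto\partial_i\partial_j U^N_\rho$ decomposes as a multiple of the identity plus a principal-value singular integral of Calder\'on--Zygmund type; its boundedness on $L^p$ for every $1<p<\infty$ is then the classical Calder\'on--Zygmund theorem. To transfer all of this rigorously from the smooth case to $L^1\cap L^p$, I would fix an approximating sequence $\rho_j\in C_c^1(\R^3)$ with $\rho_j\to\rho$ in $L^1\cap L^p$ (by mollification and truncation), apply the three estimates to the differences $\rho_j-\rho_k$ to deduce that $U^N_{\rho_j}$, $\nabla U^N_{\rho_j}$ and $D^2U^N_{\rho_j}$ are Cauchy in the respective spaces, and identify the limits with the weak derivatives of $U^N_\rho$; passing the integral formula for $\nabla U^N_\rho$ to the limit then establishes it for $\rho$ as well. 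The only delicate point in this limiting argument is again (c), since one must control the singular integral uniformly in $j$, but this is exactly what the Calder\'on--Zygmund bound supplies.
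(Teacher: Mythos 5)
The paper itself contains no proof of this lemma: it is stated with the remark that all three lemmas of Section 2.1 are proven, in slightly more general form, in \cite{2024Frenkler}, so there is no in-paper argument to compare against. Your route --- classical potential theory for $\rho\in C_c^1$, Hardy--Littlewood--Sobolev for (a) and (b), Calder\'on--Zygmund theory for (c), followed by a density argument --- is the standard one, and most of it is sound: the exponent bookkeeping ($\frac1r=\frac1p-\frac23$, $\frac1s=\frac1p-\frac13$, with the stated $p$-ranges placing $r\in(3,\infty)$ and $s\in(\frac32,\infty)$) is correct; the distributional identity $\Delta N=-4\pi\delta_0$ combined with $U^N_\rho=-\rho*N$ gives $\Delta U^N_\rho=4\pi\rho$ with the paper's sign convention; and uniqueness via the maximum principle applied to the harmonic difference vanishing at infinity is fine.

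There is, however, one step that fails as written. In the limiting argument you propose to ``apply the three estimates to the differences $\rho_j-\rho_k$'' so that $U^N_{\rho_j}$, $\nabla U^N_{\rho_j}$ and $D^2U^N_{\rho_j}$ become Cauchy ``in the respective spaces''. But estimate (a) is only available for $1<p<\frac32$ and estimate (b) only for $1<p<3$, whereas the weak differentiability, the gradient formula and estimate (c) are asserted for \emph{every} $1<p<\infty$. For $p\geq\frac32$ (say $p=5$) you have no global Lebesgue space in which $U^N_{\rho_j}$ converges, so you cannot identify the $L^p$-limit of $D^2U^N_{\rho_j}$ as the weak Hessian of $U^N_\rho$, nor pass the gradient formula to the limit, by the argument you describe. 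The repair is routine: split the kernels into near and far parts, $N=N1_{B_1}+N1_{B_1^c}$ and likewise for $\nabla N$; the near parts lie in $L^1$ and the far parts in $L^\infty$, so Young's inequality gives $U^N_{\rho_j}\rightarrow U^N_\rho$ and $\rho_j*\nabla N\rightarrow \rho*\nabla N$ in $L^p+L^\infty\subset L^1_{loc}$, which is exactly the local convergence needed to pass to the limit in the distributional identities and to identify the Calder\'on--Zygmund limit of the Hessians. With that insertion your proof is complete; as it stands, the density step only covers the restricted ranges of $p$ in (a) and (b).
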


	For studying spherically symmetric situations, it is important to recall Newtons shell theorem.
	
	\begin{lem}[Newtons shell theorem] \label{lemma Newtons shell theorem}
		Let $1<p<3$ and $\rho\in L^1\cap L^p(\R^3)$, $\geq 0$ be spherically symmetric. Then
		\begin{equation*}
		\nabla U^N_\rho(x) = \frac{M(r)}{r^2} \frac{x}{r}
		\end{equation*}
		for a.e. $x\in\R^3$ with $r= |x|$ and
		\begin{equation*}
		M(r) := \int_{B_r} \rho(y) \diff y = 4\pi \int_0^r s^2 \rho(s) \diff s
		\end{equation*}
		denoting the mass inside the ball with radius $r$.
	\end{lem}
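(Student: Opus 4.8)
The plan is to prove the formula first for smooth, compactly supported, spherically symmetric densities, where the gradient formula of Lemma~\ref{lemma Newtonian potential} holds pointwise and classically, and then to remove the regularity by an approximation argument resting on estimate~(b) of the same lemma.

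First I would pin down the \emph{direction} of the field. Since $\rho$ is spherically symmetric, $\rho(Ry)=\rho(y)$ for every rotation $R\in SO(3)$; a change of variables in the defining integral of $U^N_\rho$ then gives $U^N_\rho(Rx)=U^N_\rho(x)$, and differentiating yields the equivariance $\nabla U^N_\rho(Rx)=R\,\nabla U^N_\rho(x)$. Choosing $R$ in the stabiliser of $x/|x|$ forces $\nabla U^N_\rho(x)$ to be parallel to $x$, so that $\nabla U^N_\rho(x)=g(r)\,x/r$ for some scalar function $g$ depending only on $r=|x|$. This step already holds for $\rho\in L^1\cap L^p$, since the gradient formula is available there.

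Next I would compute the \emph{magnitude} $g(r)$ for smooth $\rho$ via Gauss' law. Integrating the Poisson equation \eqref{Poisson equation} over the ball $B_r$ and applying the divergence theorem,
\begin{equation*}
	4\pi M(r) = \int_{B_r}\Delta U^N_\rho\,\diff x = \int_{\partial B_r}\nabla U^N_\rho\cdot\frac{x}{r}\,\diffS = g(r)\,4\pi r^2,
\end{equation*}
since $\nabla U^N_\rho\cdot(x/r)=g(r)$ is constant on $\partial B_r$. Hence $g(r)=M(r)/r^2$, which is the claimed formula; the representation $M(r)=4\pi\int_0^r s^2\rho(s)\,\diff s$ is just the polar-coordinate form of $\int_{B_r}\rho$ for a radial integrand. (Equivalently, one could avoid the divergence theorem and instead use the classical shell identity $\frac{1}{4\pi s^2}\int_{|y|=s}|x-y|^{-1}\,\diffS=\min(1/r,1/s)$, integrate it against $\rho$ to obtain a closed form for $U^N_\rho$, and differentiate in $r$; both routes reach the same conclusion.)

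Finally I would pass to general $\rho\in L^1\cap L^p(\R^3)$ by density. Mollifying and truncating while preserving spherical symmetry produces smooth, compactly supported, spherically symmetric $\rho_j\geq 0$ with $\rho_j\to\rho$ in $L^1\cap L^p$, and the formula holds for each $\rho_j$. By estimate~(b) of Lemma~\ref{lemma Newtonian potential} we have $\|\nabla U^N_{\rho_j}-\nabla U^N_\rho\|_s\to 0$, so a subsequence of $\nabla U^N_{\rho_j}$ converges to $\nabla U^N_\rho$ a.e.; simultaneously $M_j(r)=\int_{B_r}\rho_j\to M(r)$ for a.e.\ $r$ by the $L^1$ convergence. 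Letting $j\to\infty$ in $\nabla U^N_{\rho_j}(x)=M_j(|x|)\,|x|^{-2}\,x/|x|$ then yields the assertion for a.e.\ $x$. I expect the main technical point to be precisely this limiting step: one must ensure the approximants can be taken spherically symmetric, so the pointwise formula is available for them, and that the a.e.\ convergence of both sides can be aligned on a common full-measure set — which is exactly what estimate~(b) together with the $L^1$-convergence of the masses delivers.
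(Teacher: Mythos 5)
Your proof is correct, but note that the paper itself offers nothing to compare it against: Lemma~\ref{lemma Newtons shell theorem} is stated without proof, with the reader referred to the companion paper \cite{2024Frenkler} for a more general version. Judged on its own, your argument is the standard and complete route: rotation equivariance of the gradient formula from Lemma~\ref{lemma Newtonian potential} pins down the direction, Gauss' law applied to the classical Poisson equation \eqref{Poisson equation} gives the magnitude for smooth radial densities, and estimate~(b) transfers the identity to general $\rho\in L^1\cap L^p$, $1<p<3$, using that $\nabla U^N_{\rho_j}-\nabla U^N_\rho=\nabla U^N_{\rho_j-\rho}$ (the estimates in Lemma~\ref{lemma Newtonian potential} carry no sign restriction, so they apply to the difference). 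Two small refinements: the convergence of the masses is actually uniform in $r$, since $|M_j(r)-M(r)|\leq\|\rho_j-\rho\|_1$ for every $r$, so no alignment of exceptional sets in $r$ is needed — only the a.e.\ convergent subsequence of the gradients; and in the stabiliser argument you should observe that the full-measure set on which the gradient integral converges absolutely is itself rotation invariant (which follows from the radial symmetry of $\rho$ by the same change of variables), so that for $x$ in this set the equivariance identity holds simultaneously for all rotations fixing $x/|x|$. Both points are cosmetic; the approach is sound and, modulo the greater generality in \cite{2024Frenkler}, is presumably close to what is done there.
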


	Later on, we will make regular use of the following statement.
	
	\begin{lem} \label{lemma EpotN for rho in L6/5}
		If $\rho,\sigma\in L^{6/5}(\R^3)$ then
		\begin{equation*}
		-\frac{1}{8\pi} \int \nabla U^N_\rho\cdot \nabla U^N_\sigma \diff x = \frac 12 \int U^N_\rho\sigma\diff x = -\frac 12 \iint \frac{\rho(y)\sigma(x)}{|x-y|} \diff x\diff y.
		\end{equation*}
	\end{lem}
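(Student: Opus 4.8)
The plan is to realize each of the three expressions as a bounded bilinear form on $L^{6/5}(\R^3)\times L^{6/5}(\R^3)$, to settle the second equality directly by Fubini, and to obtain the first equality by integrating by parts on smooth, compactly supported densities and then passing to the limit. First I would record the two estimates of Lemma \ref{lemma Newtonian potential} at the exponent $p=6/5$: part a) with $r=6$ gives $\|U^N_\rho\|_6\le C\|\rho\|_{6/5}$, and part b) with $s=2$ gives $\|\nabla U^N_\rho\|_2\le C\|\rho\|_{6/5}$ (the required relations $\tfrac13+\tfrac56=1+\tfrac16$ and $\tfrac23+\tfrac56=1+\tfrac12$ both hold). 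These are precisely the diagonal Hardy--Littlewood--Sobolev estimates and hold for every $\rho\in L^{6/5}$. From them the left-hand expression is controlled by Cauchy--Schwarz, $|\int\nabla U^N_\rho\cdot\nabla U^N_\sigma\,\diff x|\le\|\nabla U^N_\rho\|_2\|\nabla U^N_\sigma\|_2\le C\|\rho\|_{6/5}\|\sigma\|_{6/5}$, and the middle one by H\"older with $\tfrac16+\tfrac56=1$, namely $|\int U^N_\rho\,\sigma\,\diff x|\le\|U^N_\rho\|_6\|\sigma\|_{6/5}\le C\|\rho\|_{6/5}\|\sigma\|_{6/5}$.

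For the rightmost integral I would first check absolute convergence: by Tonelli and the same $L^6$-estimate applied to $|\rho|$,
\[
\iint \frac{|\rho(y)||\sigma(x)|}{|x-y|}\,\diff x\,\diff y=\int|\sigma(x)|\big(-U^N_{|\rho|}(x)\big)\,\diff x\le\|\sigma\|_{6/5}\,\|U^N_{|\rho|}\|_6\le C\|\rho\|_{6/5}\|\sigma\|_{6/5}<\infty.
\]
With absolute convergence in hand, Fubini justifies interchanging the order of integration in $\int U^N_\rho\,\sigma\,\diff x=-\int\!\big(\int\tfrac{\rho(y)}{|x-y|}\diff y\big)\sigma(x)\,\diff x$, which is exactly the second asserted equality; I note that this part already holds for all $\rho,\sigma\in L^{6/5}$.

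It remains to prove $-\tfrac{1}{8\pi}\int\nabla U^N_\rho\cdot\nabla U^N_\sigma\,\diff x=\tfrac12\int U^N_\rho\,\sigma\,\diff x$, which I would first establish for $\rho,\sigma\in C_c^1(\R^3)$, where Lemma \ref{lemma Newtonian potential} guarantees $U^N_\rho,U^N_\sigma\in C^2$ with $\Delta U^N_\sigma=4\pi\sigma$. Green's identity on a ball $B_R$ gives
\[
\int_{B_R}\nabla U^N_\rho\cdot\nabla U^N_\sigma\,\diff x=\int_{\partial B_R}U^N_\rho\,\partial_\nu U^N_\sigma\,\diffS-4\pi\int_{B_R}U^N_\rho\,\sigma\,\diff x.
\]
Since the densities are compactly supported, the far field satisfies $U^N_\rho(x)=O(|x|^{-1})$ and $\nabla U^N_\sigma(x)=O(|x|^{-2})$, so the boundary integral is $O(R^{-1})$ and vanishes as $R\to\infty$; letting $R\to\infty$ yields $\int\nabla U^N_\rho\cdot\nabla U^N_\sigma\,\diff x=-4\pi\int U^N_\rho\,\sigma\,\diff x$, i.e.\ the claim after dividing by $-8\pi$.

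Finally I would extend to general $\rho,\sigma\in L^{6/5}$ by density: choosing $\rho_n,\sigma_n\in C_c^1(\R^3)$ with $\rho_n\to\rho$ and $\sigma_n\to\sigma$ in $L^{6/5}$, the boundedness estimates of the first paragraph make both sides of the identity continuous in $(\rho,\sigma)$, so the equality for $(\rho_n,\sigma_n)$ passes to the limit. I expect the only genuine obstacle to be the integration by parts at infinity --- legitimately discarding the surface term. Reducing to compactly supported data, where the explicit decay rates $O(|x|^{-1})$ and $O(|x|^{-2})$ make the boundary contribution manifestly negligible, is exactly what renders this step rigorous; everything else is continuity bookkeeping resting on Lemma \ref{lemma Newtonian potential}.
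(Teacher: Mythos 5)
Your proof is correct, but there is nothing in the paper to compare it against: this is one of the three lemmas of Section 2.1 whose proofs the author explicitly omits, deferring to the more general treatment in \cite{2024Frenkler}. Your route -- the diagonal Hardy--Littlewood--Sobolev bounds at the exponent pairs $(6/5,6)$ and $(6/5,2)$, Fubini--Tonelli for the second equality, Green's identity on $B_R$ for $C_c^1$ data with the $O(|x|^{-1})$ and $O(|x|^{-2})$ far-field decay killing the surface term, and finally density of $C_c^1$ in $L^{6/5}$ -- is the standard argument, and each step checks out: the exponent relations $\frac13+\frac56=1+\frac16$ and $\frac23+\frac56=1+\frac12$ are right, the boundary integral is $O(R^{-3})\cdot 4\pi R^{2}=O(R^{-1})$, and the limit passage needs only the linearity of $\rho\mapsto U^N_\rho$ and $\rho\mapsto\nabla U^N_\rho$ together with your uniform bilinear bounds. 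One point deserves to be made explicit rather than implicit: Lemma \ref{lemma Newtonian potential} is stated for $\rho\in L^1\cap L^p$, whereas the present lemma assumes only $\rho\in L^{6/5}$, so you should say why $U^N_\rho$ and $\nabla U^N_\rho$ are even defined pointwise a.e.\ in this setting. Your Tonelli computation with $U^N_{|\rho|}$ in effect supplies this for the potential (the convolution of $|\rho|$ with $|x|^{-1}$ lies in $L^6$, hence is finite a.e.), and the analogous remark with the kernel $|x-y|^{-2}$ handles $\nabla U^N_\rho$; with that sentence added, the continuity bookkeeping in your final paragraph is fully legitimate.
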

	
	\subsection{Mondian potentials} \label{section Mondian potentials}
	
	The first two lemmas in this section are proven in a more general version in \cite{2024Frenkler}. Therefore, as in the previous section, we omit their proofs.
	
	Let us first take a closer look on the interpolation function $\lambda$. In the next Lemma we show that the assumption \ref{lambda Prime bounded from below} is just a stronger version of \ref{lambda bounded from above} and implies further the H\"older continuity of the function $\lambda(|u|)u$, $u\in\R^3$.
	
	\begin{lem}[Regularity of interpolation function $\lambda$] \label{lemma lambda is Hoelder continuous}
		If $\lambda:(0,\infty)\rightarrow(0,\infty)$ satisfies \ref{lambda Prime bounded from below} then it satisfies also \ref{lambda bounded from above} and there is a $C>0$ such that for all $u,v\in\R^3$
		\begin{equation*}
		|\lambda(|u|)u - \lambda(|v|)v| \leq C|u-v|^{1/2}
		\end{equation*}
		with $\lambda(|u|)u = 0$ if $u=0$.
	\end{lem}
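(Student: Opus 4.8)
The plan is to treat the two assertions separately and, for the Hölder estimate, to reduce the vector-valued claim to a one-dimensional one via the radial structure of the field $u\mapsto\lambda(|u|)u$.

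First, to show that \ref{lambda Prime bounded from below} implies \ref{lambda bounded from above}, I would integrate the derivative bound from $\sigma$ to infinity. Since $\lambda(\sigma)\to0$ as $\sigma\to\infty$ and $-\lambda'(s)\le\Lambda_2/(2s^{3/2})$, the fundamental theorem of calculus gives $\lambda(\sigma)=\int_\sigma^\infty(-\lambda'(s))\diff s\le\int_\sigma^\infty\Lambda_2/(2s^{3/2})\diff s=\Lambda_2/\sqrt\sigma$, which is precisely \ref{lambda bounded from above}; the improper integral converges by the same bound, so this step is routine.

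For the Hölder continuity, the key observation is that $\lambda(|u|)u=g(|u|)\,u/|u|$ is a radial vector field whose radial profile is the scalar function $g(\sigma):=\sigma\lambda(\sigma)$, and that $g$ is itself $1/2$-Hölder. Indeed, from $g'(\sigma)=\lambda(\sigma)+\sigma\lambda'(\sigma)$, the just-established bound \ref{lambda bounded from above} together with $\lambda'\le0$ yields $g'(\sigma)\le\Lambda_2/\sqrt\sigma$, while $\lambda\ge0$ and the lower bound on $\lambda'$ give $g'(\sigma)\ge-\Lambda_2/(2\sqrt\sigma)$; hence $|g'(\sigma)|\le\Lambda_2/\sqrt\sigma$. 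Integrating and using $\sqrt r-\sqrt s\le\sqrt{r-s}$ for $r\ge s\ge0$ then yields $|g(r)-g(s)|\le2\Lambda_2\sqrt{|r-s|}$.

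The main work, and the only genuinely delicate point, is passing from this scalar estimate to the full vector estimate, where $u$ and $v$ need not be parallel. Writing $r=|u|$, $s=|v|$, $\hat u=u/r$, $\hat v=v/s$ and assuming without loss of generality $s\le r$, I would use the decomposition $g(r)\hat u-g(s)\hat v=(g(r)-g(s))\hat u+g(s)(\hat u-\hat v)$. The radial term is controlled by the Hölder bound on $g$ and $|r-s|\le|u-v|$. For the angular term I would exploit the elementary identity $u-v=s(\hat u-\hat v)+(r-s)\hat u$, which gives $s|\hat u-\hat v|\le|u-v|+(r-s)\le2|u-v|$. The obstacle is that $g(s)|\hat u-\hat v|\le\Lambda_2\sqrt s\,|\hat u-\hat v|$ still carries a potentially singular factor $\sqrt s$; I would tame it by interpolating between the two available bounds $s|\hat u-\hat v|\le2|u-v|$ and $|\hat u-\hat v|\le2$, writing $\sqrt s\,|\hat u-\hat v|=\sqrt{s|\hat u-\hat v|}\cdot\sqrt{|\hat u-\hat v|}\le\sqrt{2|u-v|}\cdot\sqrt2=2\sqrt{|u-v|}$, so that the angular term is bounded by $2\Lambda_2\sqrt{|u-v|}$. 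Adding the two contributions gives the claim with $C=4\Lambda_2$. Finally, the boundary case $v=0$ reduces to $|g(r)|\le\Lambda_2\sqrt r=\Lambda_2\sqrt{|u-v|}$, which simultaneously confirms that setting $\lambda(|u|)u=0$ at $u=0$ renders the field continuous there.
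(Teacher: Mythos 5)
Your proof is correct, and every step checks out: integrating $-\lambda'(s)\le\Lambda_2/(2s^{3/2})$ from $\sigma$ to $\infty$ against the boundary condition $\lambda(\sigma)\to0$ does give exactly \ref{lambda bounded from above}; the bound $|g'(\sigma)|\le\Lambda_2/\sqrt\sigma$ for $g(\sigma)=\sigma\lambda(\sigma)$ is right (upper bound from \ref{lambda bounded from above} and $\lambda'\le0$, lower bound from $\lambda\ge0$ and the derivative bound); the elementary inequality $\sqrt r-\sqrt s\le\sqrt{r-s}$ holds for $r\ge s\ge0$; the identity $u-v=s(\hat u-\hat v)+(r-s)\hat u$ is verified by direct expansion; and the interpolation $\sqrt s\,|\hat u-\hat v|=\sqrt{s|\hat u-\hat v|}\cdot\sqrt{|\hat u-\hat v|}\le\sqrt{2|u-v|}\cdot\sqrt2$ is the genuinely clever point that removes the singular factor, giving the clean constant $C=4\Lambda_2$. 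One caveat on the comparison you asked for: this paper does not actually prove Lemma \ref{lemma lambda is Hoelder continuous} — it is quoted from the companion paper \cite{2024Frenkler} and the proof is explicitly omitted — so there is no in-paper argument to measure yours against. The technique this paper uses for the closely analogous Lemma \ref{lemma Q differentiable} suggests what the reference proof likely looks like: one differentiates the field $F(w)=\lambda(|w|)w$, bounds $|DF(w)|\le\lambda(|w|)+|\lambda'(|w|)||w|\le\tfrac32\Lambda_2|w|^{-1/2}$, and integrates along the segment $w_t=v+t(u-v)$ using the worst-case estimate $\int_0^1|w_t|^{-1/2}\diff t\le2\sqrt2\,|u-v|^{-1/2}$, which yields the H\"older bound after a limiting argument for segments passing through the origin. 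Your polar decomposition into a radial part $(g(r)-g(s))\hat u$ and an angular part $g(s)(\hat u-\hat v)$ is a different and in some ways preferable route: it never differentiates the vector field (only the scalar profile $g$), it handles the origin directly via $|g(r)|\le\Lambda_2\sqrt r$ rather than through a degenerate-segment limit, and it produces an explicit constant.
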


	The Mondian potential $U^M_\rho$ that belongs to a certain Newtonian potential $U^N_\rho$ is given by
	\begin{equation*}
		U^M_\rho = U^N_\rho + U^\lambda_\rho
	\end{equation*}
	where the definition of $U^\lambda_\rho$ was already given in \eqref{definition MOND potential QUMOND Ulambda}. The operator necessary to define $U^\lambda_\rho$ is not trivial. The next Lemma asserts that $U^\lambda_\rho$ is indeed well defined.
	
	\begin{lem}[Regularity of Mondian potentials] \label{lemma Ulambda}
		Assume that \ref{lambda bounded from above} holds and let $\rho\in L^1\cap L^p(\R^3)$ for a $p>1$. Set
		\begin{equation*}
		U^\lambda_\rho (x) := \frac{1}{4\pi} \int \lambda\left(|\nabla U^N_\rho(y)|\right)\nabla U^N_\rho(y) \cdot \left(\frac{x-y}{|x-y|^3} + \frac{y}{|y|^3}\right) \diff y, \quad x\in\R^3.
		\end{equation*}
		Then
		\begin{equation*}
		U^\lambda_\rho \in L^1_{loc}(\R^3).
		\end{equation*}
	\end{lem}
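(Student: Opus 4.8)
The plan is to prove absolute integrability of the defining double integral over every ball. Write $h(y) := \lambda(|\nabla U^N_\rho(y)|)\,\nabla U^N_\rho(y)$ (set $h(y)=0$ where $\nabla U^N_\rho(y)=0$) and $K(x,y) := \frac{x-y}{|x-y|^3} + \frac{y}{|y|^3}$, so that $U^\lambda_\rho(x) = \frac{1}{4\pi}\int h(y)\cdot K(x,y)\diff y$. For fixed $R>0$ I would show $\int_{B_R}\int_{\R^3} |h(y)|\,|K(x,y)|\diff y\diff x < \infty$. By Tonelli this equals $\int_{\R^3} |h(y)|\,g_R(y)\diff y$ with $g_R(y) := \int_{B_R}|K(x,y)|\diff x$, and its finiteness yields at once that the defining integral converges absolutely for a.e.\ $x\in B_R$ and that $\int_{B_R}|U^\lambda_\rho|\diff x<\infty$; since $R$ is arbitrary this gives $U^\lambda_\rho\in L^1_{loc}(\R^3)$.

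First I would control the integrand. By \ref{lambda bounded from above}, $|h(y)| = \lambda(|\nabla U^N_\rho(y)|)\,|\nabla U^N_\rho(y)| \le \Lambda_2\,|\nabla U^N_\rho(y)|^{1/2}$. Interpolating $\rho\in L^1\cap L^p$ I may assume $1<p<3$ (pick any intermediate exponent in $(1,3)$), so Lemma \ref{lemma Newtonian potential}b) gives $\nabla U^N_\rho\in L^s(\R^3)$ with $\tfrac1s=\tfrac1p-\tfrac13$, hence $s>3/2$. Consequently $h\in L^{2s}(\R^3)$ with $2s>3$ and $\|h\|_{2s}\le \Lambda_2\|\nabla U^N_\rho\|_s^{1/2}$.

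Next I would estimate the weight $g_R$. Writing $K(x,y)=\nabla_y\bigl(\tfrac{1}{|x-y|}-\tfrac{1}{|y|}\bigr)$ exhibits the cancellation built into the correction term $y/|y|^3$: for $|y|\ge 2R$ and $x\in B_R$ a Taylor expansion gives $|K(x,y)|\le C_R|y|^{-3}$, hence $g_R(y)\le C_R|y|^{-3}$ at infinity, one power better than the naive triangle bound. Near the diagonal and the origin there is no such gain, but none is needed: $\int_{B_R}|x-y|^{-2}\diff x$ stays bounded and $\int_{B_R}|y|^{-2}\diff x=|B_R|\,|y|^{-2}$, so $g_R(y)\le C_R(1+|y|^{-2})$ for $|y|\le 2R$. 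Splitting $\int|h|g_R$ accordingly, the local pieces $\int_{B_{2R}}|h|(1+|y|^{-2})$ and the far piece $\int_{|y|>2R}|h|\,|y|^{-3}$ are all finite by H\"older with $h\in L^{2s}$: the weight $|y|^{-2}$ lies in $L^{(2s)'}(B_{2R})$ since $2s>3$ forces $(2s)'<3/2$, while $|y|^{-3}$ lies in $L^{(2s)'}(\{|y|>2R\})$ since $(2s)'>1$ always.

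The main obstacle is exactly the decay at infinity. The individual kernels $\tfrac{x-y}{|x-y|^3}$ and $\tfrac{y}{|y|^3}$ each decay only like $|y|^{-2}$, which is \emph{not} integrable against $|h|\in L^{2s}$ once $s$ is large, so the crude bound $|K|\le|x-y|^{-2}+|y|^{-2}$ fails; everything hinges on keeping the two terms together and extracting the $|y|^{-3}$ cancellation. By contrast the origin singularity $|y|^{-2}$ is harmless, being absorbed by the gain $2s>3$ inherited from the Sobolev exponent of $\nabla U^N_\rho$.
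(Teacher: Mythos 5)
Your proof is correct and complete. Note that the paper contains no proof of Lemma \ref{lemma Ulambda} to compare against: it is quoted from \cite{2024Frenkler} with the proof omitted. Your route, however, is exactly the one the paper implicitly relies on elsewhere --- in the proof of Lemma \ref{lemma derivative of EpotQ} the author invokes ``the estimates from the proof of Lemma \ref{lemma Ulambda}'' to justify Fubini, which is precisely your Tonelli bound $\int_{B_R}\int |h(y)|\,|K(x,y)|\,\diff y\,\diff x<\infty$. All three ingredients of your argument check out: \ref{lambda bounded from above} gives $|h|\leq \Lambda_2|\nabla U^N_\rho|^{1/2}$; interpolation of $\rho\in L^1\cap L^p$ down to an exponent in $(1,3)$ plus Lemma \ref{lemma Newtonian potential}b) puts $h\in L^{2s}$ with $3<2s<\infty$ (the finiteness of $2s$ matters for your far-field H\"older step, and fixing one intermediate exponent secures it); and the mean-value estimate on $K(x,y)=\nabla_y\bigl(|x-y|^{-1}-|y|^{-1}\bigr)$ yields the decisive $|y|^{-3}$ decay for $|y|\geq 2R$, $x\in B_R$. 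This cancellation is the same mechanism the paper exploits in Proposition \ref{prop nabla UN phi is in every Lp if int phi is zero}, where $\int\phi\,\diff x=0$ upgrades the naive $|x|^{-2}$ decay of $\nabla U^N_\phi$ to $|x|^{-3}$; here the subtracted term $y/|y|^3$ (the kernel evaluated at the reference point $x=0$) plays the analogous renormalizing role. You also correctly diagnose the one genuine obstacle --- each kernel piece alone decays only like $|y|^{-2}$, which fails to pair with $L^{2s}$ at infinity since $2(2s)'<3$, while $|y|^{-3}$ lies in $L^{(2s)'}$ there because $(2s)'>1$ --- and why the origin singularity $|y|^{-2}\in L^{(2s)'}(B_{2R})$ is harmless.
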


	If one wants to dig deeper into the QUMOND theory, one can use the Helmholtz-Weyl decomposition \citep{2011Galdi} to prove that $U^\lambda_\rho$ is weakly differentiable with $\nabla U^\lambda_\rho$ being the irrotational part of the field $\lambda(\left|\nabla U^N_\rho\right|)\nabla U^N_\rho$ \citep{2024Frenkler}.
	
	If $U^N_\rho$ is spherically symmetric, the above definition of $U^\lambda_\rho$ becomes much simpler: If $U^N_\rho$ is spherically symmetric, then so is $\lambda\left(|\nabla U^N_\rho|\right)\nabla U^N_\rho$. Thus the rotation of this field vanishes. As just stated $\nabla U^\lambda_\rho$ is the irrotational part of $\lambda\left(|\nabla U^N_\rho|\right)\nabla U^N_\rho$ in the sense of the Helmholtz-Weyl decomposition. Since in spherical symmetry the latter field is already irrotational, the Helmholtz-Weyl decompoisition becomes trivial and
	 the two fields $\nabla U^\lambda_\rho$ and $\lambda\left(|\nabla U^N_\rho|\right)\nabla U^N_\rho$ are indeed identical \citep{2024Frenkler}:
	\begin{equation*}
		\nabla U^\lambda_\rho = \lambda\left(|\nabla U^N_\rho|\right)\nabla U^N_\rho.
	\end{equation*}
	Therefore, in spherical symmetry the following lemma holds:
	
	\begin{lem}[Regularity of Mondian potential in spherical symmetry] \label{lemma Ulambda sph sym}
		Let $\rho\in L^1\cap L^p(\R^3)$, $1<p<3$ be spherically symmetric. Then
		\begin{equation*}
			\nabla U^M_\rho = \nabla U^N_\rho + \nabla U^\lambda_\rho = \nabla U^N_\rho + \lambda\left(|\nabla U^N_\rho|\right)\nabla U^N_\rho.
		\end{equation*}
		If additionally \ref{lambda Prime bounded from below} holds, then
		\begin{equation*}
		U^M_\rho, U^N_\rho, \, U^\lambda_\rho \in C^1(\R^3\backslash\{0\})
		\end{equation*}
		and for $x\in\R^3\backslash\{0\}$, $r=|x|$ holds
		\begin{equation*}
		U^M_{\rho}(r) =  U^M_{\rho}(1) + \int_1^r \left(1 + \lambda\left(\frac{M(s)}{s^2}\right)\right) \frac{M(s)}{s^2} \diff s
		\end{equation*}
		under a slight abuse of notation.
	\end{lem}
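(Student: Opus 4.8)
The plan is to prove the three assertions in order, reducing everything via Newton's shell theorem to the scalar radial variable and saving the genuinely Mondian difficulty for last. Throughout I treat spherically symmetric functions as functions of $r=|x|$, as the statement already does.

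For the first displayed identity, the point is that spherical symmetry trivialises the Helmholtz--Weyl decomposition, exactly as explained in the paragraph preceding the lemma. Since $\rho$ is spherically symmetric, so is $\nabla U^N_\rho$, and hence so is the vector field $\lambda(|\nabla U^N_\rho|)\nabla U^N_\rho$; a spherically symmetric field is irrotational, so its irrotational part in the Helmholtz--Weyl sense is the field itself. As $\nabla U^\lambda_\rho$ is by definition that irrotational part (and is well defined because \ref{lambda Prime bounded from below} implies \ref{lambda bounded from above} via Lemma \ref{lemma lambda is Hoelder continuous}, so that Lemma \ref{lemma Ulambda} applies), I would simply cite this fact from \cite{2024Frenkler} to obtain $\nabla U^\lambda_\rho = \lambda(|\nabla U^N_\rho|)\nabla U^N_\rho$; adding $\nabla U^N_\rho$ gives the formula for $\nabla U^M_\rho$.

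For the regularity claim I would first fix the representative of $\nabla U^N_\rho$. Newton's shell theorem (Lemma \ref{lemma Newtons shell theorem}) gives $\nabla U^N_\rho(x) = (M(r)/r^2)(x/r)$ with $M(r)=4\pi\int_0^r s^2\rho(s)\,\diff s$; since $\rho\in L^1$, the mass function $M$ is continuous, so $r\mapsto M(r)/r^2$ is continuous on $(0,\infty)$ and $x\mapsto x/r$ is continuous on $\R^3\setminus\{0\}$. Hence $\nabla U^N_\rho$ admits a continuous representative off the origin and $U^N_\rho\in C^1(\R^3\setminus\{0\})$. Substituting into the identity of the previous step, $\nabla U^\lambda_\rho$ has radial magnitude $\lambda(M(r)/r^2)\,M(r)/r^2$. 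This is manifestly continuous at radii where $M(r)>0$ because $\lambda\in C^1((0,\infty))$ under \ref{lambda Prime bounded from below}; the delicate point, and the main obstacle, is continuity at radii where $M(r)=0$ (for instance if $\rho$ vanishes on a ball about the origin, or as $r$ approaches a zero of $M$), since there $\sigma\mapsto\lambda(\sigma)$ blows up as $\sigma\to 0$. This is exactly where I would invoke Lemma \ref{lemma lambda is Hoelder continuous}: under \ref{lambda Prime bounded from below} the map $u\mapsto\lambda(|u|)u$ is H\"older continuous on all of $\R^3$ with value $0$ at $u=0$, so $x\mapsto\lambda(|\nabla U^N_\rho(x)|)\nabla U^N_\rho(x)$ is continuous on $\R^3\setminus\{0\}$ as a composition of continuous maps. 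Thus $\nabla U^\lambda_\rho$ has a continuous representative off the origin, giving $U^\lambda_\rho\in C^1(\R^3\setminus\{0\})$ and, by summation, $U^M_\rho\in C^1(\R^3\setminus\{0\})$.

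Finally, the integral formula is just the fundamental theorem of calculus applied to the radial profile. By the first step the radial derivative of $U^M_\rho$ equals the radial component of $\nabla U^M_\rho$, namely $(1+\lambda(M(r)/r^2))\,M(r)/r^2$, which the previous paragraph shows to be continuous in $r$ on $(0,\infty)$. Integrating from $1$ to $r$ yields the claimed expression for $U^M_\rho(r)$. The single subtle ingredient in the whole argument is the continuity of the Mondian field across the zeros of $\nabla U^N_\rho$, and it is precisely the H\"older regularity of $u\mapsto\lambda(|u|)u$ supplied by Lemma \ref{lemma lambda is Hoelder continuous} that disposes of it.
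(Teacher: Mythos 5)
Your proposal is correct and follows essentially the same route as the paper's proof: Newton's shell theorem plus continuity of $M(r)$ give $\nabla U^N_\rho\in C(\R^3\backslash\{0\})$, the identity $\nabla U^\lambda_\rho = \lambda\left(|\nabla U^N_\rho|\right)\nabla U^N_\rho$ is taken from the Helmholtz--Weyl discussion preceding the lemma, continuity of $u\mapsto\lambda(|u|)u$ from Lemma \ref{lemma lambda is Hoelder continuous} yields $\nabla U^\lambda_\rho\in C(\R^3\backslash\{0\})$, and the integral formula is the fundamental theorem of calculus. Your explicit attention to the zeros of $M(r)$, where $\lambda(\sigma)$ blows up as $\sigma\to 0$, is exactly the point the paper handles implicitly through the convention $\lambda(|u|)u=0$ at $u=0$ in that lemma; the only cosmetic difference is that you invoke \ref{lambda Prime bounded from below} already for the first identity, where \ref{lambda bounded from above} alone suffices.
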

	
	\begin{proof}
		Lemma \ref{lemma Newtons shell theorem} tells us that
		\begin{equation*}
		\nabla U^N_\rho (x) = \frac{M(r)}{r^2} \frac{x}{r}, \quad x\in\R^3\backslash\{0\},\, r=|x|.
		\end{equation*}
		Since $ \rho\in L^1(\R^3)$
		\begin{equation*}
		M(r) \in C([0,\infty)).
		\end{equation*}
		Hence
		\begin{equation*}
		\nabla U^N_\rho \in C(\R^3\backslash\{0\}).
		\end{equation*}
		Since \ref{lambda Prime bounded from below} holds, the map
		\begin{equation*}
		\R^3 \ni u \mapsto \lambda(|u|)u
		\end{equation*}
		is continuous (Lemma \ref{lemma lambda is Hoelder continuous}). As discussed above we have
		\begin{equation*}
		\nabla U^\lambda_\rho = \lambda\left(|\nabla U^N_\rho|\right)\nabla U^N_\rho.
		\end{equation*}
		in the situation of spherical symmetry.
		Thus
		\begin{equation*}
		\nabla U^\lambda_\rho  \in C(\R^3\backslash\{0\}).
		\end{equation*}
		Hence
		\begin{equation*}
		U^\lambda_\rho,\, U^N_\rho \in C^1(\R^3\backslash\{0\})
		\end{equation*}
		and by the fundamental theorem of calculus
		\begin{equation*}
		U^M_{\rho}(r) = U^N_\rho(r) + U^\lambda_\rho(r) = U^M_{\rho}(1) + \int_1^r \left(1 + \lambda\left(\frac{M(s)}{s^2}\right)\right) \frac{M(s)}{s^2} \diff s.
		\end{equation*}
		
	\end{proof}

	\subsection{Potential energy}
	
	In Newtonian physics the potential energy corresponding to a density $\rho(x)$ is given by
	
	\begin{equation*}
		\Epot^N(\rho) = - \frac{1}{8\pi} \int \left|\nabla U^N_\rho\right|^2 \diff x.
	\end{equation*}
	This is well defined for every $\rho\in L^{6/5}(\R^3)$; see Lemma \ref{lemma EpotN for rho in L6/5}. 
	
	In Mondian physics one can formally derive that the potential energy corresponding to a density $\rho(x)$ is given by
	\begin{equation*}
		\Epottilde^M (\rho) = \Epot^N(\rho) - \frac{1}{4\pi} \int\scriptQ\left( \left| \nabla U^N_\rho \right| \right) \diff x
	\end{equation*}
	where
	\begin{equation*}
		Q(v) := \int_0^v \lambda(w)w \diff w, \quad v\in[0,\infty),
	\end{equation*}
	\citep{2010MNRAS.403..886Milgrom}. But there is the problem that the integral $\int Q(\ldots) \diff x$ is in general not finite. If we choose for example a compactly supported density $\rho$ then
	\begin{equation*}
	\nabla U^N_\rho (x) = O (|x|^{-2})
	\end{equation*}
	for $|x|\rightarrow \infty$. If we choose further $\lambda(v) = 1/\sqrt v$, then
	\begin{equation*}
	\scriptQ(v) = \frac{2}{3} v^{3/2}, \quad v\geq 0,
	\end{equation*}
	and
	\begin{equation*}
	\scriptQ(\nabla U^N_\rho (x)) = O (|x|^{-3})
	\end{equation*}
	and this is not integrable. Nevertheless, we can study the difference between the potential energies of two densities $\rho$ and $\bar\rho$. If $\rho,\bar\rho\in L^1\cap L^{6/5}(\R^3)$ have the same mass and decay sufficiently fast at infinity, then	
	\begin{align*}
		\tilde E_{pot}(\rho) - \tilde E_{pot}(\bar\rho) = & - \frac{1}{8\pi} \int \left( \left| \nabla U^N_\rho \right|^{2} - \left| \nabla U^N_{\bar\rho} \right|^{2} \right) \diff x \\
		& - \frac{1}{4\pi} \int \left( \scriptQ\left( \left| \nabla U^N_\rho \right| \right) - \scriptQ\left( \left| \nabla U^N_{\bar\rho} \right| \right) \right) \diff x
	\end{align*}
	is finite -- provided $\lambda$ is sufficiently well behaved (Lemma \ref{lemma Mondian potential energy is finite} in the appendix). However, in this paper we are always interested in `the' Mondian potential energy of one specific density $\rho$. Since it only makes sense to study differences of potential energies, we fix $\bar\rho$ as a reference density. Then the (finite) term $\int|\nabla U^N_{\bar\rho}|^2\diff x$ above adds only a negligible constant. Hence we drop it and work in the sequential with the following definition:
	
	\begin{defn}[Mondian potential energy]
		Assume that \ref{lambda bounded from above} holds. Fix a reference density $\bar\rho\in C_c(\R^3)$, $\geq 0$, spherically symmetric with mass $M = \|\bar\rho\|_1 > 0$. Then for every density $\rho\in L^1\cap L^{6/5}(\R^3)$, $\geq 0$, that has the same mass $M$ as $\bar\rho$ and that is either compactly supported or decays sufficiently fast at infinity, the Mondian potential energy of $\rho$ (w.r.t the reference density $\bar\rho$) is finite; it is defined as
		\begin{equation*}
			\Epot^M(\rho) := \Epot^N(\rho) + \Epot^Q(\rho)
		\end{equation*}
		where
		\begin{align*}
			\Epot^N(\rho) & := - \frac{1}{8\pi} \int \left|\nabla U^N_\rho\right|^2 \diff x,\\
			\Epot^Q(\rho) & := - \frac{1}{4\pi} \int \left( \scriptQ\left( \left| \nabla U^N_{\rho} \right| \right) - \scriptQ\left( \left| \nabla \bar U^N \right| \right) \right)\diff x.
		\end{align*}
		We use the notation
		\begin{equation*}
			\nabla \bar U^N := \nabla U^N_{\bar\rho}.
		\end{equation*}
	\end{defn}

	When we talk about the Mondian potential energy, we refer to $\Epot^N$ as its 'Newtonian' part and to $\Epot^Q$ as its 'Mondian' part. $\Epot^Q$ is the part of the Mondian potential energy that makes Mondian physics different from Newtonian physics.

	We give a proof that $\Epot^M(\rho)$ is indeed finite, provided $\rho$ is compactly supported, in the appendix (Lemma \ref{lemma Mondian potential energy is finite}). We do not examine any further how fast $\rho$ needs to decay at infinity to have finite Mondian potential energy since we do not need this information in this paper.
	
	Last in this section let us collect everything we need to know about the regularity of the function $\scriptQ$ appearing in the definition of the Mondian potential energy.
	
	\begin{lem}[$\scriptQ$ continuous] \label{lemma Q continuous}
		Assume that \ref{lambda bounded from below} and \ref{lambda bounded from above} hold.
		Then $\scriptQ\in C([0,\infty))$ and monotonic increasing. If $u \geq v\geq 0$, it holds that
		\begin{equation*}
		\scriptQ(u)  \leq \frac{2\Lambda_2}{3} u^{3/2}
		\end{equation*}
		and
		\begin{equation*}
		\scriptQ(u) - \scriptQ(v) \leq \frac{2\Lambda_2}{3} (u^{3/2} - v^{3/2} ).
		\end{equation*}
		If $u\geq v\geq 0$ are small, it holds additionally that
		\begin{equation*}
		\scriptQ(u)  \geq \frac{2\Lambda_1}{3} u^{3/2}
		\end{equation*}
		and
		\begin{equation*}
		\scriptQ(u) - \scriptQ(v) \geq \frac{2\Lambda_1}{3} (u^{3/2} - v^{3/2} ).
		\end{equation*}
	\end{lem}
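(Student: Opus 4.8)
The plan is to exploit directly the definition $\scriptQ(v) = \int_0^v \lambda(w)\,w\,\diff w$ together with the pointwise two-sided control of $\lambda$ furnished by \ref{lambda bounded from below} and \ref{lambda bounded from above}; no deep machinery is needed, and the whole proof reduces to computing elementary integrals of $\sqrt{w}$.

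First I would establish that $\scriptQ$ is well defined, continuous and monotonic. By \ref{lambda bounded from above} the integrand obeys $0 \le \lambda(w)\,w \le \Lambda_2\sqrt{w}$ for every $w>0$, and $w \mapsto \Lambda_2\sqrt{w}$ is locally integrable on $[0,\infty)$. Hence $\scriptQ(v)$ is finite for each $v\ge 0$, and being the integral of a locally integrable function it is absolutely continuous, so in particular $\scriptQ \in C([0,\infty))$. Since the integrand $\lambda(w)\,w$ is non-negative, $v\mapsto \scriptQ(v)$ is monotonically increasing.

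Next come the two upper estimates. Integrating the pointwise bound $\lambda(w)\,w \le \Lambda_2\sqrt{w}$ over $[0,u]$ gives $\scriptQ(u) \le \Lambda_2 \int_0^u \sqrt{w}\,\diff w = \tfrac{2\Lambda_2}{3}u^{3/2}$; integrating instead over $[v,u]$ and using $\scriptQ(u)-\scriptQ(v) = \int_v^u \lambda(w)\,w\,\diff w$ yields $\scriptQ(u)-\scriptQ(v) \le \tfrac{2\Lambda_2}{3}(u^{3/2}-v^{3/2})$.

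Finally the two lower estimates. Assumption \ref{lambda bounded from below} only provides $\lambda(\sigma)\ge \Lambda_1/\sqrt{\sigma}$ for $\sigma>0$ small, so I would first fix a threshold $\sigma_0>0$ below which this inequality holds and interpret the hypothesis ``$u\ge v\ge 0$ small'' as $u\le\sigma_0$. On $(0,\sigma_0)$ the integrand then satisfies $\lambda(w)\,w\ge \Lambda_1\sqrt{w}$, and the same two integrations as before---over $[0,u]$ and over $[v,u]$---produce $\scriptQ(u)\ge\tfrac{2\Lambda_1}{3}u^{3/2}$ and $\scriptQ(u)-\scriptQ(v)\ge\tfrac{2\Lambda_1}{3}(u^{3/2}-v^{3/2})$. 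The only point requiring genuine care---and the closest thing here to an obstacle---is the bookkeeping of this smallness threshold coming from \ref{lambda bounded from below}, since that assumption is purely asymptotic near $0$; everything else is a routine evaluation of $\int \sqrt{w}\,\diff w$.
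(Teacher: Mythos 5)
Your proposal is correct and takes essentially the same route as the paper: both proofs reduce everything to integrating the two-sided pointwise bounds $\Lambda_1\sqrt{w}\leq\lambda(w)w\leq\Lambda_2\sqrt{w}$ (the lower one only for small arguments) over $[0,u]$ and $[v,u]$, the paper merely performing the substitution $w=\tilde w^{2/3}$ first so that the integrand is bounded by the constant $\Lambda_2$, which is the same elementary computation in disguise. Your explicit bookkeeping of the smallness threshold $\sigma_0$ for \ref{lambda bounded from below} is a welcome precision of what the paper leaves implicit in ``the other two estimates follow in the same way.''
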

	
	\begin{proof}
		Since $\lambda$ is measurable and \ref{lambda bounded from above} guarantees the integrability of $\lambda(w)w$, $\scriptQ$ is continuous. Since $\lambda\geq 0$, $\scriptQ$ is monotonic. Further with the transformation $w = \tilde w^{2/3}$ we have
		\begin{equation*}
		\scriptQ(u) = \frac 23 \int_0^{u^{3/2}} \lambda(\tilde w^{2/3}) \tilde w^{1/3} \diff \tilde w.
		\end{equation*}
		Thanks to \ref{lambda bounded from above}, we have
		\begin{equation*}
		\lambda(\tilde w^{2/3})\tilde w^{1/3} \leq \Lambda_2.
		\end{equation*}
		Hence 
		\begin{equation*}
			\scriptQ(u) \leq \frac{2\Lambda_2}{3} u ^{3/2}
		\end{equation*}
		and for $u \geq v \geq 0$
		\begin{equation}
		\scriptQ(u) - \scriptQ(v) \leq \frac 23 \Lambda_2 ( u^{3/2} - v^{3/2} ).
		\end{equation}
		The other two estimates follow in the same way using \ref{lambda bounded from below}.
	\end{proof}

	Further we will need the following first order Taylor expansion of $\scriptQ$:
	
	\begin{lem}[$\scriptQ$ differentiable] \label{lemma Q differentiable}
		Assume that \ref{lambda Prime bounded from below} holds. Then $\scriptQ\in C^1([0,\infty))$ and there is a $C>0$ such that for all $u,v\in\R^3$
		\begin{equation*}
		\left| \scriptQ(|u|) - \scriptQ(|v|) - \lambda(|v|)v\cdot (u-v) \right| \leq C|u-v|^{3/2}.
		\end{equation*}
	\end{lem}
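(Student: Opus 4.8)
The plan is to recognize the asserted inequality as the first-order Taylor remainder estimate for the radial function $F(u) := \scriptQ(|u|)$ on $\R^3$, and to control this remainder using the Hölder continuity of its gradient supplied by Lemma \ref{lemma lambda is Hoelder continuous}.

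First I would settle the regularity claim $\scriptQ\in C^1([0,\infty))$. By definition $\scriptQ(v)=\int_0^v\lambda(w)w\diff w$, and Lemma \ref{lemma lambda is Hoelder continuous} shows that $w\mapsto\lambda(w)w$ extends continuously to $w=0$ with value $0$; hence the fundamental theorem of calculus yields $\scriptQ\in C^1([0,\infty))$ with $\scriptQ'(v)=\lambda(v)v$. Consequently the composition $F(u)=\scriptQ(|u|)$ is continuously differentiable on $\R^3\setminus\{0\}$ with $\nabla F(u)=\scriptQ'(|u|)\tfrac{u}{|u|}=\lambda(|u|)u$. Since $\lambda(|u|)u\to 0$ as $u\to 0$ (again Lemma \ref{lemma lambda is Hoelder continuous}) and $F(u)=O(|u|^{3/2})$ near the origin, $F$ is in fact differentiable at $0$ with $\nabla F(0)=0$, so that $F\in C^1(\R^3)$ and the single formula $\nabla F(u)=\lambda(|u|)u$ holds on all of $\R^3$ under the convention $\lambda(0)\cdot 0=0$.

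Next I would write the remainder in integral form. Because $F\in C^1(\R^3)$, the fundamental theorem of calculus along the segment $t\mapsto v+t(u-v)$ gives
\begin{equation*}
\scriptQ(|u|)-\scriptQ(|v|)-\lambda(|v|)v\cdot(u-v)=\int_0^1\bigl(\nabla F(v+t(u-v))-\nabla F(v)\bigr)\cdot(u-v)\diff t,
\end{equation*}
where I have used $\nabla F(v)=\lambda(|v|)v$ to identify the linear term. This step remains valid even when the segment passes through the origin, precisely because $F$ is $C^1$ there.

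Finally I would insert the Hölder estimate $|\nabla F(a)-\nabla F(b)|=|\lambda(|a|)a-\lambda(|b|)b|\leq C|a-b|^{1/2}$ from Lemma \ref{lemma lambda is Hoelder continuous}. Applying it with $a=v+t(u-v)$ and $b=v$ bounds the integrand by $C\,(t|u-v|)^{1/2}\,|u-v|$, and the elementary integral $\int_0^1 t^{1/2}\diff t=\tfrac23$ then produces the claimed bound with constant $\tfrac{2}{3}C$. I do not expect a genuine obstacle here: the only point requiring care is the $C^1$-regularity of $F$ at the origin, which the sub-linear behaviour $\lambda(|u|)u\to 0$ guarantees; once that is in hand, the Hölder continuity of $\nabla F$ mechanically upgrades the first-order Taylor remainder from the usual $O(|u-v|^2)$ to the fractional order $O(|u-v|^{3/2})$.
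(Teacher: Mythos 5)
Your proof is correct, but it follows a genuinely different route from the paper's. You run a \emph{first-order} Taylor argument: after verifying that $F(u)=\scriptQ(|u|)$ is $C^1$ on all of $\R^3$ (with $\nabla F(0)=0$, since $\scriptQ(|u|)=O(|u|^{3/2})$ near the origin), you write the remainder as $\int_0^1\bigl(\nabla F(v+t(u-v))-\nabla F(v)\bigr)\cdot(u-v)\diff t$ and feed in the global $1/2$-H\"older continuity of $\nabla F(u)=\lambda(|u|)u$ from Lemma \ref{lemma lambda is Hoelder continuous}, getting the explicit constant $\tfrac 23 C$. The paper instead performs a \emph{second-order} expansion: for segments $w_t=v+t(u-v)$ avoiding the origin it sets $q(t)=\scriptQ(|w_t|)$, computes $q''(t)=\lambda'(|w_t|)[w_t\cdot(u-v)]^2/|w_t|+\lambda(|w_t|)|u-v|^2$, bounds $|q''(t)|\leq C|u-v|^2|w_t|^{-1/2}$ directly from \ref{lambda bounded from above} and \ref{lambda Prime bounded from below}, and then controls the singular integral $\int_0^1|w_t|^{-1/2}\diff t\leq 2\sqrt2\,|u-v|^{-1/2}$ via the worst-case configuration where $w_t$ vanishes at $t=1/2$. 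Your approach buys simplicity and robustness: it never differentiates $\lambda$ beyond what is already packaged in the H\"older lemma, needs no singular-weight estimate, and handles segments through the origin with no case distinction -- a point the paper's computation only covers for $w_t\neq 0$, implicitly leaving the degenerate case to a continuity argument that your $C^1$-regularity observation renders unnecessary. The paper's route, in exchange, works straight from the structural hypotheses on $\lambda$ and $\lambda'$ and shows quantitatively where the exponent $3/2$ originates; but since Lemma \ref{lemma lambda is Hoelder continuous} is itself derived from the same assumption \ref{lambda Prime bounded from below}, nothing is lost by your delegation.
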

	
	\begin{proof}
		First observe that Lemma \ref{lemma lambda is Hoelder continuous} implies
		\begin{equation*}
			\scriptQ\in C^1([0,\infty)).
		\end{equation*}
		Now let $u,v\in\R^3$ such that
		\begin{equation*}
		w_t := v + t(u-v) \neq 0
		\end{equation*}
		for all $t\in[0,1]$. Set
		\begin{equation*}
		q(t) := \scriptQ(|w_t|).
		\end{equation*}
		Then
		\begin{equation*}
		q\in C^2([0,1])
		\end{equation*}
		with
		\begin{equation*}
		q'(t) = \scriptQ'(|w_t|)\frac{w_t}{|w_t|}\cdot(u-v) = \lambda(|w_t|)w_t\cdot(u-v)
		\end{equation*}
		and
		\begin{equation*}
		q''(t) = \lambda'(|w_t|)\frac{[w_t\cdot(u-v)]^2}{|w_t|} + \lambda(|w_t|)|u-v|^2.
		\end{equation*}
		We have
		\begin{align*}
		q(1)-q(0) & = \int_0^1 q'(s) \diff s = \int_0^1\left( q'(0) + \int_0^s q''(t) \diff t \right) \diff s \\
		& = q'(0) + \int_0^1 \int_t^1\diff s \,q''(t) \diff t \\
		& = q'(0) + \int_0^1 (1-t)q''(t) \diff t.
		\end{align*}
		Hence
		\begin{equation*}
		\scriptQ(|u|) - \scriptQ(|v|) - \lambda(|v|)v\cdot (u-v) = \int_0^1(1-t)q''(t) \diff t.
		\end{equation*}
		By \ref{lambda bounded from above} and \ref{lambda Prime bounded from below}
		\begin{equation*}
		|q''(t)| \leq C\frac{|u-v|^2}{|w_t|^{1/2}}.
		\end{equation*} 
		Hence
		\begin{equation*}
		\left| \int_0^1(1-t)q''(t) \diff t \right| \leq C|u-v|^2 \int_0^1 \frac{\diff t}{|w_t|^{1/2}}.
		\end{equation*}
		The integral on the right side becomes maximal if $w_t$ would be zero for $t=1/2$. Hence
		\begin{align*}
		\int_0^1\frac{\diff t}{|w_t|^{1/2}} &\leq \int_0^1\frac{\diff t}{\left|-\frac 12|u-v|+t|u-v|\right|^{1/2}}\\
		& = |u-v|^{-1/2} \int_0^1\frac{\diff t}{|t-1/2|^{1/2}} \\
		&= 2\sqrt 2 |u-v|^{-1/2}.
		\end{align*}
		Thus
		\begin{equation*}
		\left|\scriptQ(|u|) - \scriptQ(|v|) - \lambda(|v|)v\cdot (u-v)\right| \leq C|u-v|^{3/2}.
		\end{equation*}
	\end{proof}


	\section{Finding minimizers} \label{section finding minimizers}
	
	\subsection{Finding a minimizer in the fluid situation} \label{section existance minimizer pressure supported}
	
	\begin{generalassumptions}
		Throughout this section we assume that \ref{lambda bounded from below} and \ref{lambda bounded from above} hold.
	\end{generalassumptions}
	
	Recall the variational problem we want to solve to construct an equilibrium solution of the Euler equations \eqref{Euler equations}:
	
	We take an ansatz function $\Psi$ that satisfies the following assumptions
	
	\begin{assumptionsPsi}
		$\Psi \in C^1([0,\infty))$, $\Psi(0)=\Psi'(0)=0$ and it holds:
		\begin{enumerate}[label=($\Psi$\arabic*)]
			\item $\Psi$ is strictly convex,
			\item $\Psi(\rho)\geq C\rho^{1+1/n}$ for $\rho> 0$ large, where $0 < n < 3$.
		\end{enumerate}
	\end{assumptionsPsi}

	Let $M>0$ and fix a reference density $\bar\rho\in C_c(\R^3)$, $\geq 0$, spherically symmetric with $\|\bar\rho\|_1=M$. Let $\bar R>0$ be such that
	\begin{equation*}
	\supp \bar\rho = B_{\bar R}.
	\end{equation*}
	For a sufficiently regular density $\rho$ define the functional
	\begin{equation}
		\He(\rho) := \Epot^M(\rho) + \mathcal C(\rho) 
	\end{equation}
	where $\Epot^M(\rho)$ is the Mondian potential energy of $\rho$ w.r.t. the reference density $\bar\rho$ and
	\begin{equation*}
		\mathcal C(\rho) := \int \Psi (\rho) \diff x.
	\end{equation*}
	The subscript `E' in $\He$ indicates that we use this functional to construct a stable equilibrium solution of the Euler equations. We search a minimizer of $\He$ over the set
	\begin{equation*}
		\mathcal R_M := \left\{ \rho\in L^1(\R^3) \text{ sph. sym.}\middle| \rho\geq0, \,\|\rho\|_1 = M,\, |\Epot^Q(\rho)| + \mathcal C(\rho) < \infty \right\}.
	\end{equation*}	
	This problem we solve in the present section.

	The first step toward our goal is to find bounds for the potential energy. Here we use the spherical symmetry to get also good bounds for the Mondian part of the potential energy.
	
	\begin{lem} \label{lemma estimates for Epot}
		There are $C_0>0$ and $C_1=C_1(\bar R, \Lambda_2)>0$ such that for all $\rho\in\mathcal R_M$
		\begin{equation*}
		-\Epot^N(\rho) \leq C_0 \|\rho\|_{6/5}^2
		\end{equation*}
		and
		\begin{equation*}
		-\Epot^Q(\rho) \leq \frac{1}{4\pi} \int_{|x|\leq \bar R} \scriptQ\left(\left|\nabla U^N_\rho\right|\right) \diff x \leq - C_1 \Epot^N(\rho)^{3/4} \leq C_1 C_0 \|\rho\|_{6/5}^{3/2}.
		\end{equation*}
	\end{lem}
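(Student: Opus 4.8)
The plan is to establish the three displayed estimates essentially independently, using the Newtonian Sobolev-type bound of Lemma \ref{lemma Newtonian potential} for the Newtonian part and combining spherical symmetry with the growth bound on $\scriptQ$ from Lemma \ref{lemma Q continuous} for the Mondian part. Throughout I may assume $\rho\in L^1\cap L^{6/5}(\R^3)$, since if $\|\rho\|_{6/5}=\infty$ all estimates are trivial; note $6/5\in(1,3)$, which is exactly the range needed to invoke Lemmas \ref{lemma Newtons shell theorem} and \ref{lemma Newtonian potential}.

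For the first estimate I would write $-\Epot^N(\rho)=\frac{1}{8\pi}\|\nabla U^N_\rho\|_2^2$ and apply part b) of Lemma \ref{lemma Newtonian potential} with $p=6/5$ and $s=2$; these satisfy $1<p<3$, $s>3/2$ and $\tfrac{2}{3}+\tfrac{1}{p}=1+\tfrac{1}{s}$, so $\|\nabla U^N_\rho\|_2\le C\|\rho\|_{6/5}$ and hence $-\Epot^N(\rho)\le C_0\|\rho\|_{6/5}^2$ with $C_0=C^2/(8\pi)$.

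The first inequality of the Mondian chain is, to my mind, the crux, and it is exactly where spherical symmetry and the common mass are indispensable. Splitting the integral that defines $-\Epot^Q(\rho)$ over the regions $\{|x|\le\bar R\}$ and $\{|x|>\bar R\}$, I would bound the two pieces separately. On $B_{\bar R}$ I simply discard the non-negative term $\scriptQ(|\nabla\bar U^N|)$, leaving $\tfrac{1}{4\pi}\int_{|x|\le\bar R}\scriptQ(|\nabla U^N_\rho|)\diff x$. On the complement I use Newton's shell theorem (Lemma \ref{lemma Newtons shell theorem}): since $\supp\bar\rho=B_{\bar R}$ and $\|\bar\rho\|_1=\|\rho\|_1=M$, for $r=|x|>\bar R$ the mass of $\rho$ enclosed in $B_r$ is at most $M$, while that of $\bar\rho$ equals $M$, so $|\nabla U^N_\rho(x)|\le M/r^2=|\nabla\bar U^N(x)|$. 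As $\scriptQ$ is monotone increasing, the integrand is $\le 0$ there and that piece may be dropped. This yields $-\Epot^Q(\rho)\le\frac{1}{4\pi}\int_{|x|\le\bar R}\scriptQ(|\nabla U^N_\rho|)\diff x$ and, decisively, localises the otherwise possibly non-integrable Mondian term to the fixed ball $B_{\bar R}$.

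For the last two inequalities I would first invoke $\scriptQ(u)\le\tfrac{2\Lambda_2}{3}u^{3/2}$ (Lemma \ref{lemma Q continuous}), reducing matters to $\int_{B_{\bar R}}|\nabla U^N_\rho|^{3/2}\diff x$. On the finite-volume ball, Hölder's inequality with exponents $4/3$ and $4$ gives
\[
\int_{B_{\bar R}}|\nabla U^N_\rho|^{3/2}\diff x\le\Bigl(\int_{B_{\bar R}}|\nabla U^N_\rho|^2\diff x\Bigr)^{3/4}\bigl(\tfrac{4}{3}\pi\bar R^3\bigr)^{1/4}\le(8\pi)^{3/4}\bigl(-\Epot^N(\rho)\bigr)^{3/4}\bigl(\tfrac{4}{3}\pi\bar R^3\bigr)^{1/4},
\]
where the $L^2$-integral over $B_{\bar R}$ is bounded by the full one, $\int|\nabla U^N_\rho|^2\diff x=-8\pi\Epot^N(\rho)$. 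Collecting the constants into $C_1=C_1(\bar R,\Lambda_2)$ gives $\frac{1}{4\pi}\int_{B_{\bar R}}\scriptQ(|\nabla U^N_\rho|)\diff x\le C_1(-\Epot^N(\rho))^{3/4}$ (the quantity written as $-C_1\Epot^N(\rho)^{3/4}$, since $\Epot^N(\rho)\le 0$). The final bound then follows by inserting the first estimate $-\Epot^N(\rho)\le C_0\|\rho\|_{6/5}^2$ and raising it to the power $3/4$ (tracking the constant accordingly). I anticipate no real difficulty in the Sobolev and Hölder steps; the one genuinely load-bearing point is the shell-theorem comparison, which is where the same-mass hypothesis and spherical symmetry are needed to control the tail of $\scriptQ(|\nabla U^N_\rho|)$ at infinity.
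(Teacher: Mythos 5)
Your proposal is correct and takes essentially the same route as the paper: the same splitting of $-\Epot^Q(\rho)$ at radius $\bar R$ with the shell-theorem comparison $|\nabla U^N_\rho(x)|\le M/|x|^2=|\nabla\bar U^N(x)|$ for $|x|>\bar R$ to discard the exterior piece, followed by $\scriptQ(u)\le\tfrac{2\Lambda_2}{3}u^{3/2}$ and H\"older with exponents $4/3$ and $4$ on $B_{\bar R}$. The only cosmetic difference is in the Newtonian bound, where the paper applies the Hardy--Littlewood--Sobolev inequality to the double-integral representation from Lemma \ref{lemma EpotN for rho in L6/5}, while you invoke the equivalent gradient estimate of Lemma \ref{lemma Newtonian potential} b) with $p=6/5$, $s=2$.
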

	
	\begin{proof}
		The inequality for $\Epot^N$ is a direct consequence of the Hardy-Littlewood-Sobolev inequality \citep[Theorem 4.3]{2010LiebLoss} since for $\rho\in L^{6/5}(\R^3)$
		\begin{equation*}
		-\Epot (\rho) = \frac{1}{2} \iint \frac{\rho(x)\rho(y)}{|x-y|} \diff x \diff y;
		\end{equation*}
		see Lemma \ref{lemma EpotN for rho in L6/5}.
		So let us study the inequality for $\Epot^Q$: $\scriptQ$ is monotonic. Thanks to spherical symmetry we have further for $|x|\geq \bar R$
		\begin{equation*}
		|\nabla \bar U^N(x)| = \frac{M}{|x|^2} \geq |\nabla U^N_\rho(x)|.
		\end{equation*}
		Thus
		\begin{align*}
		-\Epot^Q(\rho) &= \frac{1}{4\pi} \int_{|x|\leq \bar R} \left(\scriptQ\left(\left|\nabla U^N_\rho\right|\right) - \scriptQ\left(\left|\nabla \bar  U^N\right|\right)\right) \diff x + \int_{|x|>\bar R} \ldots \diff x \\
		& \leq \frac{1}{4\pi} \int_{|x|\leq \bar R} \scriptQ\left(\left|\nabla U^N_\rho\right|\right) \diff x.
		\end{align*}
		Further H\"older and Lemma \ref{lemma Q continuous} imply
		\begin{align*}
		\frac{1}{4\pi} \int_{|x|\leq \bar R} \scriptQ\left(\left|\nabla U^N_\rho\right|\right) \diff x & \leq \frac{\Lambda_2}{6\pi} \left(\int_{|x|\leq \bar R}  |\nabla U^N_\rho|^{2}  \diff x \right)^{3/4} \left( \frac{4\pi}{3} \bar R^3 \right)^{1/4} \\
		& \leq - \Lambda_2 C(\bar R) \Epot^N(\rho)^{3/4}.
		\end{align*}
		Applying the inequality for $\Epot^N(\rho)$ closes the proof.
	\end{proof}

	Next we prove that $\He$ is bounded from below on $\mathcal R_M$ and several bounds along minimizing sequences.
	
	\begin{lem} \label{lemma Hr bounded from below and bounds along minimizing sequences}
		$\inf_{\mathcal R_M} \He > -\infty$ and along every minimizing sequence $(\rho_j)\subset \mathcal R_M$ of $\He$
		\begin{equation*}
		\|\rho_j\|_{6/5}, \, \|\rho_j\|_{1+1/n} \text{ and } \int\Psi(\rho_j)\diff x
		\end{equation*}
		remain bounded.
	\end{lem}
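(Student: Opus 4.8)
The plan is to combine the potential-energy bounds of Lemma~\ref{lemma estimates for Epot} with the coercivity of $\Psi$ from \ref{Psi of rho is large for rho large}, exploiting the fixed mass $M$ via interpolation. The guiding observation is that $\Epot^N(\rho)$ and $\Epot^Q(\rho)$ are both non-positive and, by Lemma~\ref{lemma estimates for Epot}, controlled by $\|\rho\|_{6/5}^2$ and $\|\rho\|_{6/5}^{3/2}$, whereas $\mathcal C(\rho)=\int\Psi(\rho)\,\diff x$ controls the higher norm $\|\rho\|_q$ with $q=1+1/n$. Since $n<3$ forces $q>4/3>6/5$, one may interpolate $\|\rho\|_{6/5}$ between the fixed $\|\rho\|_1=M$ and $\|\rho\|_q$, and the exponents conspire so that $\|\rho\|_{6/5}^2$ depends on $\mathcal C(\rho)$ only \emph{sublinearly}. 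This sublinearity is exactly what yields both the lower bound and the claimed coercivity.

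First I would convert \ref{Psi of rho is large for rho large} into an $L^q$-bound. Fixing a threshold $A$ above which $\Psi(\rho)\geq C\rho^{q}$ and splitting $\{\rho\le A\}$ from $\{\rho>A\}$, the small part obeys $\int_{\rho\le A}\rho^{q}\,\diff x\le A^{1/n}\|\rho\|_1=A^{1/n}M$ and the large part obeys $\int_{\rho>A}\rho^{q}\,\diff x\le C^{-1}\mathcal C(\rho)$, so that
\[
\|\rho\|_q^{q}\ \le\ A^{1/n}M+C^{-1}\mathcal C(\rho).
\]
Next, the interpolation inequality $\|\rho\|_{6/5}\le\|\rho\|_1^{\theta}\|\rho\|_q^{1-\theta}$ with $\frac{5}{6}=\theta+\frac{1-\theta}{q}$ gives, after the elementary computation $\theta=(5-n)/6$ and $1-\theta=(n+1)/6$, the estimate $\|\rho\|_{6/5}^2\le M^{(5-n)/3}\|\rho\|_q^{(n+1)/3}$. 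The decisive simplification is the identity $(n+1)/(3q)=n/3$, whence
\[
\|\rho\|_{6/5}^2\ \le\ M^{(5-n)/3}\bigl(A^{1/n}M+C^{-1}\mathcal C(\rho)\bigr)^{n/3},\qquad
\|\rho\|_{6/5}^{3/2}\ \le\ M^{(5-n)/4}\bigl(A^{1/n}M+C^{-1}\mathcal C(\rho)\bigr)^{n/4}.
\]
Because $0<n<3$, both exponents $n/3$ and $n/4$ are strictly below $1$.

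With these in hand I would write $\He(\rho)=\Epot^N(\rho)+\Epot^Q(\rho)+\mathcal C(\rho)\ge\mathcal C(\rho)-C_0\|\rho\|_{6/5}^2-C_1C_0\|\rho\|_{6/5}^{3/2}$ and substitute the bounds above. Since the two negative contributions grow like $\mathcal C(\rho)^{n/3}$ and $\mathcal C(\rho)^{n/4}$ with exponents less than $1$ (using subadditivity of the concave power to discard the additive constant $A^{1/n}M$), Young's inequality absorbs them into $\tfrac12\mathcal C(\rho)$ up to an additive constant depending only on $M,\bar R,\Lambda_2,n,A$, giving
\[
\He(\rho)\ \ge\ \tfrac12\,\mathcal C(\rho)-C\ \ge\ -C.
\]
This establishes $\inf_{\mathcal R_M}\He>-\infty$. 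Along a minimizing sequence $(\rho_j)$ the numbers $\He(\rho_j)$ are bounded above, so the inequality forces $\int\Psi(\rho_j)\,\diff x=\mathcal C(\rho_j)\le 2(\He(\rho_j)+C)$ to stay bounded; feeding this back into the $L^q$- and interpolation estimates yields boundedness of $\|\rho_j\|_{1+1/n}$ and $\|\rho_j\|_{6/5}$ as well.

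The main obstacle, and the only place demanding care, is the exponent bookkeeping: everything hinges on the identity $(n+1)/(3q)=n/3$ together with $n<3$, which is what makes the potential energy sublinear in $\mathcal C$ and hence dominated by it. If $n\ge 3$ the Newtonian term $\|\rho\|_{6/5}^2$ would grow at least linearly in $\mathcal C$ and the absorption would break down, so this is precisely the step where $0<n<3$ is indispensable. I would also verify that the interpolation exponent $\theta=(5-n)/6$ lies in $[0,1]$, which holds for $0<n<3$, so that Hölder's inequality legitimately applies.
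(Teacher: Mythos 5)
Your proof is correct and takes essentially the same route as the paper's: splitting the integral of $\Psi(\rho)$ at a threshold to control $\|\rho\|_{1+1/n}^{1+1/n}$ by $M+C\,\mathcal C(\rho)$, interpolating with exponent $\theta=(5-n)/6$ to get $\|\rho\|_{6/5}^{\beta}\leq C\bigl(1+\mathcal C(\rho)\bigr)^{n\beta/6}$ for $\beta\in\{2,3/2\}$, and exploiting $n\beta/6<1$ to make the potential-energy terms sublinear in $\mathcal C(\rho)$. The only differences are cosmetic: you split at a general threshold $A$ where the paper uses $1$, and you absorb the sublinear terms via Young's inequality into $\tfrac12\mathcal C(\rho)$ where the paper bounds $\inf_{\mathcal R_M}\He$ by $\min_{a\geq 0}\bigl(a-C-Ca^{n/3}-Ca^{n/4}\bigr)$ — both yield the same lower bound and the same boundedness conclusions along minimizing sequences.
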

	
	\begin{proof}
		Let $\rho\in \mathcal R_M$. From \ref{Psi of rho is large for rho large} and the definition of $\mathcal R_M$ we can deduce
		\begin{align} \label{equ bounds along minimizing sequences 1+1/n norm}
		\|\rho\|_{1+1/n}^{1+1/n} &= \int_{\{\rho\leq 1\}} \rho^{1+1/n} \diff x + \int_{\{\rho > 1 \}} \rho^{1+1/n} \diff x \nonumber\\
		& \leq M + C\int\Psi(\rho) \diff x.
		\end{align}
		Let either $\beta=2$ or $\beta=3/2$. Since for $\alpha = (n+1)/6$
		\begin{equation*}
		\frac{1-\alpha}{1} + \frac{\alpha}{1+1/n} = \frac{5-n}{6} + \frac n6 = \frac 56,
		\end{equation*}
		the interpolation formula yields
		\begin{align} \label{equ bounds along minimizing sequences 6/5 norm}
		\|\rho\|_{6/5}^\beta &\leq \|\rho\|_1^{(1-\alpha)\beta} \|\rho\|_{1+1/n}^{\alpha\beta} \nonumber\\
		& \leq C\left( 1 + \int \Psi(\rho)\diff x\right)^{n\beta/6} \nonumber\\
		& \leq C + C\left(\int\Psi(\rho)\diff x\right)^{n\beta/6};
		\end{align}
		in the last inequality we used that $n\beta/6 < 1$. Using now Lemma \ref{lemma estimates for Epot}, we can estimate $\He$ from below:
		\begin{align} \label{equ bounds along minimizing sequences int Psi(rho)}
		\He(\rho) & \geq \int\Psi(\rho)\diff x - C\|\rho\|_{6/5}^2 - C\|\rho\|_{6/5}^{3/2} \nonumber \\
		& \geq \int\Psi(\rho)\diff x - C - C\left( \int\Psi(\rho) \diff x \right)^{n/3} - C\left( \int\Psi(\rho) \diff x \right)^{n/4}.
		\end{align}
		Since $n<3$, this implies that $\He$ is bounded from below on $\mathcal R_M$:
		\begin{equation*}
		\inf_{\mathcal R_M} \He \geq \min_{a\geq 0} (a - C - Ca^{n/3} - Ca^{n/4}) > -\infty.
		\end{equation*}
		Let now $(\rho_j)\subset \mathcal R_M$ be a minimizing sequence of $\He$. Then \eqref{equ bounds along minimizing sequences int Psi(rho)} implies that $\int\Psi(\rho_j)\diff x$ is bounded. \eqref{equ bounds along minimizing sequences 1+1/n norm} and \eqref{equ bounds along minimizing sequences 6/5 norm} give the bounds for $\|\rho_j\|_{1+1/n}$ and $\|\rho_j\|_{6/5}$.
	\end{proof}

	Lemma \ref{lemma Hr bounded from below and bounds along minimizing sequences} enables us to extract from minimizing sequences $(\rho_j)$ a subsequence that converges weakly to a $\rho_0\in L^{1+1/n}$. This $\rho_0$ is our candidate for the minimizer of $\He$. To prove that it is a minimizer we have in particular to show that
	\begin{equation*}
	\Epot^M(\rho_j) \rightarrow \Epot^M(\rho_0) \quad \text{for } j \rightarrow\infty.
	\end{equation*}
	For this the following compactness result from \cite{2007Rein} is helpful.
	
	\begin{lem} \label{lemma compactness of Laplace inverse}
		Let $(\rho_j)\subset L^{1+1/n}(\R^3)$, $\geq 0$ be such that
		\begin{align*}
		& \rho_j \rightharpoonup \rho_0 \quad \text{weakly in } L^{1+1/n}(\R^3), \\
		& \forall \epsilon>0\,\exists R>0: \quad \limsup_{j\rightarrow\infty} \int_{|x|\geq R} \rho_j \diff x < \epsilon.
		\end{align*}
		Then $\nabla U^N_{\rho_j} \rightarrow \nabla U^N_{\rho_0}$ strongly in $L^2$.
	\end{lem}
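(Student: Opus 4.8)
The plan is to reduce the claim to showing that the quadratic form associated with the Newtonian potential is continuous along the sequence $\sigma_j := \rho_j - \rho_0$. By the linearity of $\rho\mapsto U^N_\rho$ together with Lemma \ref{lemma EpotN for rho in L6/5}, one has for densities in $L^{6/5}(\R^3)$
\begin{equation*}
	\left\|\nabla U^N_{\rho_j} - \nabla U^N_{\rho_0}\right\|_2^2 = \left\|\nabla U^N_{\sigma_j}\right\|_2^2 = 4\pi \iint \frac{\sigma_j(x)\sigma_j(y)}{|x-y|}\diff x\diff y,
\end{equation*}
so it suffices to prove that the right-hand side tends to $0$. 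First I would collect the uniform bounds: weak convergence gives $\sup_j\|\rho_j\|_{1+1/n}<\infty$ by Banach--Steinhaus, and combining this with the tightness hypothesis (H\"older on a fixed ball for the bulk and the tail estimate for $|x|\geq R$) yields $\sup_j\|\rho_j\|_1<\infty$; interpolation then bounds $\sup_j\|\rho_j\|_{6/5}$. The same bounds and tightness pass to $\rho_0$, hence to $\sigma_j$, and $\sigma_j\rightharpoonup 0$ weakly in $L^{1+1/n}(\R^3)$.

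The core of the argument is a three-part splitting of the kernel. For $0<\delta<R$ write $|x-y|^{-1} = k_\delta(x-y) + k_{mid}(x-y) + k_R(x-y)$, where $k_\delta = |z|^{-1}\mathbf 1_{|z|<\delta}$, $k_{mid} = |z|^{-1}\mathbf 1_{\delta\leq|z|\leq R}$ and $k_R = |z|^{-1}\mathbf 1_{|z|>R}$. The near part is controlled uniformly in $j$ by Young's inequality: since $k_\delta\in L^{(n+1)/2}$ with $(n+1)/2<3$ (because $n<3$) and $\|k_\delta\|_{(n+1)/2}\to0$ as $\delta\to0$, H\"older and Young give $\iint k_\delta(x-y)|\sigma_j(x)||\sigma_j(y)|\diff x\diff y \leq \|\sigma_j\|_{1+1/n}^2\,\|k_\delta\|_{(n+1)/2}$, which is $\leq C\|k_\delta\|_{(n+1)/2}\to 0$. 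The far part is bounded uniformly in $j$ by the $L^1$ estimate $\iint k_R(x-y)|\sigma_j(x)||\sigma_j(y)|\diff x\diff y\leq R^{-1}\|\sigma_j\|_1^2\leq C/R$. Thus, given $\epsilon>0$, I fix $\delta$ small and $R$ large so that the near and far contributions stay below $\epsilon$ for all $j$.

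The \textbf{main obstacle} is the middle term $\int \sigma_j\,(k_{mid}*\sigma_j)\diff x$ for these now fixed $\delta,R$, where the weak convergence must be upgraded to genuine compactness. Set $h_j := k_{mid}*\sigma_j$. Since $k_{mid}$ is bounded with compact support, $y\mapsto k_{mid}(x-y)\in L^{n+1}(\R^3)$ for each $x$, so $\sigma_j\rightharpoonup0$ forces $h_j(x)\to0$ pointwise, while $\|h_j\|_\infty\leq\|k_{mid}\|_{n+1}\sup_j\|\sigma_j\|_{1+1/n}$ is uniformly bounded. I would then split $\int\sigma_j h_j$ into integrals over $|x|\leq R''$ and $|x|>R''$: on the fixed ball the uniform bound and pointwise convergence give $h_j\to0$ in $L^{n+1}(\{|x|\leq R''\})$ by dominated convergence, so H\"older yields $\int_{|x|\leq R''}\sigma_j h_j\to0$; on $|x|>R''$ the tightness of $\sigma_j$ together with $\|h_j\|_\infty\leq C$ makes $\int_{|x|>R''}|\sigma_j h_j|\diff x$ as small as desired. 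Hence the middle term tends to $0$ for fixed $\delta,R$, and letting first $j\to\infty$ and then $\epsilon\to0$ shows that the full double integral vanishes, which is the assertion. The delicate point throughout is that tightness is exactly what converts the weak (pointwise) vanishing of $h_j$ into quantitative smallness on all of $\R^3$; without it, mass could escape to infinity and the self-interaction energy would fail to be continuous.
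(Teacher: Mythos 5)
The paper does not prove this lemma itself --- it simply cites Lemma~2.5 of \cite{2007Rein} --- and your proposal reconstructs essentially that standard argument: reduce to showing $\iint \sigma_j(x)\sigma_j(y)|x-y|^{-1}\diff x \diff y \rightarrow 0$ for $\sigma_j=\rho_j-\rho_0$ (legitimate via Lemma \ref{lemma EpotN for rho in L6/5}, since $\sigma_j\in L^1\cap L^{1+1/n}\subset L^{6/5}$), split the kernel into near, middle and far parts, control the near and far parts uniformly in $j$, and upgrade weak convergence to genuine smallness on the middle part. Your treatment of the middle term is complete and correct: pointwise vanishing of $h_j=k_{mid}*\sigma_j$ from weak convergence tested against the translates $y\mapsto k_{mid}(x-y)\in L^{n+1}(\R^3)$, the uniform bound $\|h_j\|_\infty\leq \|k_{mid}\|_{n+1}\sup_j\|\sigma_j\|_{1+1/n}$, dominated convergence on a fixed ball, and tightness for the exterior --- this is exactly where the concentration hypothesis enters, as you correctly emphasize.

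There is, however, one step that fails in part of the parameter range for which the lemma is used. Your near-part estimate $\iint k_\delta(x-y)|\sigma_j(x)||\sigma_j(y)|\diff x\diff y \leq \|\sigma_j\|_{1+1/n}^2\,\|k_\delta\|_{(n+1)/2}$ is Young's inequality with the exponent triple $\left(\frac{n+1}{n},\frac{n+1}{2},\frac{n+1}{n}\right)$, which requires $\frac{n+1}{2}\geq 1$, i.e.\ $n\geq 1$. The fluid sections, where this lemma is applied, allow the full range $0<n<3$; for $0<n<1$ the space $L^{(n+1)/2}$ is only quasi-normed and the claimed inequality is actually false: take $\sigma = 1_{B_R}$ with $\delta$ fixed and $R\rightarrow\infty$; then the left-hand side grows like $R^3\delta^2$, while $\|\sigma\|_{1+1/n}^2 = cR^{6n/(n+1)}$ and $6n/(n+1)<3$ precisely when $n<1$. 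The repair is cheap and uses bounds you have already established: since $(\sigma_j)$ is bounded in $L^1\cap L^{1+1/n}$ and $1+1/n>4/3$ for every $n<3$, interpolation bounds $\sup_j\|\sigma_j\|_{4/3}$, and Young with the legitimate triple $\left(\frac{4}{3},2,\frac{4}{3}\right)$ gives
\begin{equation*}
\iint k_\delta(x-y)|\sigma_j(x)||\sigma_j(y)|\diff x\diff y \leq \|\sigma_j\|_{4/3}^2\,\|k_\delta\|_2 \leq C\,\delta^{1/2},
\end{equation*}
uniformly in $j$. (Note that the failure mode of your inequality requires $\|\sigma\|_1\rightarrow\infty$, which the uniform $L^1$ bound excludes along the actual sequence --- this is why the conclusion survives even though the step as written does not.) For the collisionless application, where $n=k+3/2>3/2$, your original exponents are fine. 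A last pedantic point: the tightness hypothesis constrains only large $j$, so the uniform $L^1$ bound, and with it $\sigma_j\in L^{6/5}$, should be asserted for $j$ sufficiently large; this suffices for the asserted convergence.
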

	
	\begin{proof}
		See Lemma 2.5. in \cite{2007Rein}.
	\end{proof}

	Thus in order to pass with $\Epot^M$ to the limit we have to show that the mass along a minimizing sequence remains concentrated. In the Mondian situation we can prove this  more direct than in the Newtonian situation. Far away from the centre of mass Mondian forces are much higher than their Newtonian counterpart, hence they should also confine mass much more efficiently. And since $\Epot^Q$ is the term that makes the difference between Newtonian and Mondian physics, this effect should be hidden there. This turns out to be true as the following lemma shows.
	
	\begin{lem} \label{lemma masses remain concentrated along minimizing sequences}
		If $(\rho_j)\subset \mathcal R_M$ is a minimizing sequence of $\He$, then there is a $C>0$ such that for all $R>\bar R$ large enough and $j\in\N$
		\begin{equation}
		\int_{|x|>R} \rho_j \diff x \leq C\left(\log R - \log \bar R\right)^{-2/3}
		\end{equation}
	\end{lem}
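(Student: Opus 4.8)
The plan is to isolate a genuinely Mondian, manifestly non-negative quantity that is bounded along the minimizing sequence and that ``sees'' the escaping mass. For $\rho_j\in\mathcal R_M$ write $m_j(r):=\int_{B_r}\rho_j\diff y\le M$ for the enclosed mass, so the tail in question is $\int_{|x|>R}\rho_j\diff x=M-m_j(R)$. By Newton's shell theorem (Lemma \ref{lemma Newtons shell theorem}) we have, for $r=|x|>\bar R$, that $|\nabla\bar U^N(x)|=M/r^2$ (all of $\bar\rho$ sits inside $B_{\bar R}$) and $|\nabla U^N_{\rho_j}(x)|=m_j(r)/r^2$. The object I would track is
\begin{equation*}
	B_j:=\frac{1}{4\pi}\int_{|x|>\bar R}\left(\scriptQ\!\left(|\nabla\bar U^N|\right)-\scriptQ\!\left(|\nabla U^N_{\rho_j}|\right)\right)\diff x,
\end{equation*}
which is $\ge 0$ because $m_j(r)\le M$ and $\scriptQ$ is monotonic (Lemma \ref{lemma Q continuous}); note $B_j$ has no Newtonian analogue, since $\scriptQ$ is absent there.

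First I would show $B_j$ is bounded uniformly in $j$. Because $(\rho_j)$ is a minimizing sequence, $\He(\rho_j)=\Epot^M(\rho_j)+\mathcal C(\rho_j)$ is bounded, and $\mathcal C(\rho_j)=\int\Psi(\rho_j)\diff x\ge 0$ (as $\Psi$ is convex with $\Psi(0)=\Psi'(0)=0$), so $\Epot^M(\rho_j)\le C$. Lemma \ref{lemma estimates for Epot} and the boundedness of $\|\rho_j\|_{6/5}$ from Lemma \ref{lemma Hr bounded from below and bounds along minimizing sequences} give $\Epot^N(\rho_j)\ge -C_0\|\rho_j\|_{6/5}^2\ge -C$, whence $\Epot^Q(\rho_j)=\Epot^M(\rho_j)-\Epot^N(\rho_j)$ is bounded above; the bound $-\Epot^Q(\rho_j)\le C\|\rho_j\|_{6/5}^{3/2}$ of Lemma \ref{lemma estimates for Epot} bounds it below, so $\Epot^Q(\rho_j)$ is bounded. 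Splitting the defining integral of $-\Epot^Q(\rho_j)$ at $|x|=\bar R$ gives $-\Epot^Q(\rho_j)=A_j-B_j$ with $A_j:=\frac{1}{4\pi}\int_{|x|\le\bar R}(\scriptQ(|\nabla U^N_{\rho_j}|)-\scriptQ(|\nabla\bar U^N|))\diff x$; since $A_j$ is bounded above by $\frac{1}{4\pi}\int_{|x|\le\bar R}\scriptQ(|\nabla U^N_{\rho_j}|)\diff x\le C\|\rho_j\|_{6/5}^{3/2}$ and below by $-\frac{1}{4\pi}\int_{|x|\le\bar R}\scriptQ(|\nabla\bar U^N|)\diff x$, the identity $B_j=A_j+\Epot^Q(\rho_j)$ shows $B_j$ is bounded uniformly in $j$.

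Next I would convert this uniform bound into decay. Since $M/r^2\to 0$, there is $R_0\ge\bar R$, depending only on $M$ and $\Lambda_1$, such that for $r\ge R_0$ the arguments are small and the lower estimate of Lemma \ref{lemma Q continuous} applies, namely $\scriptQ(M/r^2)-\scriptQ(m_j(r)/r^2)\ge\frac{2\Lambda_1}{3}r^{-3}(M^{3/2}-m_j(r)^{3/2})$. Restricting $B_j$ to the annulus $R_0<|x|<R$ and passing to radial coordinates,
\begin{equation*}
	B_j\ \ge\ \frac{2\Lambda_1}{3}\int_{R_0}^R\frac{M^{3/2}-m_j(r)^{3/2}}{r}\diff r\ \ge\ \frac{2\Lambda_1}{3}\bigl(M^{3/2}-m_j(R)^{3/2}\bigr)\log\frac{R}{R_0},
\end{equation*}
the last step using that $m_j$ is non-decreasing. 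With $B_j\le C$ this yields $M^{3/2}-m_j(R)^{3/2}\le C(\log(R/R_0))^{-1}$, and the superadditivity $a^{3/2}+b^{3/2}\le(a+b)^{3/2}$ (valid since $3/2>1$) gives $M^{3/2}-m_j(R)^{3/2}\ge(M-m_j(R))^{3/2}$, so that $\int_{|x|>R}\rho_j\diff x\le C(\log(R/R_0))^{-2/3}$; replacing $\log(R/R_0)$ by $\log R-\log\bar R$ for $R$ large (enlarging $C$) gives the stated bound.

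The genuinely Mondian core — and the step I expect to require the most care to set up cleanly — is the passage producing the logarithmic factor: it is precisely that the radial measure $\diff r/r$ has infinite mass on $[R_0,\infty)$ that forces $m_j(r)\to M$ and supplies the rate. This is the feature absent in the Newtonian case, where $|\nabla U^N|\sim r^{-2}$ makes the analogous integral converge and concentration must be argued by less direct splitting techniques. The two minor points to handle are that the smallness threshold in \ref{lambda bounded from below} is uniform in $j$ (so $R_0$ is independent of $j$) and the harmless comparison of $\log(R/R_0)$ with $\log(R/\bar R)$ for large $R$.
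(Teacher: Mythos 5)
Your proof is correct and follows essentially the same route as the paper's: you bound the identical Mondian tail quantity (the part of $\Epot^Q$ over $\{|x|>\bar R\}$) uniformly along the minimizing sequence via Lemma \ref{lemma estimates for Epot} and Lemma \ref{lemma Hr bounded from below and bounds along minimizing sequences}, then apply the lower estimate of Lemma \ref{lemma Q continuous} with Newton's shell theorem in radial coordinates to produce the $\int \diff r/r$ logarithm, and finish with the superadditivity bound $(M-m_j(R))^{3/2}\leq M^{3/2}-m_j(R)^{3/2}$ exactly as in the paper (merely applied after, rather than before, restricting to the annulus). Your explicit threshold $R_0$ and the comparison of $\log(R/R_0)$ with $\log R-\log\bar R$ just make precise the paper's phrase ``for $R>\bar R$ sufficiently large''.
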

	
	\begin{proof}
		For $r\geq 0$ let
		\begin{equation*}
		M(\rho_j,r) := \int_{|x|\leq r} \rho_j \diff x.
		\end{equation*}
		For $R>\bar R$ sufficiently large we can use Lemma \ref{lemma Q continuous} and get
		\begin{equation*}
		-\frac{1}{4\pi} \int_{|x|\geq \bar R} \left(\scriptQ\left(\left|\nabla U^N_{\rho_j}\right|\right) - \scriptQ\left(\left|\nabla \bar  U^N\right|\right)\right) \diff x \geq \frac{\Lambda_1}{6\pi}\int_{|x|\geq \bar R} \left( |\nabla \bar U^N|^{3/2} - |\nabla U^N_{\rho_j}|^{3/2} \right) \diff x.
		\end{equation*}
		Introducing polar coordinates and using that $M(\rho_j,r)$ takes values between $0$ and $M$ and is monotonic increasing, we can estimate further
		\begin{align*}
		-\frac{1}{4\pi} \int_{|x|\geq \bar R} \left(\scriptQ\left(\left|\nabla U^N_{\rho_j}\right|\right) - \scriptQ\left(\left|\nabla \bar  U^N\right|\right)\right) \diff x 
		& \geq \frac{2\Lambda_1}{3} \int_{\bar R}^\infty \frac{M^{3/2}-M(\rho_j,r)^{3/2}}{r} \diff r \\
		& \geq \frac{2\Lambda_1}{3} \int_{\bar R}^\infty \frac{\left(M-M(\rho_j,r)\right)^{3/2}}{r} \diff r \\
		& \geq \frac{2\Lambda_1}{3} \left(M-M(\rho_j,R)\right)^{3/2} \int_{\bar R}^R \frac{\diff r}{r}.
		\end{align*}
		Additionally, Lemma \ref{lemma estimates for Epot} and \ref{lemma Hr bounded from below and bounds along minimizing sequences} imply that $\mathcal C(\rho_j) + \Epot^N(\rho_j)$ is bounded. Hence $\Epot^Q(\rho_j)$ is bounded and
		\begin{align*}
		-\frac{1}{4\pi} \int_{|x|\geq \bar R} \left(\scriptQ\left(\left|\nabla U^N_{\rho_j}\right|\right) - \scriptQ\left(\left|\nabla \bar  U^N\right|\right)\right) \diff x  \leq \Epot^Q(\rho_j) + \frac{1}{4\pi} \int_{|x|< \bar R} \scriptQ\left(\left|\nabla U^N_{\rho_j}\right|\right) \diff x \leq C
		\end{align*}
		independent of $j\in\N$. Now we have only to use that
		\begin{equation*}
		M-M(\rho_j,R) = \int_{|x|>R} \rho_j \diff x
		\end{equation*}
		and we can close the proof.
	\end{proof}

	Now we can prove that there are minimizers of $\He$ over $\mathcal R_M$. To prove later that these minimizers are also stable against small perturbations, it is important that we prove further several convergences along minimizing sequences.
	
	\begin{thm} \label{thm existence of minimizers of Hr}
		Let $(\rho_j)\subset \mathcal R_M$ be a minimizing sequence of $\He$ Then there exists a subsequence, again denoted by $(\rho_j)$, such that
		\begin{equation*}
		\rho_j \rightharpoonup \rho_0 \quad \text{weakly in } L^{1+1/n}(\R^3).
		\end{equation*}
		$\rho_0\in\mathcal{R}_M$ is a minimizer of $\He$ and
		\begin{equation*}
		\nabla U^N_{\rho_j} \rightarrow \nabla U^N_{\rho_0} \quad \text{strongly in } L^2(\R^3).
		\end{equation*}
		If additionally $\rho_0$ is compactly supported,
		\begin{equation*}
		\nabla U^N_{\rho_j} - \nabla U^N_{\rho_0} \rightarrow 0 \quad \text{strongly in } L^{3/2}(\R^3).
		\end{equation*}
	\end{thm}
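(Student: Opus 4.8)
The plan is to run the direct method of the calculus of variations, feeding the a priori bounds and the two concentration/compactness lemmas proved above into a lower-semicontinuity argument. First I would use Lemma~\ref{lemma Hr bounded from below and bounds along minimizing sequences}: along the minimizing sequence $(\rho_j)$ the norms $\|\rho_j\|_{1+1/n}$ stay bounded, and since $1<1+1/n<\infty$ the space $L^{1+1/n}(\R^3)$ is reflexive, so after passing to a subsequence $\rho_j\rightharpoonup\rho_0$ weakly in $L^{1+1/n}(\R^3)$; non-negativity and spherical symmetry survive the weak limit. Lemma~\ref{lemma masses remain concentrated along minimizing sequences} supplies the tail bound $\int_{|x|>R}\rho_j\,\diff x\le C(\log R-\log\bar R)^{-2/3}$, which is exactly the tightness hypothesis of Lemma~\ref{lemma compactness of Laplace inverse}; hence $\nabla U^N_{\rho_j}\to\nabla U^N_{\rho_0}$ strongly in $L^2(\R^3)$. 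This settles the third assertion and is the main tool for the rest.

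Next I would check $\rho_0\in\mathcal R_M$, the only delicate point being conservation of mass, $\|\rho_0\|_1=M$, since weak $L^{1+1/n}$ convergence by itself allows mass to escape to infinity. Testing the weak convergence against $\mathbf 1_{B_R}\in L^{(1+1/n)'}(\R^3)$ gives $\int_{B_R}\rho_j\,\diff x\to\int_{B_R}\rho_0\,\diff x$ for each fixed $R$. Passing $j\to\infty$ in $\int_{B_R}\rho_j\le M$ and then letting $R\to\infty$ yields $\|\rho_0\|_1\le M$; passing $j\to\infty$ in the concentration bound $\int_{B_R}\rho_j\ge M-C(\log R-\log\bar R)^{-2/3}$ and letting $R\to\infty$ yields $\|\rho_0\|_1\ge M$. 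Finiteness of $\mathcal C(\rho_0)$ follows from weak lower semicontinuity of the convex functional $\rho\mapsto\int\Psi(\rho)\,\diff x$ (assumption~\ref{Psi convex}) together with the uniform bound on $\int\Psi(\rho_j)\,\diff x$, and finiteness of $\Epot^Q(\rho_0)$ from Lemma~\ref{lemma estimates for Epot}.

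The core of the proof, and the step I expect to be the main obstacle, is the minimizing property $\He(\rho_0)\le\inf_{\mathcal R_M}\He$. The Newtonian potential energy passes to the limit, $\Epot^N(\rho_j)\to\Epot^N(\rho_0)$, by the strong $L^2$ convergence of the fields, and $\mathcal C$ is weakly lower semicontinuous by convexity, so everything reduces to $\liminf_j\Epot^Q(\rho_j)\ge\Epot^Q(\rho_0)$. This is where the genuinely Mondian difficulty sits: the integrand $\scriptQ(|\nabla U^N_\rho|)$ is not integrable at spatial infinity, and strong $L^2$ control of the fields does not control the $3/2$-growth of $\scriptQ$ in the far field. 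I would overcome this by splitting $\Epot^Q$ at $|x|=\bar R$. On the bounded core $\{|x|\le\bar R\}$ the strong $L^2$ convergence (hence a.e.\ convergence along a further subsequence), together with the domination $\scriptQ(|\nabla U^N_{\rho_j}|)\le\tfrac{2\Lambda_2}{3}|\nabla U^N_{\rho_j}|^{3/2}$ from Lemma~\ref{lemma Q continuous}, which is uniformly integrable there, lets me pass to the limit exactly via Vitali's theorem. On the tail $\{|x|>\bar R\}$ the decisive point is spherical symmetry: by Newton's shell theorem (Lemma~\ref{lemma Newtons shell theorem}) $|\nabla U^N_{\rho_j}(x)|=M(\rho_j,r)/r^2\le M/r^2=|\nabla\bar U^N(x)|$, so the tail integrand $\scriptQ(|\nabla\bar U^N|)-\scriptQ(|\nabla U^N_{\rho_j}|)$ is non-negative by monotonicity of $\scriptQ$, and Fatou's lemma gives lower semicontinuity of the tail contribution. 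Adding the two pieces yields $\liminf_j\Epot^Q(\rho_j)\ge\Epot^Q(\rho_0)$. Finally, writing $\He(\rho_j)=\Epot^N(\rho_j)+\bigl(\Epot^Q(\rho_j)+\mathcal C(\rho_j)\bigr)$ and using that $\Epot^N(\rho_j)$ converges, the bracket converges to $\inf\He-\Epot^N(\rho_0)$; superadditivity of $\liminf$ then gives $\inf\He-\Epot^N(\rho_0)=\liminf_j\bigl(\Epot^Q(\rho_j)+\mathcal C(\rho_j)\bigr)\ge\Epot^Q(\rho_0)+\mathcal C(\rho_0)$, i.e.\ $\He(\rho_0)\le\inf\He$. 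Since $\rho_0\in\mathcal R_M$ the reverse inequality is automatic, so $\rho_0$ is a minimizer; moreover the same squeeze forces $\Epot^Q(\rho_j)\to\Epot^Q(\rho_0)$ and $\mathcal C(\rho_j)\to\mathcal C(\rho_0)$.

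For the last assertion, assume $\supp\rho_0\subset B_{R_0}$. The difference $\nabla U^N_{\rho_j}-\nabla U^N_{\rho_0}=\nabla U^N_{\rho_j-\rho_0}$ is radial with magnitude $|M(\rho_j,r)-M(\rho_0,r)|/r^2$, so I would split $\|\nabla U^N_{\rho_j-\rho_0}\|_{3/2}^{3/2}=4\pi\int_0^\infty r^{-1}|M(\rho_j,r)-M(\rho_0,r)|^{3/2}\,\diff r$ at a radius $R_1>R_0$. On $(0,R_1)$ a Hölder interpolation against the $L^2$ decay $\int_0^{R_1}r^{-2}|M(\rho_j,r)-M(\rho_0,r)|^2\,\diff r\to0$ (with the harmless weight $\int_0^{R_1}r^2\,\diff r$) makes the contribution vanish. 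On $(R_1,\infty)$, where $\rho_0$ has enclosed its full mass, one has $\nabla U^N_{\rho_0}=\nabla\bar U^N$, and the superadditivity estimate $a^{3/2}-b^{3/2}\ge(a-b)^{3/2}$ for $a\ge b\ge0$ converts the far-field $\scriptQ$-control into $L^{3/2}$-control of the difference: using the lower bound of Lemma~\ref{lemma Q continuous} for small arguments, $\int_{|x|>R_1}|\nabla U^N_{\rho_j}-\nabla U^N_{\rho_0}|^{3/2}\,\diff x\le\tfrac{3}{2\Lambda_1}\int_{|x|>R_1}\bigl(\scriptQ(|\nabla\bar U^N|)-\scriptQ(|\nabla U^N_{\rho_j}|)\bigr)\,\diff x$, and the right-hand side tends to $\tfrac{3}{2\Lambda_1}\int_{|x|>R_1}\bigl(\scriptQ(|\nabla\bar U^N|)-\scriptQ(|\nabla U^N_{\rho_0}|)\bigr)\,\diff x=0$, since $\Epot^Q(\rho_j)\to\Epot^Q(\rho_0)$ while the bounded region $\{|x|\le R_1\}$ converges by Vitali's theorem as before, and $\nabla U^N_{\rho_0}=\nabla\bar U^N$ on the tail. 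Hence $\nabla U^N_{\rho_j}-\nabla U^N_{\rho_0}\to0$ in $L^{3/2}(\R^3)$, which completes the proof.
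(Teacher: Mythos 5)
Your proposal is correct and follows essentially the same route as the paper's proof: the same a priori bounds and concentration lemma feeding Lemma \ref{lemma compactness of Laplace inverse}, the same spherical-symmetry comparison $\scriptQ(|\nabla U^N_{\rho_j}|)\leq\scriptQ(|\nabla\bar U^N|)$ for $|x|>\bar R$ to obtain lower semicontinuity of $\Epot^Q$, the same squeeze argument upgrading the two liminf inequalities to actual convergence $\Epot^Q(\rho_j)\rightarrow\Epot^Q(\rho_0)$ and $\mathcal C(\rho_j)\rightarrow\mathcal C(\rho_0)$, and the same superadditivity of $t\mapsto t^{3/2}$ combined with the lower bound of Lemma \ref{lemma Q continuous} for the tail $L^{3/2}$ estimate. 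Your only deviations are interchangeable technical devices for identical steps: a Vitali/Fatou treatment of the core/tail split at $\bar R$ where the paper uses the explicit majorant $Cr^{-9/4}$ with dominated convergence on $\{|x|\leq R\}$ followed by monotone convergence as $R\rightarrow\infty$, abstract weak lower semicontinuity of the convex Casimir functional where the paper runs the Mazur--Fatou argument by hand, and Vitali again for the core term in the final step where the paper uses the mean value theorem and H\"older.
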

	
	\begin{rem*}
		Observe that 
		\begin{equation*}
		\nabla U^N_{\rho_0},\,\nabla U^N_{\rho_j} \notin L^{3/2}(\R^3)
		\end{equation*}
		but
		\begin{equation*}
		\nabla U^N_{\rho_j} - \nabla U^N_{\rho_0} \in L^{3/2}(\R^3).
		\end{equation*}
	\end{rem*}
	
	\begin{proof}
		Let $(\rho_j) \subset \mathcal R_M$ be a minimizing sequence of $\He$. Since by Lemma \ref{lemma Hr bounded from below and bounds along minimizing sequences} $(\rho_j)\subset L^{1+1/n}(\R^3)$ is bounded, there exists a subsequence, again denoted by $(\rho_j)$, such that
		\begin{equation*}
		\rho_j \rightharpoonup \rho_0 \quad \text{weakly in } L^{1+1/n}(\R^3).
		\end{equation*}
		First we want to prove that $\rho_0\in\mathcal R_M$. Obviously $\rho_0$ is spherically symmetric and non-negative. Due to the weak convergence and Lemma \ref{lemma masses remain concentrated along minimizing sequences}, there is for every $\epsilon>0$ an $R>0$ such that
		\begin{equation*}
		\int_{B_R} \rho_0 \diff x = \lim_{j\rightarrow\infty} \int_{B_R} \rho_j \diff x \in [M-\epsilon,M]. 
		\end{equation*}
		Thus by monotone convergence
		\begin{equation*}
		\int \rho_0 \diff x  = M.
		\end{equation*}
		Now we extract from $(\rho_j)$ a subsequence $(\hat\rho_j)$ such that
		\begin{equation*}
		\lim_{j\rightarrow\infty}\int\Psi(\hat\rho_j) \diff x = \liminf_{j\rightarrow\infty}\int\Psi(\rho_j)\diff x.
		\end{equation*}
		From Mazur's lemma we know that
		\begin{equation*}
		\forall j\in\N\, \exists N_j\geq j \text{ and } c_j^{(j)}, \ldots, c_{N_j}^{(j)} \geq 0 \text{ with } \sum_{i=j}^{N_j} c_i^{(j)} = 1
		\end{equation*}
		such that
		\begin{equation*}
		\tilde \rho_j := \sum_{i=j}^{N_j} c_i^{(j)}\hat\rho_i \rightarrow \rho_0 \quad \text{strongly in } L^{1+1/n}(\R^3).
		\end{equation*}
		We extract a subsequence $(\tilde\rho_{j_k})$ of $(\tilde\rho_j)$ such that $\tilde\rho_{j_k}$ converges to $\rho_0$ pointwise a.e.. Since $\Psi$ is continuous, $\Psi(\tilde\rho_{j_k})$ converges pointwise a.e., too. Using Fatou's lemma and the convexity of $\Psi$ we conclude that
		\begin{align} \label{equ proof existence of minimizers limsup Psi rho j}
		\int \Psi(\rho_0) \diff x& \leq \liminf_{k\rightarrow\infty} \int \Psi(\tilde\rho_{j_k}) \diff x= \liminf_{k\rightarrow\infty} \int \Psi\left( \sum_{i={j_k}}^{N_{j_k}} c_i^{({j_k})}\hat\rho_i \right) \diff x \nonumber \\
		& \leq \liminf_{k\rightarrow\infty} \sup_{l\geq j_k} \int \Psi(\hat\rho_l) \diff x = \limsup_{j\rightarrow\infty} \int\Psi(\hat\rho_j) \diff x \nonumber \\
		& = \lim_{j\rightarrow\infty} \int\Psi(\hat\rho_j) \diff x = \liminf_{j\rightarrow\infty}  \int\Psi(\rho_j) \diff x.
		\end{align}
		Lemma \ref{lemma Hr bounded from below and bounds along minimizing sequences} implies that $\liminf \int\Psi(\rho_j)\diff x<\infty$. Hence $\int\Psi(\rho_0)<\infty$. It remains to prove that $|\Epot^Q(\rho_0)|<\infty$. Then $\rho_0 \in \mathcal R_M$. We know that $\rho_0\in L^1\cap L^{1+1/n}(\R^3)$. In particular $\rho_0\in L^{6/5}(\R^3)$. Thus Lemma \ref{lemma estimates for Epot} implies
		\begin{equation*}
		\Epot^Q(\rho_0) > - \infty.
		\end{equation*}
		Before we can show that $\Epot^Q(\rho_0)<\infty$ we have to pass to the limit with the potential energy.

		
		
		For the Newtonian part this is straightforward: Lemma \ref{lemma compactness of Laplace inverse} together with Lemma \ref{lemma masses remain concentrated along minimizing sequences} imply that
		\begin{equation*}
		\nabla U^N_{\rho_j} \rightarrow \nabla U^N_{\rho_0} \quad \text{strongly in } L^2(\R^3)\text{ for }j\rightarrow\infty.
		\end{equation*}
		In particular,
		\begin{equation} \label{equ proof existence of minimizers lim Epot N rho j}
		\Epot^N(\rho_j)  \rightarrow \Epot^N(\rho_0) \quad \text{for }j\rightarrow\infty.
		\end{equation}
		Now we treat the Mondian part of the potential energy: 
		Thanks to spherical symmetry for every $\rho\in\mathcal R_M$ and $x\in\R^3$ with $|x|>\bar R$ holds
		\begin{equation} \label{equ proof existence of minimizers Q(nabla U rho) leq Q(nabla bar U) for large x}
		\scriptQ(|\nabla U^N_{\rho}(x)|) \leq \scriptQ(|\nabla \bar U^N(x)|).
		\end{equation}
		Thus for every $R>\bar R$ and $j\in\N$
		\begin{align} \label{equ proof existence of minimizers Epot(rho j) geq int over x leq R}
		\Epot^Q(\rho_j) \geq - \frac{1}{4\pi} \int_{|x|\leq R} \left( \scriptQ(|\nabla U^N_{\rho_j}|) - \scriptQ(|\nabla \bar U^N|) \right) \diff x.
		\end{align}
		Now we extract a subsequence of $(\nabla U^N_{\rho_j})$, which we denote again by $(\nabla U^N_{\rho_j})$, such that
		\begin{equation*}
		\nabla U^N_{\rho_j} \rightarrow \nabla U^N_{\rho_0} \quad \text{pointwise a.e. for } j\rightarrow\infty.
		\end{equation*}
		Since $\scriptQ$ is continuous, $\scriptQ(|\nabla U^N_{\rho_j}|)$ converges pointwise a.e., too. From Lemma \ref{lemma Q continuous} and Lemma \ref{lemma Hr bounded from below and bounds along minimizing sequences} we know further that
		\begin{equation*}
		\scriptQ(| \nabla U^N_{\rho_j} |) \leq C \frac{M(\rho_j,r)^{3/2}}{r^3} \leq C\|\rho_j\|_{6/5}^{3/2} \frac{\measure(B_r)^{1/4}}{r^3} \leq C r^{-9/4}, \quad r >0,
		\end{equation*}
		with $C>0$ independent of $j\in\N$. Hence we can apply the theorem of dominated convergence in \eqref{equ proof existence of minimizers Epot(rho j) geq int over x leq R} and it follows
		\begin{align*}
		\liminf_{j\rightarrow\infty} \Epot^Q(\rho_j) &\geq -\lim_{j\rightarrow\infty} \frac{1}{4\pi} \int_{|x|\leq R}  \scriptQ(|\nabla U^N_{\rho_j}|) \diff x +  \frac{1}{4\pi} \int_{|x|\leq R} \scriptQ(|\nabla \bar U^N|) \diff x \\
		& = -\frac{1}{4\pi}\int_{|x|\leq R} \left( \scriptQ(|\nabla U^N_{\rho_0}|) - \scriptQ(|\nabla \bar U^N|) \right) \diff x.
		\end{align*}
		\eqref{equ proof existence of minimizers Q(nabla U rho) leq Q(nabla bar U) for large x} holds for $\rho_0$, too. Thus, when we send $R\rightarrow\infty$, monotone convergence yields
		\begin{equation} \label{equ proof existence of minimizers liminf Epot Q rho j}
		\liminf_{j\rightarrow\infty} \Epot^Q(\rho_j) \geq \Epot^Q(\rho_0).
		\end{equation}
		In particular, $\Epot^Q(\rho_0) < \infty$ and thus $\rho_0\in\mathcal R_M$.
		
		Taking into account \eqref{equ proof existence of minimizers limsup Psi rho j}, \eqref{equ proof existence of minimizers lim Epot N rho j} and \eqref{equ proof existence of minimizers liminf Epot Q rho j} it follows
		\begin{align} \label{equ proof existence of minimizers passing to the limit}
		\He(\rho_0) & = \int\Psi(\rho_0) \diff x + \Epot^N(\rho_0) + \Epot^Q(\rho_0) \nonumber\\
		& \leq \liminf_{j\rightarrow\infty} \int \Psi(\rho_j) \diff x + \lim_{j\rightarrow\infty} \Epot^N(\rho_j) + \liminf_{j\rightarrow\infty} \Epot^Q(\rho_j) \nonumber\\
		& \leq \lim_{j\rightarrow\infty} \He(\rho_j) = \min_{\mathcal R_M} \He.
		\end{align}
		Since $\rho_0\in\mathcal R_M$,
		\begin{equation*}
		\He(\rho_0) = \min_{\mathcal R_M} \He
		\end{equation*}
		and $\rho_0$ is indeed a minimizer of $\He$ over $\mathcal R_M$.
		
		Now it proves important that, when we derived the estimate \eqref{equ proof existence of minimizers limsup Psi rho j}, we extracted first the subsequence $(\hat\rho_j)$. This way in the inequality \eqref{equ proof existence of minimizers passing to the limit} appears twice a $\liminf$. This actually implies that both $(\int\Psi(\rho_j)\diff x)$ and $(\Epot^Q(\rho_j))$ are convergent and
		\begin{align} \label{equ proof existence of minimizers EpotQ rhoj converges}
		\int\Psi(\rho_j) \diff x & \rightarrow \int\Psi(\rho_0) \diff x, \nonumber\\
		\Epot^Q(\rho_j) & \rightarrow \Epot^Q(\rho_0)
		\end{align}
		for $j\rightarrow\infty$. This we need when we prove next that
		\begin{equation*}
		\nabla U^N_{\rho_j} - \nabla U^N_{\rho_0} \rightarrow 0 \quad \text{strongly in } L^{3/2}(\R^3)
		\end{equation*}
		provided the support of $\rho_0$ is compact.
		
		Let us assume in the following that $\rho_0$ has compact support and let $R_0>0$ be sufficiently large such that
		\begin{equation*}
		\supp\rho_0 \subset B_{R_0}.
		\end{equation*}		
		We have already seen above that
		\begin{equation*}
		\nabla U^N_{\rho_j} \rightarrow \nabla U^N_{\rho_0} \quad \text{strongly in } L^2(\R^3)\text{ for }j\rightarrow\infty.
		\end{equation*}
		Hence
		\begin{equation*}
		\nabla U^N_{\rho_j} \rightarrow \nabla U^N_{\rho_0} \quad \text{strongly in } L^{3/2}(B_{R_0})\text{ for }j\rightarrow\infty.
		\end{equation*}
		Using spherical symmetry and the compact support of $\rho_0$ we derive the estimate
		\begin{align*}
		\left\| \nabla U^N_{\rho_0} - \nabla U^N_{\rho_j} \right\|_{L^{3/2}(\{|x|\geq R_0\})}^{3/2} & = \int_{|x|\geq R_0} \left| \nabla U^N_{\rho_0} - \nabla U^N_{\rho_j} \right|^{3/2} \diff x \\
		& = \int_{|x|\geq R_0} \frac{(M-M(\rho_j,|x|))^{3/2}}{|x|^3} \diff x \\
		& \leq \int_{|x|\geq R_0} \frac{M^{3/2}-M(\rho_j,|x|)^{3/2}}{|x|^3} \diff x \\
		& = \int_{|x|\geq R_0} \left(| \nabla U^N_{\rho_0} |^{3/2} - |\nabla U^N_{\rho_j} |^{3/2} \right) \diff x.
		\end{align*}
		For $R_0$ sufficiently large we can use Lemma \ref{lemma Q continuous} and estimate further
		\begin{align} \label{equ proof existence of minimizers estimate nabla UN0 minus nabla UNj in L3/2 on |x|>R0}
		\left\| \nabla U^N_{\rho_0} - \nabla U^N_{\rho_j} \right\|_{L^{3/2}(\{|x|\geq R_0\})}^{3/2}  \leq &\, \frac{3}{2\Lambda_1} \int_{|x|\geq R_0}\left(\scriptQ(| \nabla U^N_{\rho_0} |) - \scriptQ(|\nabla U^N_{\rho_j} |) \right) \diff x \nonumber \\
		= & \, - \frac{6\pi}{\Lambda_1} \left( \Epot^Q(\rho_0) - \Epot^Q(\rho_j) \right) \nonumber \\
		& \, - \frac{3}{2\Lambda_1} \int_{|x|< R_0}\left(\scriptQ(| \nabla U^N_{\rho_0} |) - \scriptQ(|\nabla U^N_{\rho_j} |) \right) \diff x.
		\end{align}
		Using Lemma \ref{lemma Q continuous}, the mean value theorem and H\"older
		\begin{align*}
		\frac{3}{2\Lambda_1} \int_{|x|< R_0}\left|\scriptQ(| \nabla U^N_{\rho_0} |) - \scriptQ(|\nabla U^N_{\rho_j} |) \right|  \diff x & \leq \frac{\Lambda_2}{\Lambda_1} \int_{|x|< R_0}\left| | \nabla U^N_{\rho_0} |^{3/2} - |\nabla U^N_{\rho_j} |^{3/2} \right| \diff x \\
		& \leq \frac{\Lambda_2}{\Lambda_1} \int_{|x|< R_0} \frac 32 \left( |\nabla U^N_{\rho_0}| + |\nabla U^N_{\rho_j}| \right) ^{1/2} \left| \nabla U^N_{\rho_0} - \nabla U^N_{\rho_j} \right| \diff x \\
		& \leq 
		C\left\|\nabla U^N_{\rho_0} - \nabla U^N_{\rho_j} \right\|_{L^{3/2}(B_{R_0})}
		\end{align*}
		and this converges to zero for $j\rightarrow\infty$. Together with \eqref{equ proof existence of minimizers EpotQ rhoj converges} and \eqref{equ proof existence of minimizers estimate nabla UN0 minus nabla UNj in L3/2 on |x|>R0} this implies
		\begin{equation*}
		\left\| \nabla U^N_{\rho_0} - \nabla U^N_{\rho_j} \right\|_{L^{3/2}(\{|x|\geq R_0\})} \rightarrow 0 \quad \text{for } j\rightarrow\infty.
		\end{equation*}
		So
		\begin{equation*}
		\nabla U^N_{\rho_0} - \nabla U^N_{\rho_j} \rightarrow 0 \quad \text{in } L^{3/2}(\R^3) \text{ for } j\rightarrow\infty
		\end{equation*}
		provided that $\rho_0$ has compact support.
	\end{proof}

	\subsection{Euler-Lagrange equation for a minimizer in the fluid situation} \label{section Euler Lagrange equation pressure supported}
	
	In the previous section, we have proven that there is a minimizer $\rho_0$ of the variational problem
	\begin{equation*}
		\text{minimize } \He(\rho) \text{ s.t. } \rho \in \mathcal{R}_M.
	\end{equation*}
	In this section we derive the Euler-Lagrange equation that belongs to the minimizer $\rho_0$. This shows that $\rho_0$ is indeed of the form \eqref{ansatz equilibrium solution fluid models} as stated in the introduction, more precisely
	\begin{equation*}
		\rho_0(x) =
		\begin{cases}
			(\Psi')^{-1}(E_0 - U^M_{\rho_0}(x)), & \text{if } U^M_{\rho_0} < E_0, \\
			0, & \text{if } U^M_{\rho_0} \geq E_0,
		\end{cases}
	\end{equation*}
	for some cut-off energy $E_0\in\R$. It will be important that $\lambda(|u|)u$, $u\in\R^3$, is H\"older continuous. Therefore we strengthen the general assumptions of the previous section.
	
	\begin{generalassumptions}
		Throughout this section we assume that \ref{lambda bounded from below} and \ref{lambda Prime bounded from below} hold.
	\end{generalassumptions}
	
	First a technical proposition:
	
	\begin{prop} \label{prop nabla UN phi is in every Lp if int phi is zero}
		Let $\phi\in L^\infty(\R^3)$ with compact support and $\int\phi\diff x = 0$. Then
		\begin{equation*}
		\nabla U^N_\phi \in L^p(\R^3)
		\end{equation*}
		for every $1<p\leq \infty$.
	\end{prop}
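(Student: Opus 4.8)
The plan is to split $\R^3$ into a large ball and its exterior and to treat the two regions by entirely different mechanisms. Fix $R>0$ so large that $\supp\phi\subset B_{R/2}$. On the ball $B_R$ the integrability will come from the interior regularity of Newtonian potentials, whereas on the exterior $\{|x|>R\}$ it will come from the improved decay that the vanishing-mass condition $\int\phi\diff x=0$ forces upon $\nabla U^N_\phi$. Since $\phi\in L^\infty$ has compact support, it lies in $L^1\cap L^q(\R^3)$ for every $1<q<\infty$, so all the hypotheses of Lemma \ref{lemma Newtonian potential} are at my disposal.

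First I would dispose of the interior region. Because $\phi\in L^q(\R^3)$ for every finite $q$, parts a)--c) of Lemma \ref{lemma Newtonian potential} give $U^N_\phi\in W^{2,q}_{loc}(\R^3)$ for every $q<\infty$, and Sobolev--Morrey embedding in dimension three yields $U^N_\phi\in C^1$. Hence $\nabla U^N_\phi$ is continuous, in particular bounded on the compact set $\overline{B_R}$, and therefore $\nabla U^N_\phi\in L^p(B_R)$ for every $1\leq p\leq\infty$.

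The heart of the matter is the exterior estimate, and this is exactly where the assumption $\int\phi\diff x=0$ is indispensable: without it the monopole term survives, $\nabla U^N_\phi$ decays only like $|x|^{-2}$, and the field fails to be in $L^p$ for $p\leq 3/2$. Using $\int\phi\diff y=0$ to subtract the monopole, I would write, for $|x|>R$,
\begin{equation*}
\nabla U^N_\phi(x)=\int\left(\frac{x-y}{|x-y|^3}-\frac{x}{|x|^3}\right)\phi(y)\diff y.
\end{equation*}
Setting $K(z)=z/|z|^3$ one has $|\nabla K(z)|\leq C|z|^{-3}$, and for $y\in\supp\phi\subset B_{R/2}$ the whole segment joining $x$ to $x-y$ stays at distance at least $|x|/2$ from the origin. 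The mean value theorem then gives $|K(x-y)-K(x)|\leq C|y|\,|x|^{-3}$, whence
\begin{equation*}
|\nabla U^N_\phi(x)|\leq \frac{C}{|x|^3}\int|y|\,|\phi(y)|\diff y=\frac{C'}{|x|^3},
\end{equation*}
the first moment being finite because $\phi$ is bounded with compact support.

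Finally I would read off the conclusion. The bound $|\nabla U^N_\phi(x)|\leq C'|x|^{-3}$ on $\{|x|>R\}$ yields $\nabla U^N_\phi\in L^p(\{|x|>R\})$ precisely when $3p>3$, i.e. for every $p>1$, and it also shows $\nabla U^N_\phi\in L^\infty(\{|x|>R\})$; combined with the interior bound this gives $\nabla U^N_\phi\in L^p(\R^3)$ for all $1<p\leq\infty$. (The endpoint $p=1$ is genuinely excluded, consistent with the statement, since $|x|^{-3}$ is not integrable at infinity in $\R^3$.) Everything outside the exterior estimate is a routine application of Lemma \ref{lemma Newtonian potential}; I expect the only delicate point to be making the kernel-difference bound clean, namely controlling $\nabla K$ uniformly along the connecting segment — which is exactly where the compact support of $\phi$ together with the largeness of $R$ enter.
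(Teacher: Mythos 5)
Your proof is correct, but it reaches the crucial exterior decay by a genuinely different mechanism than the paper. The paper never subtracts the monopole from the kernel: for $|x|>2R$ it aligns coordinates so that $x=(x_1,0,0)$, bounds the tangential derivatives $\partial_{x_2}U^N_\phi,\partial_{x_3}U^N_\phi=O(|x|^{-3})$ directly, and for the radial derivative splits $\phi=\alpha-\beta$ into its positive and negative parts, which have equal masses precisely because $\int\phi\diff x=0$; it then traps the two resulting radial fields between the two-sided geometric bounds of Proposition \ref{prop asymptotic behaviour of gradient UN},
\begin{equation*}
\sqrt{1-\frac{R^2}{|x|^2}}\,\frac{\|\alpha\|_1}{(|x|+R)^2} \;\leq\; \cdots \;\leq\; \frac{\|\alpha\|_1}{(|x|-R)^2},
\end{equation*}
and shows by elementary expansions ($1-\sqrt{1-R^2/|x|^2}\leq C/|x|$ and $(|x|-R)^{-2}-(|x|+R)^{-2}\leq 4R(|x|-R)^{-3}$) that the gap between these bounds is $O(|x|^{-3})$. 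You replace all of this with the classical multipole-cancellation argument: writing $\nabla U^N_\phi(x)=\int\left(K(x-y)-K(x)\right)\phi(y)\diff y$ with $K(z)=z|z|^{-3}$, and estimating the kernel difference via $|\nabla K(z)|\leq C|z|^{-3}$ along the connecting segment, which indeed stays at distance at least $|x|/2$ from the origin under your support hypothesis. Your route is cleaner and more robust: it treats all components at once, needs no sign decomposition or coordinate alignment, makes the role of $\int\phi\diff x=0$ completely transparent, and generalizes immediately (higher vanishing moments would give faster decay). What the paper's version buys is economy within the paper itself: it recycles Proposition \ref{prop asymptotic behaviour of gradient UN}, which is needed in the appendix anyway for Lemma \ref{lemma Mondian potential energy is finite}, and it only ever uses pointwise bounds on the field rather than derivatives of the kernel. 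One minor remark on your interior step: the detour through the Calder\'on--Zygmund estimate of Lemma \ref{lemma Newtonian potential} c) plus Morrey embedding is heavier than necessary --- the one-line bound $|\nabla U^N_\phi(x)|\leq\|\phi\|_\infty\int_{|y|\leq R}|x-y|^{-2}\diff y\leq 4\pi R\|\phi\|_\infty$, which is what the paper uses, already gives \emph{global} boundedness. In either version, boundedness together with $O(|x|^{-3})$ decay yields $\nabla U^N_\phi\in L^p(\R^3)$ for all $1<p\leq\infty$, with $p=1$ correctly excluded.
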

	
	\begin{proof}
		Let $\supp \phi \subset B_R$ and $x\in \R^3$, then
		\begin{equation*}
		|\nabla U^N_\phi(x)| \leq \|\phi\|_\infty \int_{|y|\leq R} \frac{\diff y}{|x-y|^2} \leq 4\pi R\|\phi\|_\infty < \infty.
		\end{equation*}
		Thus $\nabla U^N_\phi\in L^\infty(\R^3)$. Let now $|x|>2R$. Without loss of generality we may assume that
		\begin{equation*}
		x=(x_1,0,0) \in \R^3 \text{ with } x_1>2R.
		\end{equation*}
		Then
		\begin{align*}
		|\partial_{x_2}U^N_\phi(x)| & = 
		\left| \int\frac{-y_2}{|x-y|^3}\phi(y)\diff y \right| \leq
		\|\phi\|_\infty R\int_{|y|<R}\frac{\diff y}{|x-y|^3} \\
		& \leq \|\phi\|_\infty R\measure(B_R) \left(\frac{2}{|x|}\right)^3.
		\end{align*}
		And with the same calculation for $|\partial_{x_3} U_\phi(x)|$ we get that both
		\begin{equation*}
		\partial_{x_2}U^N_\phi(x), \, \partial_{x_3}U^N_\phi(x) = O(|x|^{-3})
		\end{equation*}
		for $x_1\rightarrow\infty$. Now let
		\begin{equation*}
		\alpha := \phi 1_{\{\phi>0\}}
		\end{equation*}
		and
		\begin{equation*}
		\beta := -\phi 1_{\{\phi<0\}}.
		\end{equation*}
		Then $\alpha,\beta\geq 0$, $\|\alpha\|_1=\|\beta\|_1 = \|\phi\|_1/2$ and
		\begin{equation*}
		\partial_{x_1} U^N_\phi(x) =  \int_{|y|<R}\frac{x_1-y_1}{|x-y|^3}\alpha(y)\diff y - \int_{|y|<R}\frac{x_1-y_1}{|x-y|^3}\beta(y)\diff y 
		\end{equation*}
		where both integrands on the right side are non-negative.
		
		With the same geometric arguments as in Proposition \ref{prop asymptotic behaviour of gradient UN} in the appendix, we get
		\begin{align} \label{equ proof nabla UN phi is in every Lp if int phi is zero}
		\frac{|\partial_{x_1} U^N_\phi(x)|}{\|\alpha\|_1}  \leq& \,\frac{1}{(|x|-R)^2} - \frac{\sqrt{1-R^2/|x|^2}}{(|x|+R)^2} = \nonumber\\
		= & \, \frac{1-\sqrt{1-R^2/|x|^2}}{(|x|-R)^2} \nonumber\\
		& \, + \sqrt{1-R^2/|x|^2}\left( \frac{1}{(|x|-R)^2} - \frac{1}{(|x|+R)^2} \right).
		\end{align}
		Since
		\begin{equation*}
		\frac{\diff}{\diff\sigma}\sqrt{1-\sigma} = -\frac{1}{2} (1-\sigma)^{-1/2}, \quad \sigma < 1, 
		\end{equation*}
		we have
		\begin{equation*}
		1-\sqrt{1-\frac{R^2}{|x|^2}} \leq \frac{1}{2} \left(1-\frac{R^2}{|x|^2}\right)^{-1/2}\frac{R^2}{|x|^2} = \frac{1}{2}(|x|^2-R^2)^{-1/2} \frac{R^2}{|x|} \leq \frac{C}{|x|}.
		\end{equation*}
		Further
		\begin{equation*}
		\frac{1}{(|x|-R)^2} - \frac{1}{(|x|+R)^2} \leq \frac{2}{(|x|-R)^3}(|x|+R-|x|+R) = \frac{4R}{(|x|-R)^3}.
		\end{equation*}
		Thus \eqref{equ proof nabla UN phi is in every Lp if int phi is zero} implies that
		\begin{equation*}
		\partial_{x_1}U^N_\phi(x) = O(|x|^{-3})
		\end{equation*}
		for $x_1\rightarrow\infty$. Thus in general
		\begin{equation*}
		\nabla U^N_\phi(x) = O(|x|^{-3})
		\end{equation*}
		for $|x|\rightarrow\infty$. And since $\nabla U^N_\phi$ is bounded, this implies that
		\begin{equation*}
		\nabla U^N_\phi \in L^p(\R^3)
		\end{equation*}
		for every $1<p\leq \infty$.
		
	\end{proof}

	The most delicate part in deriving the Euler-Lagrange equation belonging to $\rho_0$ is the derivative of $\Epot^Q(\rho)$. This derivative we study in the next lemma.
	
	\begin{lem} \label{lemma derivative of EpotQ}
		Let $\rho\in L^1\cap L^p(\R^3)$ for a $1<p<3$ and let $\phi\in L^\infty(\R^3)$ with compact support and $\int\phi\diff x = 0$. Then
		\begin{equation*}
		\int\left| \scriptQ\left(|\nabla U^N_{\rho+\tau\phi}|\right) - \scriptQ\left(|\nabla U^N_{\rho}|\right) \right| \diff x < \infty
		\end{equation*}
		and
		\begin{equation*}
		\lim_{\tau\rightarrow 0} - \frac{1}{\tau}\frac{1}{4\pi}\int\left( \scriptQ\left(|\nabla U^N_{\rho+\tau\phi}|\right) - \scriptQ\left(|\nabla U^N_{\rho}|\right) \right) \diff x = \int U^\lambda_\rho \phi \diff x.
		\end{equation*}
	\end{lem}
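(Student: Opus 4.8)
The plan is to exploit the linearity of the Newtonian potential, $\nabla U^N_{\rho+\tau\phi} = \nabla U^N_\rho + \tau\nabla U^N_\phi$, so that the increment in $\scriptQ$ is controlled by the first-order Taylor expansion of $\scriptQ$ supplied by Lemma \ref{lemma Q differentiable}. First I would fix the relevant integrability classes. By Lemma \ref{lemma Newtonian potential} b), applied with $\tfrac1s = \tfrac1p - \tfrac13$, we have $\nabla U^N_\rho \in L^s(\R^3)$ with $s\in(3/2,\infty)$. Since \ref{lambda Prime bounded from below} implies \ref{lambda bounded from above}, the abbreviation $g := \lambda(|\nabla U^N_\rho|)\nabla U^N_\rho$ obeys the pointwise bound $|g| \le \Lambda_2|\nabla U^N_\rho|^{1/2}$ and hence $g\in L^{2s}(\R^3)$ with $2s>3$. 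Moreover, Proposition \ref{prop nabla UN phi is in every Lp if int phi is zero} gives $\nabla U^N_\phi \in L^q(\R^3)$ for every $1<q\le\infty$; in particular $\nabla U^N_\phi$ lies both in $L^{3/2}$ and in the H\"older-conjugate exponent $(2s)' = 2s/(2s-1)>1$ of $g$.

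With $u=\nabla U^N_\rho+\tau\nabla U^N_\phi$ and $v=\nabla U^N_\rho$, the estimate of Lemma \ref{lemma Q differentiable} reads
\[
\left|\scriptQ(|\nabla U^N_{\rho+\tau\phi}|) - \scriptQ(|\nabla U^N_\rho|) - \tau\, g\cdot\nabla U^N_\phi\right| \le C|\tau|^{3/2}|\nabla U^N_\phi|^{3/2}.
\]
The first assertion follows at once: the right-hand side is integrable because $\nabla U^N_\phi\in L^{3/2}$, and the linear term $\tau\, g\cdot\nabla U^N_\phi$ is integrable by H\"older ($g\in L^{2s}$, $\nabla U^N_\phi\in L^{(2s)'}$). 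Dividing by $\tau$ and integrating, the quadratic remainder contributes at most $\tfrac{C|\tau|^{1/2}}{4\pi}\|\nabla U^N_\phi\|_{3/2}^{3/2}\to 0$, so
\[
\lim_{\tau\to0}-\frac{1}{\tau}\frac{1}{4\pi}\int\left(\scriptQ(|\nabla U^N_{\rho+\tau\phi}|)-\scriptQ(|\nabla U^N_\rho|)\right)\diff x = -\frac{1}{4\pi}\int g\cdot\nabla U^N_\phi\,\diff x.
\]

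It remains to identify this linear functional with $\int U^\lambda_\rho\,\phi\,\diff x$, and this is where I expect the real work to lie. Inserting the definition of $U^\lambda_\rho$ and aiming to interchange the order of integration, I would write
\[
\int U^\lambda_\rho\,\phi\,\diff x = \frac{1}{4\pi}\int g(y)\cdot\left(\int\phi(x)\Bigl(\tfrac{x-y}{|x-y|^3}+\tfrac{y}{|y|^3}\Bigr)\,\diff x\right)\diff y.
\]
Because $\phi$ has compact support, $\int\phi\,\diff x=0$ annihilates the $y/|y|^3$ contribution in the inner integral, while $\int\phi(x)\tfrac{x-y}{|x-y|^3}\,\diff x = -\nabla U^N_\phi(y)$ by the gradient formula of Lemma \ref{lemma Newtonian potential}; this reproduces exactly $-\tfrac{1}{4\pi}\int g\cdot\nabla U^N_\phi\,\diff y$, closing the argument. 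The delicate point is to justify Fubini for the combined kernel $K(x,y)=\tfrac{x-y}{|x-y|^3}+\tfrac{y}{|y|^3}$. Near $y=0$ one controls $|g(y)|\,|y|^{-2}$ by H\"older against $|y|^{-2}\in L^{(2s)'}_{\mathrm{loc}}$ (using $(2s)'<3/2$), whereas for $|y|$ large the subtraction built into $K$ is essential: a first-order Taylor estimate of $z\mapsto z/|z|^3$ yields the improved decay $|K(x,y)|\le C|x|/|y|^3$ for $|x|\le R_\phi$, $|y|\ge 2R_\phi$, and $|g(y)|\,|y|^{-3}$ is then integrable over $\{|y|\ge 2R_\phi\}$ since $(2s)'>1$. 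Having verified absolute integrability of $|\phi(x)|\,|g(y)|\,|K(x,y)|$ over $\R^3\times\R^3$, Fubini applies and the identification is complete.
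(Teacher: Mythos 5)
Your proof is correct and follows essentially the same route as the paper's: a first-order expansion of $\scriptQ$ at $\nabla U^N_\rho$ with remainder $O(|\tau|^{3/2}|\nabla U^N_\phi|^{3/2})$ (the paper obtains this via the mean value theorem together with the H\"older continuity of $\lambda(|u|)u$; you quote Lemma \ref{lemma Q differentiable}, which packages the same estimate), integrability of both terms via Proposition \ref{prop nabla UN phi is in every Lp if int phi is zero} and H\"older with $\lambda(|\nabla U^N_\rho|)\nabla U^N_\rho\in L^{2s}$, and finally Fubini combined with $\int\phi\,\diff x=0$ to identify the linear functional with $\int U^\lambda_\rho\,\phi\,\diff x$. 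If anything, your write-up is slightly more complete than the paper's: you spell out the Fubini justification (splitting the kernel near its singularities and using the cancellation $|K(x,y)|\leq C|x|/|y|^3$ at infinity) which the paper delegates to the estimates behind Lemma \ref{lemma Ulambda}, and your remainder exponent $|\tau|^{3/2}$ is the sharp one, whereas the paper's displayed bound $C\tau^2|\nabla U^N_\phi|^{3/2}$ overstates the power of $\tau$ (harmlessly, since both yield the limit).
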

	
	\begin{rem}
		Observe that in Lemma \ref{lemma derivative of EpotQ} we did neither assume that $\rho$ is spherically symmetric (like in Lemma \ref{lemma estimates for Epot}) nor that the support of $\rho$ is compact (like in Lemma \ref{lemma Mondian potential energy is finite}). Nevertheless the difference between the Mondian part of the potential energy of $\rho$ and $\rho+\tau\phi$ is finite. The reason is that the difference between the two densities under consideration is given by $\tau\phi$ and we have made several regularity assumptions on $\phi$.
	\end{rem}
	
	\begin{proof}[Proof of Lemma \ref{lemma derivative of EpotQ}]
		We prove first the intermediate assertion
		\begin{align} \label{equ proof derivative of EpotQ intermediate assertion}
		\lim_{\tau\rightarrow 0}- \frac{1}{\tau}\frac{1}{4\pi}\int\left( \scriptQ\left(|\nabla U^N_{\rho+\tau\phi}|\right) -  \scriptQ\left(|\nabla U^N_{\rho}|\right) \right) \diff x = -\frac{1}{4\pi} \int\lambda(|\nabla U^N_\rho|) \nabla U^N_\rho \cdot \nabla U^N_\phi \diff x.
		\end{align}
		To avoid lengthy equations we use the abbreviation
		\begin{equation*}
		F(v) := \lambda(|v|)v, \quad v\in\R^3.
		\end{equation*}
		Since we assume that \ref{lambda Prime bounded from below} holds, we have more regularity for $\scriptQ$ and $\lambda$ than in the previous section. From Lemma \ref{lemma Q differentiable} follows
		\begin{equation*}
		\scriptQ \in C^1([0,\infty))
		\end{equation*}
		with
		\begin{equation*}
		\scriptQ'(\sigma) = \lambda(\sigma)\sigma, \quad \sigma\geq 0.
		\end{equation*}
		Since $\scriptQ'(0) = 0$,
		\begin{equation*}
		\scriptQ(|\cdot|) \in C^1(\R^3)
		\end{equation*}
		with
		\begin{equation*}
		\nabla \scriptQ(|v|) = F(v), \quad v\in\R^3.
		\end{equation*}
		For $u,v\in\R^3$ set
		\begin{equation*}
		f_{u,v}(t) := \scriptQ(|tu + (1-t)v|), \quad 0\leq t \leq 1.
		\end{equation*}
		From the mean value theorem follows that for every $u,v\in\R^3$ exists $s\in[0,1]$ such that
		\begin{align*}
		\scriptQ(|u|)-\scriptQ(|v|) &= f_{u,v}(1) - f_{u,v}(0) = f_{u,v}'(s) \\
		& = F(su+(1-s)v)\cdot (u-v).
		\end{align*}
		Treat $\nabla U^N_{\rho+\tau\phi}$ and $\nabla U^N_\rho$ as pointwise defined functions and interpret $\nabla U^N_{\rho+\tau s(y)\phi}$ and $\nabla U^N_\phi$ as linear combinations of the two. Then for every $y\in\R^3$ there is $0\leq s(y) \leq 1$ such that
		\begin{align*}
		\scriptQ\left(|\nabla U^N_{\rho+\tau\phi}(y)|\right) - \scriptQ\left(|\nabla U^N_{\rho}(y)|\right) = \tau F\left( \nabla U^N_{\rho+\tau s(y)\phi}(y) \right) \cdot \nabla U^N_\phi(y).
		\end{align*}
		In view of the intermediate assertion \ref{equ proof derivative of EpotQ intermediate assertion}, which we want to prove, we estimate (suppressing the $y$-argument)
		\begin{align*}
		\left| \scriptQ\left(|\nabla U^N_{\rho+\tau\phi}|\right) - \scriptQ\left(|\nabla U^N_{\rho}|\right) - \tau F(\nabla U^N_\rho)\cdot \nabla U^N_\phi \right| &\leq \tau \left| F\left( \nabla U^N_{\rho+\tau s\phi} \right) - F(\nabla U^N_\rho)\right| |\nabla U^N_\phi| \\
		& \leq C\tau \left| \nabla U^N_{\rho+\tau s\phi} - \nabla U^N_\rho \right|^{1/2} |\nabla U^N_\phi| \\
		& \leq C\tau^2 |\nabla U^N_\phi|^{3/2};
		\end{align*}
		we have used the H\"older continuity of $F(u)=\lambda(|u|)u$ (Lemma \ref{lemma lambda is Hoelder continuous}). Thus
		\begin{align*}
		\frac{1}{4\pi} \frac{1}{\tau}\int\left| \scriptQ\left(|\nabla U^N_{\rho+\tau\phi}|\right) - \scriptQ\left(|\nabla U^N_{\rho}|\right) - \tau F(\nabla U^N_\rho)\cdot \nabla U^N_\phi \right| \diff y \leq C\tau \int |\nabla U^N_\phi|^{3/2} \diff y.
		\end{align*}
		With Proposition \ref{prop nabla UN phi is in every Lp if int phi is zero} this yields the intermediate assertion \eqref{equ proof derivative of EpotQ intermediate assertion}. Further this guarantees that the integral
		\begin{equation*}
		\int\left| \scriptQ\left(|\nabla U^N_{\rho+\tau\phi}|\right) - \scriptQ\left(|\nabla U^N_{\rho}|\right) \right| \diff x < \infty
		\end{equation*}
		for all $\tau \geq 0$.
		Using the estimates from the proof of Lemma \ref{lemma Ulambda} allows us to use Fubini in the following calculation:
		\begin{align*}
		-\frac{1}{4\pi} \int \lambda(|\nabla U^N_\rho|)\nabla U^N_\rho \cdot \nabla U^N_\phi \diff x 
		& = - \frac{1}{4\pi} \int \lambda(|\nabla U^N_\rho(y)|) \nabla U^N_\rho(y) \cdot \int \frac{y-x}{|y-x|^3} \phi(x)\diff x \diff y \\
		& = \frac{1}{4\pi} \iint  \lambda(|\nabla U^N_\rho(y)|) \nabla U^N_\rho(y) \cdot \left(\frac{x-y}{|x-y|^3} + \frac{y}{|y|^3} \right) \phi(x) \diff y \diff x \\
		& = \int U^\lambda_\rho(x) \phi(x) \diff x.
		\end{align*}		
	\end{proof}

	We prove that minimizers $\rho_0$ of $\He$ satisfy the following Euler-Lagrange equation.
	
	\begin{thm} \label{thm Euler Lagrange equation}
		Let $\rho_0\in\mathcal R_M$ be a minimizer of $\He$ over $\mathcal R_M$. Then there is an $E_0\in\R$ such that for a.e. $x\in\R^3$
		\begin{equation*}
		\rho_0(x) = \begin{cases}
		(\Psi')^{-1}(E_0 -U^M_{\rho_0}(x)), & \text{if } U^M_{\rho_0}<E_0, \\
		0, & \text{if } U^M_{\rho_0}\geq E_0.
		\end{cases}
		\end{equation*}
	\end{thm}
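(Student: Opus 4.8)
The plan is to derive the Euler--Lagrange equation by a standard first-variation argument, exploiting the machinery built in the previous lemmas. Since $\rho_0$ minimizes $\He$ over $\mathcal R_M$ subject to the mass constraint $\|\rho\|_1 = M$ and the sign constraint $\rho\geq 0$, I would consider admissible perturbations of the form $\rho_0 + \tau(\phi - c_\phi\,\chi)$ that preserve mass to first order and keep the density non-negative. Concretely, for a test function $\phi\in L^\infty(\R^3)$ with compact support, the perturbation must have total integral zero; the natural device is to subtract a fixed reference bump so that $\int(\phi - c_\phi\chi)\diff x = 0$, which is precisely the hypothesis $\int\psi\diff x = 0$ needed to invoke Proposition \ref{prop nabla UN phi is in every Lp if int phi is zero} and Lemma \ref{lemma derivative of EpotQ}. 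Because of the one-sided constraint $\rho\geq 0$, on the set $\{\rho_0>0\}$ I may perturb in both directions, whereas on $\{\rho_0 = 0\}$ only non-negative perturbations are admissible; this is what produces the two cases in the statement.

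\textbf{Computing the three derivatives.}
First I would compute $\frac{d}{d\tau}\big|_{\tau=0}$ of each of the three pieces of $\He(\rho_0+\tau\psi)$ for an admissible $\psi$ of compact support with $\int\psi\diff x = 0$. The Casimir term $\mathcal C(\rho) = \int\Psi(\rho)\diff x$ contributes $\int\Psi'(\rho_0)\psi\diff x$, using $\Psi\in C^1$ and convexity to justify differentiation under the integral (on the set where $\rho_0=0$ this requires care, since $\Psi'(0)=0$ and only one-sided perturbations are allowed). By Lemma \ref{lemma EpotN for rho in L6/5}, the Newtonian part $\Epot^N$ has derivative $\int U^N_{\rho_0}\psi\diff x$. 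The genuinely Mondian part is handled by Lemma \ref{lemma derivative of EpotQ}, which gives exactly $\int U^\lambda_{\rho_0}\psi\diff x$. Adding the three and using $U^M_{\rho_0} = U^N_{\rho_0} + U^\lambda_{\rho_0}$, the first variation of $\He$ along $\psi$ equals $\int\left(\Psi'(\rho_0) + U^M_{\rho_0}\right)\psi\diff x$.

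\textbf{Extracting the pointwise equation.}
Stationarity of $\He$ subject to the mass constraint then yields, via the Lagrange multiplier $E_0$, the inequality
\begin{equation*}
\int\left(\Psi'(\rho_0) + U^M_{\rho_0} - E_0\right)\psi\diff x \geq 0
\end{equation*}
for all admissible $\psi$, where on $\{\rho_0>0\}$ the perturbation is two-sided so the bracket must vanish a.e., giving $\Psi'(\rho_0) = E_0 - U^M_{\rho_0}$, i.e. $\rho_0 = (\Psi')^{-1}(E_0 - U^M_{\rho_0})$ there; on $\{\rho_0 = 0\}$ only $\psi\geq 0$ is allowed, so the bracket is $\geq 0$, meaning $U^M_{\rho_0}\geq E_0$ (using $\Psi'(0)=0$ and that $(\Psi')^{-1}$ is increasing). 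A standard localization argument--testing against $\psi$ supported in small balls and using that the integrand is determined a.e.--converts the integral inequality into the claimed pointwise dichotomy. The multiplier $E_0$ is identified by testing against a single mean-zero $\psi$ and solving.

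\textbf{Main obstacle.}
I expect the principal difficulty to lie not in the algebra of the first variation but in rigorously justifying the interchange of limit and integral for the three terms under the combined constraints, especially near the free boundary $\{\rho_0 = 0\}$. The Mondian term forces one to keep $\int\psi\diff x = 0$ so that Lemma \ref{lemma derivative of EpotQ} applies, and reconciling this with the need for one-sided variations on $\{\rho_0 = 0\}$ is the delicate point: one must construct admissible variations that simultaneously respect non-negativity, preserve mass to first order, and satisfy the compact-support/zero-integral hypotheses, while controlling that the difference quotients converge. Verifying that $\Psi'(\rho_0)$ is integrable against $\psi$ and that the formal derivative of $\mathcal C$ is correct where $\rho_0$ is large (using growth bound \ref{Psi of rho is large for rho large}) is the technical heart of the argument.
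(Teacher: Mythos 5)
Your proposal follows essentially the same route as the paper's proof: mass-preserving perturbations $\rho_\tau = \rho_0 + \tau\phi$ with $\phi = w - \frac{\int w \diff x}{\measure(S_\epsilon)}\,1_{S_\epsilon}$ (your ``reference bump'' is exactly the paper's indicator of $S_\epsilon = \{x\in B_{1/\epsilon} \mid \epsilon \leq \rho_0(x) \leq 1/\epsilon\}$), differentiation of the three terms via Lemma \ref{lemma derivative of EpotQ}, Lemma \ref{lemma EpotN for rho in L6/5} and dominated convergence, and one-sided variations $\tau\searrow 0$ yielding equality of $U^M_{\rho_0}+\Psi'(\rho_0)$ with the multiplier on $\{\rho_0>0\}$ and the inequality on $\{\rho_0=0\}$. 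The delicacies you flag as the main obstacle are resolved in the paper precisely by the choice of $S_\epsilon$: supporting the bump where $\epsilon\leq\rho_0\leq 1/\epsilon$ simultaneously secures non-negativity of $\rho_\tau$ for small $\tau$, supplies the dominating bound $\Psi'(1/\epsilon+\|\phi\|_\infty)|\phi|$ for the Casimir term, and identifies the multiplier concretely as the average $E_\epsilon$ over $S_\epsilon$, shown to be independent of $\epsilon$; note also that $w$ must be taken spherically symmetric so that $\rho_\tau\in\mathcal R_M$.
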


	\begin{rem}
		As the following formal calculation shows, the above Euler-Lagrange equation implies that a minimizer $\rho_0$ of $\He$ over $\mathcal R_M$ equipped with the velocity field $u_0=0$ is an equilibrium solution of the Euler equations \eqref{Euler equations}.
		
		Since the velocity field $u_0 = 0$ and both $u_0$ and $\rho_0$ are time independent, the only thing we have to verify is
		\begin{equation*}
			\nabla p + \rho_0 \nabla U^M_{\rho_0} = 0
		\end{equation*}
		where
		\begin{equation*}
			p(x) = \rho_0\Psi'(\rho_0) - \Psi(\rho_0)
		\end{equation*}
		is the corresponding equation of state. The above equation holds on the set $\{\rho_0=0\}$. On the complementary set $\{\rho_0> 0\}$ we have
		\begin{equation*}
			\nabla p = \rho_0 \Psi''(\rho_0)\nabla \rho_0 = \rho_0 \nabla \left[ \Psi'(\rho_0) \right].
		\end{equation*}
		Using the above Euler-Lagrange equation gives
		\begin{equation*}
			\nabla p = \rho_0 \nabla \left[ E_0 - U^M_{\rho_0} \right] = -\rho_0 \nabla U^M_{\rho_0}
		\end{equation*}
		as desired.
	\end{rem}
	
	\begin{proof}[Proof of Theorem \ref{thm Euler Lagrange equation}]
		Consider $\rho_0$ as a pointwise defined function. Let $\epsilon>0$ and set
		\begin{equation*}
		S_\epsilon := \{ x\in B_{1/\epsilon} | \epsilon\leq \rho_0(x)\leq 1/\epsilon\}.
		\end{equation*}
		For $\epsilon>0$ small enough
		\begin{equation*}
		\measure(S_{\epsilon}) > 0.
		\end{equation*}
		Let $w\in L^\infty(\R^3)$ be spherically symmetric and compactly supported such that
		\begin{equation} \label{equ proof Euler Lagrange equation w geq 0}
		w\geq 0 \text{ on } \{\rho = 0 \}
		\end{equation}
		and 
		\begin{equation} \label{equ proof Euler Lagrange equation w = 0}
		w \text{ vanishes on } \supp \rho_0\backslash S_\epsilon.
		\end{equation}
		Define
		\begin{equation*}
		\phi := w - \frac{\int w\diff x}{\measure(S_\epsilon)} 1_{S_\epsilon}
		\end{equation*}
		and
		\begin{equation*}
		\rho_\tau := \rho_0 + \tau \phi, \quad \tau>0.
		\end{equation*}
		Observe that $\phi\in L^\infty(\R^3)$, $\supp\phi\subset S_\epsilon \cup \supp w$ is compact and $\int\phi\diff x = 0$.
		
		We show that $\rho_\tau \in \mathcal R_M$ for $\tau>0$ small: Obviously $\rho_\tau$ is spherically symmetric. If $\tau\geq 0$ is small, then $\rho_\tau\geq 0$ due to \eqref{equ proof Euler Lagrange equation w geq 0}, \eqref{equ proof Euler Lagrange equation w = 0} and the boundedness of $\phi$ on $S_\epsilon$. Since $\int\phi\diff x = 0$, $\int\rho_\tau\diff x = M$. Further
		\begin{align*}
		\int\Psi(\rho_\tau)\diff x &\leq \int\Psi(\rho_0) \diff x + \int_{S_\epsilon} \Psi(\rho_0+\tau\phi) \diff x  + \int_{\supp w \backslash \supp \rho_0} \Psi(w) \diff x \\
		& \leq \int\Psi(\rho_0)\diff x + \Psi(1/\epsilon+\tau\|\phi\|_\infty)\measure(B_{1/\epsilon})  + \Psi(\|w\|_\infty) \measure(\supp w) < \infty.
		\end{align*}
		And last, $\rho_0\in\mathcal{R}_M$ and Lemma \ref{lemma derivative of EpotQ} imply
		\begin{align*}
		\int \left| \scriptQ(|\nabla U^N_{\rho_\tau}|) - \scriptQ(|\nabla \bar U^N|) \right| \diff x \leq  & \, \int \left| \scriptQ(|\nabla U^N_{\rho_\tau}|) - \scriptQ(|\nabla  U^N_{\rho_0}|) \right| \diff x \\
		& \, + \int \left| \scriptQ(|\nabla U^N_{\rho_0}|) - \scriptQ(|\nabla \bar U^N|) \right| \diff x < \infty.
		\end{align*}
		Thus $\rho_\tau\in \mathcal{R}_M$.
		
		Lemma \ref{lemma derivative of EpotQ} implies further
		\begin{equation*}
		\frac{1}{\tau}(\Epot^Q(\rho_\tau) - \Epot^Q(\rho_0)) \rightarrow \int U^\lambda_{\rho_0}\phi\diff x \quad \text{for }\tau\rightarrow 0.
		\end{equation*}
		Since
		\begin{equation*}
		\frac{1}{\tau} (\Psi(\rho_\tau) - \Psi(\rho_0)) \rightarrow \Psi'(\rho_0)\phi \quad \text{pointwise for }\tau\rightarrow 0
		\end{equation*}
		and
		\begin{equation*}
		\frac{1}{\tau} |\Psi(\rho_\tau)-\Psi(\rho_0)| \leq \Psi'(1/\epsilon+\|\phi\|_\infty)|\phi|, \quad 0\leq \tau \leq 1,
		\end{equation*}
		dominated convergence implies
		\begin{equation*}
		\frac{1}{\tau}\int\left(\Psi(\rho_\tau)-\Psi(\rho_0)\right)\diff x \rightarrow \int\Psi'(\rho_0)\phi \diff x \quad \text{for }\tau\rightarrow 0.
		\end{equation*}
		Since both $\rho_\tau,\rho_0 \in \mathcal{R}_M \subset L^{6/5}(\R^3)$, we have thanks to Lemma \ref{lemma EpotN for rho in L6/5}
		\begin{align*}
		\frac{1}{\tau}(\Epot^N(\rho_\tau) - & \Epot^N(\rho_0))\\
		&= - \frac{1}{2\tau}\iint \frac{\rho_\tau(x)\rho_\tau(y)}{|x-y|} \diff x \diff y + \frac{1}{2\tau}\iint \frac{\rho_0(x)\rho_0(y)}{|x-y|} \diff x \diff y \\
		& = - \iint \frac{\phi(x)\rho(y)}{|x-y|}\diff x \diff y - \frac \tau 2 \iint \frac{\phi(x)\phi(y)}{|x-y|} \diff x \diff y.
		\end{align*}
		Thus
		\begin{equation*}
		\frac{1}{\tau}(\Epot^N(\rho_\tau) - \Epot^N(\rho_0)) \rightarrow \int U^N_{\rho_0}\phi\diff x \quad\text{for }\tau\rightarrow 0.
		\end{equation*}
		Since $\rho_0,\rho_\tau\in\mathcal R_M$ and $\rho_0$ is a minimizer of $\He$ over $\mathcal R_M$, we have
		\begin{align*}
		0 & \leq \lim_{\tau\searrow 0}\frac{1}{\tau}(\He(\rho_\tau) - \He(\rho_0)) \\
		& = \int (U^M_{\rho_0} + \Psi'(\rho_0))\phi\diff x \\
		& = \int (U^M_{\rho_0} + \Psi'(\rho_0))w\diff x - \int_{S_\epsilon} (U^M_{\rho_0} + \Psi'(\rho_0)) \frac{\int w \diff x}{\measure(S_\epsilon)} \diff y \\
		& = \int \left[ U^M_{\rho_0} + \Psi'(\rho_0) - \frac{\int_{S_\epsilon}(U^M_{\rho_0} + \Psi'(\rho_0)) \diff y}{\measure(S_\epsilon)} \right] w \diff x \\
		& = \int \left[ U^M_{\rho_0} + \Psi'(\rho_0) - E_\epsilon \right] w \diff x
		\end{align*}
		with
		\begin{equation*}
		E_\epsilon := \frac{\int_{S_\epsilon}(U^M_{\rho_0} + \Psi'(\rho_0)) \diff y}{\measure(S_\epsilon)}.
		\end{equation*}
		$w$ was arbitrary. In view of \eqref{equ proof Euler Lagrange equation w geq 0} and \eqref{equ proof Euler Lagrange equation w = 0} the above inequality implies
		\begin{align*}
		& U^M_{\rho_0} + \Psi'(\rho_0) \geq E_\epsilon && \text{a.e. on }\{\rho_0 = 0\},\\
		& U^M_{\rho_0} + \Psi'(\rho_0) = E_\epsilon && \text{a.e. on }S_\epsilon.
		\end{align*}
		Since $\epsilon>0$ was arbitrary, too, $E_\epsilon=E_0$ is independent of $\epsilon$ and
		\begin{align*}
		& U^M_{\rho_0} + \Psi'(\rho_0) \geq E_0 && \text{a.e. on }\{\rho_0 = 0\},\\
		& U^M_{\rho_0} + \Psi'(\rho_0) = E_0 && \text{a.e. on }\{\rho_0>0\}.
		\end{align*}
		Since $\Psi'(0)=0$ and $\Psi'(\sigma)>0$ if $\sigma>0$, it holds for a.e. $x\in\R^3$
		\begin{equation*}
		\rho_0(x) = 0 \iff U^M_{\rho_0}(x) \geq E_0
		\end{equation*}
		and
		\begin{equation*}
		\rho_0(x) > 0 \iff U^M_{\rho_0} < E_0.
		\end{equation*}		
		If $U^M_{\rho_0} < E_0$ the convexity of $\Psi$ gives
		\begin{equation*}
		\rho_0(x) = (\Psi')^{-1}(E_0-U^M_{\rho_0}(x)).
		\end{equation*}
	\end{proof}

	Now we have proven the desired Euler-Lagrange equation for our minimizer $\rho_0$. But we do not stop here. We continue on and use this equation to find out more about the regularity of $\rho_0$.
	
	Since for a suitable density $\rho$ $\nabla U^M_\rho(x) = O(|x|^{-1})$ for $|x|\rightarrow\infty$, the potential $U^M_\rho(x)$ diverges logarithmically when $|x|\rightarrow\infty$. Combining this with the just proven Euler-Lagrange equation yields that minimizers of $\He$ are compactly supported.
	
	\begin{lem} \label{lemma minimizers have compact support}
		Let $\rho_0\in\mathcal R_M$ be a minimizer of $\He$ over $\mathcal R_M$. Then $\rho_0$ is compactly supported.
	\end{lem}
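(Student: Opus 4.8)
The strategy is to combine the Euler--Lagrange equation of Theorem \ref{thm Euler Lagrange equation} with the logarithmic growth of the Mondian potential. On the support of $\rho_0$ the Euler--Lagrange equation forces $U^M_{\rho_0} < E_0$; conversely $\rho_0$ vanishes wherever $U^M_{\rho_0} \geq E_0$. Hence it suffices to show that $U^M_{\rho_0}(x) \to +\infty$ as $|x| \to \infty$, for then there is a radius $R$ beyond which $U^M_{\rho_0} \geq E_0$ and $\rho_0$ must vanish.

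Since $\rho_0 \in \mathcal R_M$ is spherically symmetric, non-negative and lies in $L^1 \cap L^{6/5}(\R^3)$, and since the standing assumptions \ref{lambda bounded from below} and \ref{lambda Prime bounded from below} of this section supply the hypotheses of Lemma \ref{lemma Ulambda sph sym}, I may use the representation
\begin{equation*}
U^M_{\rho_0}(r) = U^M_{\rho_0}(1) + \int_1^r \left(1 + \lambda\left(\frac{M(s)}{s^2}\right)\right) \frac{M(s)}{s^2} \diff s,
\end{equation*}
where $M(s) = \int_{B_s}\rho_0 \diff x$. Because $\rho_0 \in L^1(\R^3)$ has mass $M$, monotone convergence gives $M(s) \to M$, so $M(s) \geq M/2$ and $M(s)/s^2 \to 0$ for all $s$ large.

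The decisive step is the lower bound on the Mondian contribution. As $M(s)/s^2$ is small for large $s$, assumption \ref{lambda bounded from below} applies and
\begin{equation*}
\lambda\left(\frac{M(s)}{s^2}\right) \frac{M(s)}{s^2} \geq \Lambda_1 \frac{\sqrt{M(s)}}{s} \geq \frac{\Lambda_1}{s}\sqrt{\frac{M}{2}}.
\end{equation*}
Integrating this from a large fixed radius to $r$ produces a term proportional to $\log r$, which diverges; the Newtonian part $\int_1^r M(s)/s^2 \diff s$ stays bounded since its integrand is dominated by $M/s^2$, so the divergence is a purely Mondian phenomenon. Thus $U^M_{\rho_0}(r) \to +\infty$, and choosing $R$ with $U^M_{\rho_0}(r) > E_0$ for $r > R$, Theorem \ref{thm Euler Lagrange equation} gives $\rho_0 = 0$ a.e. outside $B_R$. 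I expect the only point requiring care to be the verification that we are in the low-acceleration regime $M(s)/s^2 \to 0$ where \ref{lambda bounded from below} is in force; this is immediate from the finiteness of the total mass, so the proof is otherwise a direct reading-off from the integral representation and the Euler--Lagrange equation.
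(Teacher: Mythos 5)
Your proposal is correct and follows essentially the same route as the paper: invoking Lemma \ref{lemma Ulambda sph sym} for the radial integral representation of $U^M_{\rho_0}$, using $M(s)\geq M/2$ and \ref{lambda bounded from below} in the low-acceleration regime to obtain the logarithmic divergence $U^M_{\rho_0}(r)\geq C' + C\log r$, and concluding via the Euler--Lagrange equation of Theorem \ref{thm Euler Lagrange equation} that $\rho_0$ vanishes for large $r$. The only cosmetic difference is that you separately note the boundedness of the Newtonian part of the integrand, whereas the paper simply drops it since it is non-negative.
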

	
	\begin{proof}
		Since $\rho_0$ is spherically symmetric and $\rho_0\in \mathcal R_M\subset L^{6/5}(\R^3)$, Lemma \ref{lemma Ulambda sph sym} states that $U^M_{\rho_0}\in C^1(\R^3\backslash \{0\})$ and
		\begin{equation*}
		U^M_{\rho_0}(r) = U^M_{\rho_0}(1) + \int_1^r \left(1+\lambda\left(\frac{M(s)}{s^2}\right)\right)\frac{M(s)}{s^2} \diff s, \quad r=|x|,\, x\in\R^3\backslash\{0\}.
		\end{equation*}
		Let $R>1$ be sufficiently large such that we can use \ref{lambda bounded from below} in the following estimate and such that $M(r) \geq M/2$ for every $r>R$. Then we get
		\begin{equation*}
		U^M_{\rho_0}(r) \geq U^M_{\rho_0}(1) + \Lambda_1 \sqrt{\frac M2} \int_R^r \frac{\diff s}{s} \geq C' + C \log r, \quad r> R,
		\end{equation*}
		with $C'\in\R$ and $C>0$.
		In particular $U^M_{\rho_0}(r)>E_0$ for $r>R$ sufficiently large. Thus $\rho_0$ is compactly supported.
	\end{proof}

	Last in this section we use the Euler-Lagrange equation to prove that minimizers are continuous.
	
	\begin{lem} \label{lemma minimizers are continuous}
		Let $\rho_0\in\mathcal R_M$ be a minimizer of $\He$ over $\mathcal R_M$. Then
		\begin{equation*}
		\rho_0 \in C_c(\R^3)
		\end{equation*}
		and
		\begin{equation*}
		U^M_{\rho_0} \in C^1(\R^3).
		\end{equation*}
	\end{lem}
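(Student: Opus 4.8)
The plan is to combine the Euler--Lagrange equation (Theorem \ref{thm Euler Lagrange equation}) with a bootstrap in potential theory. Writing $\psi := (\Psi')^{-1}$, Theorem \ref{thm Euler Lagrange equation} reads $\rho_0 = \psi\big((E_0 - U^M_{\rho_0})_+\big)$ a.e.; since $\Psi$ is strictly convex with $\Psi(0)=\Psi'(0)=0$, the derivative $\Psi'$ is continuous and strictly increasing, so $\psi$ is continuous and increasing with $\psi(0)=0$. The first thing I would record is the growth bound on $\psi$ to which the footnote after \ref{Phi of f is large for f large} refers: from \ref{Psi convex} one has $\Psi(2\rho)\le 2\rho\,\Psi'(2\rho)$, while \ref{Psi of rho is large for rho large} gives $\Psi(2\rho)\ge C(2\rho)^{1+1/n}$ for $\rho$ large; hence $\Psi'(\sigma)\ge C\sigma^{1/n}$ for $\sigma$ large, and inverting, $\psi(\eta)\le C\eta^n$ for $\eta$ large. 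Since $\rho_0$ is compactly supported by Lemma \ref{lemma minimizers have compact support}, this yields the pointwise bound $\rho_0 \le C\big(1 + |U^M_{\rho_0}|^n\big)\,1_{\supp\rho_0}$ a.e.

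Next I would reduce the whole question to the Newtonian part $U^N_{\rho_0}$. As $\rho_0\in\mathcal R_M\subset L^{6/5}(\R^3)$ is spherically symmetric, Newton's shell theorem (Lemma \ref{lemma Newtons shell theorem}) gives $|\nabla U^N_{\rho_0}(x)| = M(r)/r^2$ with $M(r)=\int_{B_r}\rho_0 \le \|\rho_0\|_{6/5}\,\measure(B_r)^{1/6} \le Cr^{1/2}$. Combined with \ref{lambda bounded from above} this bounds $|\nabla U^\lambda_{\rho_0}(x)| = \lambda(M/r^2)\,M/r^2 \le \Lambda_2 M(r)^{1/2}/r \le C r^{-3/4}$, which is integrable near the origin; together with Lemma \ref{lemma Ulambda sph sym} (where $U^\lambda_{\rho_0}\in C^1(\R^3\backslash\{0\})$) this shows $U^\lambda_{\rho_0}\in L^\infty_{loc}(\R^3)$ and that it extends continuously to $0$. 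Hence on $\supp\rho_0$ we have $\rho_0 \le C\big(1+|U^N_{\rho_0}|^n\big)$, and everything reduces to controlling the regularity of $U^N_{\rho_0}$.

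Now the bootstrap. Starting from $\rho_0\in L^{1+1/n}$, I iterate Lemma \ref{lemma Newtonian potential}(a): if $\rho_0\in L^{p_k}$ with $1<p_k<3/2$, then $U^N_{\rho_0}\in L^{r_k}$ with $1/r_k = 1/p_k - 2/3$, and the pointwise bound together with the compact support gives $\rho_0\in L^{p_{k+1}}$ with $1/p_{k+1} = n\,(1/p_k - 2/3)$. This affine recursion has the repelling fixed point $1/p^\ast = 2n/(3(n-1))$, and since $n<3<5$ one checks $1/p_0 = n/(n+1) < 1/p^\ast$, so $1/p_k$ strictly decreases and leaves the interval $(2/3,1)$ after finitely many steps; that is, $\rho_0\in L^p$ for some $p>3/2$. (For $0<n<2$ this already holds with $k=0$.) Then $D^2U^N_{\rho_0}\in L^p$ (Lemma \ref{lemma Newtonian potential}(c)) and Morrey's embedding yield $U^N_{\rho_0}\in C^0(\R^3)$; with the previous paragraph $U^M_{\rho_0}$ is continuous, in particular bounded on $\supp\rho_0$, so $\rho_0\le C\big(1+\|U^M_{\rho_0}\|_{L^\infty(\supp\rho_0)}^n\big)$ is bounded. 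Taking the continuous representative $\rho_0 = \psi\big((E_0-U^M_{\rho_0})_+\big)$ and invoking Lemma \ref{lemma minimizers have compact support}, we obtain $\rho_0\in C_c(\R^3)$.

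It remains to prove $U^M_{\rho_0}\in C^1$ at the origin. With $\rho_0$ now bounded, $M(r)=4\pi\int_0^r s^2\rho_0(s)\,\diff s \le \tfrac{4\pi}{3}\|\rho_0\|_\infty r^3$, so $|\nabla U^N_{\rho_0}(x)| = M(r)/r^2 \le Cr \to 0$ and $|\nabla U^\lambda_{\rho_0}(x)| \le \Lambda_2 (M(r)/r^2)^{1/2}\to 0$ as $x\to0$. Thus $\nabla U^M_{\rho_0}$, continuous on $\R^3\backslash\{0\}$ by Lemma \ref{lemma Ulambda sph sym}, extends continuously to $0$ with value $0$; since $U^M_{\rho_0}$ is continuous, the mean value theorem shows it is differentiable at $0$ with $\nabla U^M_{\rho_0}(0)=0$ and continuous gradient there, i.e. $U^M_{\rho_0}\in C^1(\R^3)$. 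The hard part is the bootstrap in the regime $2\le n<3$: there $\rho_0\in L^{1+1/n}$ by itself does not force the Newtonian potential to be bounded at the center, and it is precisely the nonlinear Euler--Lagrange relation that drives the iteration; the delicate point is to verify that the exponent recursion escapes the bad range $(1,3/2)$ in finitely many steps.
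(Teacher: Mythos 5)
Your proposal is correct in substance but follows a genuinely different route from the paper's. The paper stays entirely within pointwise radial estimates: from $\rho_0\in L^{1+1/n}$ and H\"older it gets $M(r)\leq Cr^{3/4}$, hence $|U^N_{\rho_0}(r)|\leq Cr^{-1/4}$, and then performs a \emph{single} bootstrap step through the Euler--Lagrange equation by estimating $\int\rho_0^4\diff x \leq C(1+\int_{B_{R_0}}|x|^{-n}\diff x)<\infty$, which gives $M(r)\leq Cr^{9/4}$ and $|\nabla U^N_{\rho_0}|\leq Cr^{1/4}$, so that continuity of $\nabla U^M_{\rho_0}$ follows by integrating radial derivatives -- no Sobolev machinery at all. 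You instead iterate the $L^p$-estimates of Lemma \ref{lemma Newtonian potential}(a) with the exponent recursion $1/p_{k+1}=n(1/p_k-2/3)$ and finish with Lemma \ref{lemma Newtonian potential}(c) plus Morrey's embedding. Both proofs agree on the frame (growth bound $(\Psi')^{-1}(\eta)\leq C\eta^n$, compact support from Lemma \ref{lemma minimizers have compact support}, the separate treatment of $U^\lambda_{\rho_0}$ via Newton's shell theorem and \ref{lambda bounded from above}, and the continuous extension of the gradient to the origin). Your version has the mild advantage that the Newtonian bootstrap itself does not use spherical symmetry (only the $U^\lambda$ part does), which fits the program of Section \ref{section getting rid of spherical symmetry}; the paper's version is more elementary and shorter, since in fact one pointwise step always suffices.

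There is one concrete lacuna: the boundary case $n=2$. Your recursion requires $1<p_k<3/2$, but $p_0=1+1/n=3/2$ exactly when $n=2$, so Lemma \ref{lemma Newtonian potential}(a) is not applicable at $k=0$ (formally $1/r_0=0$), and your parenthetical ``for $0<n<2$ this already holds with $k=0$'' does not catch it either; the interval $(2/3,1)$ that your fixed-point argument monitors does not contain $u_0=2/3$. The fix is immediate with an observation you already made: start the bootstrap from $\rho_0\in\mathcal R_M\subset L^{6/5}$ rather than from $L^{1+1/n}$. Then Lemma \ref{lemma Newtonian potential}(a) with $p=6/5$ gives $U^N_{\rho_0}\in L^6$, and the pointwise bound $\rho_0\leq C(1+|U^N_{\rho_0}|^n)$ on $\supp\rho_0$ yields $\rho_0\in L^{6/n}$ with $6/n>2>3/2$ for every $0<n<3$; a single step thus covers the whole admissible range at once, the repelling-fixed-point analysis becomes unnecessary, and the remainder of your argument (Lemma \ref{lemma Newtonian potential}(c), Morrey, continuity of the representative $\psi\bigl((E_0-U^M_{\rho_0})_+\bigr)$, and the $C^1$-extension at the origin via the mean value theorem) goes through unchanged. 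For the same reason you may also wish to note that for $n$ slightly below $2$ your $p_0$ can land close to the other endpoints of the admissible ranges; starting from $L^{6/5}$ sidesteps all such boundary bookkeeping.
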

	
	\begin{proof}
		As in the proof of Lemma \ref{lemma Hr bounded from below and bounds along minimizing sequences}
		\begin{equation*}
		\rho_0\in L^{1+1/n}(\R^3).
		\end{equation*}
		Hence
		\begin{equation} \label{equ proof minimizers are continuous Mr leq r to the 3/(n+1)}
		M(r) \leq \|\rho_0\|_{1+1/n}\|1_{B_r}\|_{n+1} \leq C r^{3/(n+1)}, \quad r\geq 0.
		\end{equation}
		Since $0<n<3$, we have in particular
		\begin{equation} \label{equ proof minimizers are continuous Mr leq r to the 3/4}
		M(r) \leq Cr^{3/4}, \quad 0\leq r \leq R_0,
		\end{equation}
		where $R_0>0$ is such that
		\begin{equation*}
		\supp\rho_0 = B_{R_0}.
		\end{equation*}
		\eqref{equ proof minimizers are continuous Mr leq r to the 3/4} and \ref{lambda bounded from above} imply
		\begin{equation*}
		|\nabla U^\lambda_{\rho_0}(x)| = \lambda\left(\frac{M(r)}{r^2}\right)\frac{M(r)}{r^2} \leq \frac{\sqrt{M(r)}}{r} \leq C r^{-5/8}, \quad 0<r=|x|<R_0.
		\end{equation*}
		Thus $(U^\lambda_{\rho_0}(r))'$ is in $L^1([0,R_0])$ and hence
		\begin{equation*}
		U^\lambda_{\rho_0} \in C(B_{R_0}).
		\end{equation*}
		and with Lemma \ref{lemma Ulambda sph sym}
		\begin{equation*}
		U^\lambda_{\rho_0} \in C(\R^3).
		\end{equation*}
		\eqref{equ proof minimizers are continuous Mr leq r to the 3/(n+1)} implies
		\begin{equation*}
		|\nabla U^N_{\rho_0}(x)| = \frac{M(r)}{r^2} \leq Cr^{3/(n+1)-2}, \quad r=|x|>0.
		\end{equation*}
		
		If $0<n<2$, $(U^N_{\rho_0}(r))'\in L^1([0,R_0])$ and hence
		\begin{equation*}
		U^N_\rho\in C(\R^3).
		\end{equation*}
		Then $\rho\in C_c(\R^3)$. Since \ref{lambda Prime bounded from below} holds, $\nabla U^M_{\rho_0}\in C(\R^3)$ and $U^M_{\rho_0}\in C^1(\R^3)$.
		
		If however $2\leq n<3$, some more arguments are necessary to get the same regularity for $\rho_0$ and $U^M_{\rho_0}$. Let us employ again \eqref{equ proof minimizers are continuous Mr leq r to the 3/4}. Then we have for $0<r\leq R_0$
		\begin{equation} \label{equ proof minimizers are continuous UN rho0 leq r to the -1/4}
		|U^N_{\rho_0}(r)|\leq |U^N_{\rho_0}(R_0)| + C\int_r^{R_0}s^{-5/4}\diff s \leq C r^{-1/4}.
		\end{equation}
		Observe that due to the mean value theorem, the convexity of $\Psi$ and \ref{Psi of rho is large for rho large}, holds:
		\begin{equation*}
		\Psi'(\rho) \geq \Psi'(\tau) = \frac{\Psi(\rho)-\Psi(0)}{\rho-0} = \frac{\Psi(\rho)}{\rho} \geq C\rho^{1/n}
		\end{equation*}
		for every $\rho>0$ large with an intermediate value $0<\tau<\rho$. Thus
		\begin{equation*}
		(\Psi')^{-1}(\eta) \leq C\eta^n \quad \text{for }\eta>0 \text{ large}.
		\end{equation*}
		Thus
		\begin{align*}
		\int \rho_0^4\diff x & \leq \int_{\{\rho_0 \text{ small}\}} \rho_0^4\diff x + C\int_{\{\rho_0 \text{ large}\}}(E_0-U^\lambda_{\rho_0}-U^N_{\rho_0})^{4n} \diff x \\
		& \leq C\int\rho_0 \diff x + C\int_{B_{R_0}}(1+|U^N_{\rho_0}(x)|)^{4n} \diff x.
		\end{align*}
		Using \eqref{equ proof minimizers are continuous UN rho0 leq r to the -1/4} gives
		\begin{equation*}
		\int\rho_0^4 \diff x \leq C\left(1+\int_{B_{R_0}} |x|^{-n} \diff x \right) < \infty
		\end{equation*}
		since $0<n<3$. Hence
		\begin{equation*}
		M(r) \leq \|\rho_0\|_4 \|1_{B_r}\|_{4/3} \leq Cr^{9/4}
		\end{equation*}
		and
		\begin{equation*}
		|\nabla U^N_{\rho_0}| = \frac{M(r)}{r^2} \leq Cr^{1/4}, \quad x\in\R^3, \, r=|x|.
		\end{equation*}
		Thus
		\begin{equation*}
		\nabla U^N_{\rho_0},\,\nabla U^\lambda_{\rho_0}  \in C(\R^3).
		\end{equation*}
		Hence
		\begin{equation*}
		U^M_{\rho_0}\in C^1(\R^3)
		\end{equation*}
		and
		\begin{equation*}
		\rho_0 \in C_c(\R^3).
		\end{equation*}
		
	\end{proof}

	\subsection{Constructing a minimizer in the collisionless situation} \label{section existance of minimizer in collisionless situation}
	
	\begin{generalassumptions}
		Throughout this section we assume that \ref{lambda bounded from below} and \ref{lambda Prime bounded from below} hold.
	\end{generalassumptions}
	
	We switch from the fluid situation to the collisionless situation. But we do not leave the fluid situation completely behind. There is a deep connection between spherically symmetric, fluid and collisionless models, which was elaborated in detail in \cite{2007Rein}. Using this connection, enables us to lift a fluid model $\rho_0=\rho_0(x)$, as constructed in the previous two sections, to a collisionless model, described by a distribution function $f_0=f_0(x,v)$. This distribution function is an equilibrium solution of the collisionless Boltzmann equation \eqref{Boltzmann equation} and solves the variational problem from the introduction. First, we want to recall the variational problem from the introduction:
	
	We take an ansatz function $\Phi$, that satisfies the following assumptions.
	
	\begin{assumptionsPhi}
		$\Phi\in C^1([0,\infty))$, $\Phi(0)=\Phi'(0)=0$ and it holds:
		\begin{enumerate}[label=($\Phi$\arabic*)]
			\item $\Phi$ is strictly convex,
			\item $\Phi(f)\geq Cf^{1+1/k}$ for $f\geq 0$ large, where $0 < k < 3/2$.
		\end{enumerate}
	\end{assumptionsPhi}

	For a distribution function $f\in L^1(\R^6)$, $\geq 0$, we define the Casimir functional
	\begin{equation*}
		\mathcal C (f) := \iint \Phi(f) \diff x \diff v.
	\end{equation*}
	The density $\rho_f$ that belongs to $f$ is given by
	\begin{equation*}
		\rho_f (x) := \int f(x,v) \diff v, \quad x\in\R^3.
	\end{equation*}
	The potential energy of $f$ is the potential energy of its density, i.e.,
	\begin{equation*}
		\Epot^M(f) := \Epot^M(\rho_f) = \Epot^N(\rho_f) + \Epot^Q(\rho_f),
	\end{equation*}
	and its kinetic energy is given by
	\begin{equation*}
		\Ekin(f) := \frac{1}{2} \iint |v|^2 f(x,v) \diff x \diff v.
	\end{equation*}
	We define the functional
	\begin{equation*}
		\Hb (f) := \Epot^M(f) + \Ekin(f) + \mathcal C(f);
	\end{equation*}
	the subscript `B' indicates that we use this functional to construct a stable equilibrium solution of the collisionless Boltzmann equation. We fix a mass $M>0$ and solve the variational problem
	\begin{equation*} 
		\text{minimize } \Hb (f) \text{ s.t. } f \in \mathcal F_M
	\end{equation*}
	where
	\begin{equation*}
		\mathcal F_M := \left\{ f\in L^1(\R^6) \text{ sph. sym} \middle| f \geq 0,\, \|f\|_1=M,\, |\Epot^Q(f)|+\Ekin(f)+\mathcal{C}(f) <\infty \right\}.
	\end{equation*}
	Note that a distribution function $f$ is called spherically symmetric if for all $A\in SO(3)$
	\begin{equation*}
		f(Ax,Av) = f(x,v) \quad \text{for a.e. }x,v\in\R^3.
	\end{equation*}
	
	We want to solve the above variational problem. The basic idea is to reduce the problem, which is defined for distribution functions $f$, to a variational problem that is defined for densities $\rho$. Then we solve the reduced variational problem first and afterwards lift the solution of the reduced problem to a solution of the original problem. We reduce the functional $\Hb$ by factoring out the $v$-dependence. This we do exactly in the same manner as \cite{2007Rein}:
	
	For $r\geq 0$ define
	\begin{equation*}
	\mathcal G_r := \left\{ g\in L^1(\R^3) \text{ sph. sym} \middle| g\geq 0,\,\int g(v)\diff v = r, \, \int\left(\frac 12 |v|^2g(v)+\Phi(g(v))\right)\diff v < \infty \right\}
	\end{equation*}
	and
	\begin{equation*}
	\Psi(r) := \inf_{g\in \mathcal G_r} \int \left( \frac 12 |v|^2g(v)+\Phi(g(v)) \right) \diff v.
	\end{equation*}
	The relation between $\Phi$ and $\Psi$ arises in a natural way. More details on that can be found in \cite{2007Rein}. But the relation between $\Phi$ and $\Psi$ can also be made more explicit using Legendre transformations. This is done in Lemma 2.3. of \cite{2007Rein}. From this relation one deduces the following properties of $\Psi$:
	
	\begin{lem}
		$\Psi \in C^1([0,\infty))$, $\Psi(0)=\Psi'(0)=0$ and it holds:
		\begin{enumerate}[label=($\Psi$\arabic*)]
			\item $\Psi$ is strictly convex,
			\item $\Psi(\rho)\geq C\rho^{1+1/n}$ for $\rho> 0$ large, where $n := k + \frac{3}{2}$.
		\end{enumerate}
	\end{lem}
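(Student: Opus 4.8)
The plan is to solve the inner minimization explicitly for each fixed mass $r$ and then read off every claimed property from the resulting mass–energy relation. Since $\Phi$ is strictly convex and the constraint $\int g\,dv = r$ is linear on the convex set $\mathcal G_r$, this is a convex problem, and I would attack it pointwise through a Lagrange multiplier $E$. Minimizing $\int\bigl(\tfrac12|v|^2 g + \Phi(g) - Eg\bigr)\,dv$ over $g\ge 0$ reduces, for each fixed $v$, to minimizing $\Phi(g) - (E-\tfrac12|v|^2)g$ over $g\ge 0$; as $\phi = (\Phi')^{-1}$ is exactly the derivative of the one-sided Legendre conjugate of $\Phi$, the minimizer is
\[
g_E(v) = \phi\bigl((E - \tfrac12|v|^2)_+\bigr),
\]
extended by zero where $\tfrac12|v|^2 \ge E$. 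This $g_E$ is bounded and supported on the ball $\{|v|^2 \le 2E\}$, hence has finite kinetic energy and finite Casimir integral, so it lies in $\mathcal G_{r(E)}$ and realizes the infimum defining $\Psi$, where
\[
r(E) := \int \phi\bigl((E-\tfrac12|v|^2)_+\bigr)\,dv.
\]

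I would then study the map $E\mapsto r(E)$. Using that $\phi$ is continuous, strictly increasing and vanishes at $0$, this map is continuous and strictly increasing on $(0,\infty)$ with $r(E)\to 0$ as $E\to 0^+$, hence invertible with continuous, strictly increasing inverse $E(r)$ satisfying $E(r)\to 0$ as $r\to 0^+$. The heart of the argument is the Legendre-duality identity $\Psi'(r) = E(r)$, which I would obtain by differentiating $\Psi(r)=\int\bigl(\tfrac12|v|^2 g_{E(r)}+\Phi(g_{E(r)})\bigr)\,dv$ under the integral sign: on the support one has $\tfrac12|v|^2 + \Phi'(g_{E}) = E$, the free-boundary contribution vanishes because $g_E=0$ on $\partial\{|v|^2<2E\}$, and $\int \partial_r g_{E(r)}\,dv = \partial_r r = 1$. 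From $\Psi' = E(\cdot)$ being continuous and strictly increasing I read off $\Psi\in C^1([0,\infty))$, strict convexity, and $\Psi'(0)=0$; together with $\Psi(0)=0$ (only $g\equiv0$ is admissible for $r=0$) this settles the first block of assertions.

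For the growth bound I would feed in $\phi(\eta)\le C\eta^k$ for large $\eta$ (already derived from the assumptions on $\Phi$). Rescaling $v=\sqrt{2E}\,w$ in $r(E)$ and separating off the bounded contribution from the region where $E-\tfrac12|v|^2$ is small gives
\[
r(E) \le C\,E^{k}(2E)^{3/2}\!\int_{|w|<1}(1-|w|^2)^k\,dw + C\,E^{3/2} \le C'\,E^{k+3/2} = C'E^{n}
\]
for $E$ large, with $n=k+\tfrac32$. Inverting yields $E(r)\ge c\,r^{1/n}$ for large $r$, so $\Psi(r)=\int_0^r E(s)\,ds \ge c'\,r^{1+1/n}$, which is the claimed growth estimate.

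I expect the main obstacle to be the rigorous justification of the identity $\Psi'(r)=E(r)$ — in particular controlling the moving boundary $\{|v|^2=2E\}$ of $\supp g_E$ and legitimising the interchange of $\partial_r$ with the $v$-integration — because the $C^1$ regularity of $\Psi$, and with it the strict convexity, hinge entirely on this step. Everything else reduces to elementary monotonicity and scaling estimates, which is presumably why the paper can defer the details to the Legendre-transform computation in \cite{2007Rein}.
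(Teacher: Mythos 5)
Your proposal is correct and takes essentially the same route as the paper, which simply defers to the Legendre-transform computation of Lemma 2.3 in Rein (2007): your Lagrange-multiplier minimizer $g_E(v)=\phi\bigl((E-\tfrac12|v|^2)_+\bigr)$, the duality identity $\Psi'(r)=E(r)$, and the scaling bound $r(E)\leq CE^{k+3/2}$ are precisely the ingredients of that proof. The one step you flag as delicate dissolves under convex duality: writing $\Psi(r)=\sup_{E}\bigl(Er-\bar G(E)\bigr)$ with $\bar G(E)=\int \Phi^*\bigl((E-\tfrac12|v|^2)_+\bigr)\diff v$ gives $\Psi=\bar G^*$ with $\bar G\in C^1$ strictly convex and $\bar G'=r(\cdot)$, so $\Psi'=(\bar G')^{-1}=E(\cdot)$ follows from standard conjugacy (or from the elementary sandwich $E(r')(r-r')\leq\Psi(r)-\Psi(r')\leq E(r)(r-r')$), with no differentiation across the moving boundary of $\supp g_E$ required.
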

	
	\begin{proof}
		The proof of this lemma is identical with the proof of Lemma 2.3. in \cite{2007Rein}.
	\end{proof}
	
	In $\Psi$ we have hidden the information about the $v$-dependent terms $\Ekin(f)$ and $\mathcal C(f)$ appearing in the definition of $\Hb$. This way we can reduce the variational problem to minimize $\Hb$ over $\mathcal F_M$ to the following problem:
	\begin{equation*}
		\text{minimize } \He(\rho) = \Epot^M(\rho) + \int \Psi(\rho) \diff x \text{ s.t. } \rho \in \mathcal R_M,
	\end{equation*}
	where $\mathcal R_M$ is as in Section \ref{section existance minimizer pressure supported}. Since $\Psi$ satisfies all assumptions that we used in the Sections \ref{section existance minimizer pressure supported} and \ref{section Euler Lagrange equation pressure supported}, we know that there exists a minimizer $\rho_0\in\mathcal R_M$ of $\He$ over $\mathcal R_M$, which satisfies in particular the Euler-Lagrange equation
	\begin{equation*}
	\rho_0(x) = \begin{cases}
	(\Psi')^{-1}(E_0 -U^M_{\rho_0}(x)), & \text{if } U^M_{\rho_0}<E_0, \\
	0, & \text{if } U^M_{\rho_0}\geq E_0,
	\end{cases}
	\end{equation*}
	for an $E_0\in\R$. Making use of this equation we can lift $\rho_0$ to a minimizer $f_0\in\mathcal F_M$ of the functional $\Hb$. This is the goal of this section and this we do in the following theorem.
	
	\begin{thm} \label{thm f0 is minimizer of HC and definition of f0}
		For every $f\in\mathcal F_M$
		\begin{equation*}
		\Hb(f) \geq \He(\rho_f).
		\end{equation*}
		Let $\rho_0\in\mathcal R_M$ be a minimizer of $\He$ over $\mathcal R_M$ and set
		\begin{equation*}
		f_0(x) := \begin{cases}
		(\Phi')^{-1}(E_0 -E), & \text{if } E<E_0, \\
		0, & \text{if } E\geq E_0,
		\end{cases}
		\end{equation*}
		with $E(x,v)=|v|^2/2 + U^M_{\rho_0}(x)$, $x,v\in\R^3$. Then $\rho_{f_0} = \rho_0$,
		\begin{equation*}
		\Hb(f_0) = \He(\rho_0)
		\end{equation*}
		and $f_0\in\mathcal F_M$ is a minimizer of $\Hb$ over $\mathcal F_M$. This map between minimizers of $\He$ and $\Hb$ is one-to-one and onto. We have further
		\begin{equation*}
		f_0\in C_c(\R^6).
		\end{equation*}
	\end{thm}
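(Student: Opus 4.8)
The plan is to exploit that the Mondian potential energy $\Epot^M(f) = \Epot^M(\rho_f)$ depends on $f$ only through its spatial density $\rho_f$, so that $\Hb$ splits into a purely $x$-dependent piece and a fibrewise integral in $v$. First I would write, for $f\in\mathcal F_M$,
\[
\Hb(f) = \Epot^M(\rho_f) + \iint \left( \frac{1}{2}|v|^2 f(x,v) + \Phi(f(x,v)) \right) \diff v \, \diff x .
\]
For a.e.\ fixed $x$ the slice $f(x,\cdot)$ lies in $\mathcal G_{\rho_f(x)}$, so by the very definition of $\Psi$ the inner integral is bounded below by $\Psi(\rho_f(x))$. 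Integrating in $x$ gives $\Hb(f)\geq \Epot^M(\rho_f) + \int\Psi(\rho_f)\diff x = \He(\rho_f)$, which is the first assertion. The same estimate shows $\int\Psi(\rho_f)\diff x \leq \Ekin(f)+\mathcal C(f) <\infty$; together with $\Epot^Q(f)=\Epot^Q(\rho_f)$ this yields $\rho_f\in\mathcal R_M$, hence $\He(\rho_f)\geq \min_{\mathcal R_M}\He$.

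Next I would show that the lift $f_0$ realises equality in the slice estimate. The key input is the Legendre-duality relation between $\Phi$ and $\Psi$ recorded in the preceding lemma (Rein's Lemma 2.3): for each $r>0$ the infimum defining $\Psi(r)$ is attained at the unique (by strict convexity of $\Phi$) slice $g_r(v) = (\Phi')^{-1}((\mu - |v|^2/2)_+)$, where the multiplier satisfies $\mu = \Psi'(r)$. Evaluating at $r=\rho_0(x)>0$ and invoking the Euler--Lagrange equation $\Psi'(\rho_0(x)) = E_0 - U^M_{\rho_0}(x)$, the slice $f_0(x,\cdot)$ coincides with $g_{\rho_0(x)}$. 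Therefore $\int f_0(x,v)\diff v = \rho_0(x)$, so $\rho_{f_0}=\rho_0$ (the identity is trivial on $\{\rho_0=0\}=\{U^M_{\rho_0}\geq E_0\}$, where $f_0$ vanishes), and the inner integral equals $\Psi(\rho_0(x))$. Integrating and using $\Epot^M(f_0)=\Epot^M(\rho_0)$ gives $\Hb(f_0)=\He(\rho_0)$, and in passing that $f_0\in\mathcal F_M$.

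Combining the two paragraphs, for any $f\in\mathcal F_M$ one has $\Hb(f)\geq\He(\rho_f)\geq\min_{\mathcal R_M}\He = \He(\rho_0)=\Hb(f_0)$, so $f_0$ minimises $\Hb$ and $\min\Hb=\min\He$. For the bijection I would argue that if $f$ is any minimiser of $\Hb$, then every inequality in $\min\He\leq\He(\rho_f)\leq\Hb(f)=\min\Hb=\min\He$ is an equality. Thus $\rho_f$ minimises $\He$, and $\Hb(f)=\He(\rho_f)$ forces, for a.e.\ $x$, that $f(x,\cdot)$ attains the infimum defining $\Psi(\rho_f(x))$; by the uniqueness from strict convexity this slice is exactly the lift of $\rho_f$. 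Hence $\rho_0\mapsto f_0$ is onto the $\Hb$-minimisers, it is injective since $\rho_{f_0}=\rho_0$ recovers $\rho_0$, and its inverse is $f\mapsto\rho_f$.

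Finally, for regularity, Lemma \ref{lemma minimizers are continuous} gives $\rho_0\in C_c(\R^3)$ and $U^M_{\rho_0}\in C^1(\R^3)$, so $E(x,v)=|v|^2/2+U^M_{\rho_0}(x)$ is continuous; since $\Phi\in C^1$ is strictly convex with $\Phi'(0)=0$, the map $s\mapsto(\Phi')^{-1}(s_+)$ is continuous and vanishes at $s=0$, whence $f_0=(\Phi')^{-1}((E_0-E)_+)$ is continuous. Its support lies in $\{E\leq E_0\}$; as $\rho_0$ is compactly supported (Lemma \ref{lemma minimizers have compact support}) with $U^M_{\rho_0}\geq E_0$ outside its support and $E_0-U^M_{\rho_0}$ bounded on it, the admissible $|v|$ are bounded, giving $f_0\in C_c(\R^6)$. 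I expect the main obstacle to be the rigorous justification of the slice-wise reduction: namely the duality identity $\mu=\Psi'(r)$ and the measurable-selection and Fubini bookkeeping needed to treat $f(x,\cdot)$ as an element of $\mathcal G_{\rho_f(x)}$ for a.e.\ $x$ and to pass the one-dimensional minimisation through the $x$-integral. Once these are in place, the remaining steps are essentially formal.
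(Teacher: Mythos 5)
Your proposal is correct and follows essentially the same route as the paper: the paper simply outsources the reduction inequality $\Hb(f)\geq\He(\rho_f)$, the lifting identity $\Hb(f_0)=\He(\rho_0)$ with $\rho_{f_0}=\rho_0$, and the one-to-one correspondence to citations of Rein (2007) (inequality (2.11), Theorem 2.2, Lemma 2.3), which are exactly the slice-wise minimization and Legendre-duality arguments you reconstruct, and your continuity and compact-support argument for $f_0$ matches the paper's verbatim. The only cosmetic imprecision is that a slice $f(x,\cdot)$ of a spherically symmetric $f$ need not itself be spherically symmetric in $v$, but since the unconstrained infimum defining $\Psi(r)$ is attained by the isotropic function $g_r$, the lower bound and the uniqueness step are unaffected.
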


	\begin{rem}
		Since $f_0$ is a function of the local energy $E$, $f_0$ is constant along solutions of
		\begin{align*}
			\dot{x} & = v, \\
			\dot{v} & = -\nabla_x U^M_{\rho_0}(t,x). \nonumber
		\end{align*}
		Thus $f_0$ is an equilibrium solution of the collisionless Boltzmann equation \eqref{Boltzmann equation} in the sense of \eqref{characteristic system}.
	\end{rem}
	
	\begin{proof}[Proof of Theorem \ref{thm f0 is minimizer of HC and definition of f0}]
		Inequality (2.11) from \cite{2007Rein} states that for every $f\in\mathcal F_M$
		\begin{equation*}
		\mathcal C(f) + \Ekin(f) \geq \int \Psi(\rho_f) \diff x.
		\end{equation*}
		Thus $\rho_f\in \mathcal R_M$ and 
		\begin{equation} \label{equ proof EL equ f0 HC geq Hr}
		\Hb(f) \geq \He(\rho_f).
		\end{equation}
		With exactly the same proof as for Theorem 2.2. in \cite{2007Rein} if follows that
		\begin{equation} \label{equ proof EL equ f0 HC(f0) = Hr(rho0)}
		\Hb(f_0) = \He(\rho_0)
		\end{equation}
		and
		\begin{equation*}
		\rho_{f_0} = \rho_0;
		\end{equation*}
		one has only to replace $U$ -- which represents the Newtonian potential in \cite{2007Rein} -- by $U^M$ everywhere. \eqref{equ proof EL equ f0 HC geq Hr}, \eqref{equ proof EL equ f0 HC(f0) = Hr(rho0)} and that $\rho_0$ is a minimizer of $\He$ over $\mathcal R_M$ imply that $f_0$ is a minimizer of $\Hb$ over $\mathcal F_M$. Analog to \cite[Lemma 2.3]{2007Rein} this map between minimizers of $\He$ and $\Hb$ is one-to-one and onto with the inverse map
		\begin{equation*}
			\mathcal F_M \ni f_0 \mapsto \rho_0 := \rho_{f_0} \in \mathcal R_M.
		\end{equation*}
		
		By Lemma \ref{lemma minimizers are continuous} $U^M_{\rho_0}\in C^1(\R^3)$. Thus $f_0\in C(\R^6)$. By Lemma \ref{lemma minimizers have compact support}
		\begin{equation*}
		\supp \rho_0 = B_{R_0}
		\end{equation*}
		for an $R_0>0$. Thus there is a $C_0>0$ such that
		\begin{equation*}
		|U^M_{\rho_0}| \leq C_0 \text{ on } \supp \rho_0.
		\end{equation*}
		By the definition of $f_0$, for all $(x,v)\in\supp f_0$ holds 
		\begin{equation*}
		|v|\leq \sqrt{2(E_0-U^M_{\rho_0}(x))} \leq \sqrt{2(|E_0|+C_0)} =: R_1.
		\end{equation*}
		Hence
		\begin{equation*}
		\supp f_0 \subset B_{R_0} \times B_{R_1}
		\end{equation*}
		is compact.		
	\end{proof}

	\section{Stability} \label{section stability}
	
	\begin{generalassumptions}
		Throughout Section \ref{section stability} we assume that \ref{lambda bounded from below} and \ref{lambda Prime bounded from below} hold.
	\end{generalassumptions}
	
	\subsection{Stability in the collisionless situation} \label{section stability collisionless}
	
	To study stability, the first thing we need is a proper notion of time dependent solutions. Reasonable, time dependent solutions $f=f(t,x,v)$ of the collisionless Boltzmann equation \eqref{Boltzmann equation} should have the following properties:
	
	\begin{enumerate}[label=($f$\arabic*)]
		\item \label{reasonable solution f geq 0, sph sym, comp supp} If the initial data $f(0)$ is non-negative, spherically symmetric and compactly supported, then the solution $f(t)$ remains non-negative, spherically symmetric and compactly supported for all times $t>0$.
		\item \label{reasonable solution f preserves Lp norm} The solution preserves $L^p$-norms, i.e., for every $1\leq p \leq \infty$ and $t>0$
		\begin{equation*}
		\| f(t) \|_p = \| f(0) \|_p.
		\end{equation*}
		\item \label{reasonable solution f preserves energy} The solution preserves energy, i.e., for all $t>0$
		\begin{equation*}
		\Epot^M(f(t)) + \Ekin(f(t)) = \Epot^M(f(0)) + \Ekin(f(0)).
		\end{equation*}
		\item \label{reasonable solution f preserves Casimir functional} The flow corresponding to the collisionless Boltzmann equation conserves phase space volume and as a consequence for all measurable functions $\Phi:[0,\infty) \rightarrow [0,\infty)$ the following conservation law holds:
		\begin{equation*}
		\iint \Phi( f(t) ) \diff x \diff v = \iint \Phi ( f(0) ) \diff x \diff v
		\end{equation*}
	\end{enumerate}
	\begin{defn}
		We call a distribution function $f=f(t,x,v)$ that solves the collisionless Boltzmann equation \eqref{Boltzmann equation} in some sense a \textit{reasonable} solution if it satisfies \ref{reasonable solution f geq 0, sph sym, comp supp} to \ref{reasonable solution f preserves Casimir functional}.
	\end{defn}
	
	In her master thesis, Carina Keller has recently proven that such reasonable solutions of the collisionless Boltzmann equation coupled with the Mondian field equations exist. Given that the initial data is spherically symmetric and reasonably well behaved, she has also proven that these solutions exist for all times $t>0$. This result is under preparation for being published.
	
	The next thing we need, is a suitable tool to measure the distance between $f_0$ and another distribution function $f\in\mathcal F_M$. A first order Taylor expansion of $\Epot^M$ under the integral sign gives for $f\in\mathcal F_M$
	\begin{equation*}
	\Hb(f) - \Hb(f_0) = d(f,f_0) + \text{remainder terms}
	\end{equation*}
	where
	\begin{equation*}
	d(f,f_0) := \iint \left( \Phi(f) - \Phi(f_0) + E(f-f_0) \right) \diff x \diff v;
	\end{equation*}
	as before $E=|v|^2/2 + U^M_{f_0}(x)$, $x,v\in\R^3$. In the sequential, we use the notation
	\begin{equation*}
		 U_0^M :=  U_{f_0}^M.
	\end{equation*}
	For $d$ the following statements holds:
	
	\begin{lem} \label{lemma d is approriate to measure distances}
		For every $f\in\mathcal F_M$ $d(f,f_0)\geq 0$ and $d(f,f_0)=0$ if and only if $f=f_0$.
	\end{lem}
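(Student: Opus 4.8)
The plan is to exploit the strict convexity of $\Phi$ (assumption \ref{Phi convex}) together with the Euler--Lagrange equation for $f_0$ established in Theorem \ref{thm f0 is minimizer of HC and definition of f0}, which states that $f_0 = (\Phi')^{-1}(E_0 - E)$ on $\{E < E_0\}$ and $f_0 = 0$ on $\{E \geq E_0\}$; equivalently $\Phi'(f_0) = E_0 - E$ wherever $f_0 > 0$. The key observation is that although the local energy $E$ has no definite sign, the shifted quantity $E - E_0$ does: it is negative exactly on the support of $f_0$ and nonnegative off it. Since both $f$ and $f_0$ have mass $M$, I would subtract the vanishing term $E_0\iint(f - f_0)\,\diff x\,\diff v = 0$ and rewrite
\[
	d(f,f_0) = \iint \bigl[\Phi(f) - \Phi(f_0) + (E - E_0)(f - f_0)\bigr]\,\diff x\,\diff v,
\]
reducing the claim to pointwise nonnegativity of the bracketed integrand.

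First I would treat the two regions separately. On $\{E < E_0\}$ we have $f_0 > 0$ and $E - E_0 = -\Phi'(f_0)$, so the integrand becomes $\Phi(f) - \Phi(f_0) - \Phi'(f_0)(f - f_0)$, which is $\geq 0$ by convexity of $\Phi$, with equality if and only if $f = f_0$ by strict convexity. On $\{E \geq E_0\}$ we have $f_0 = 0$, so the integrand is $\Phi(f) + (E - E_0)f$; here $\Phi(f) \geq 0$ (from $\Phi(0) = \Phi'(0) = 0$ and convexity), $E - E_0 \geq 0$ and $f \geq 0$, so the integrand is again $\geq 0$, and it vanishes only if $\Phi(f) = 0$, which forces $f = 0 = f_0$ since strict convexity gives $\Phi(s) > 0$ for $s > 0$. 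Thus the integrand is nonnegative a.e.\ and $d(f,f_0) \geq 0$, while $d(f,f_0) = 0$ forces $f = f_0$ a.e.\ on $\{E \neq E_0\}$; since $\{E = E_0\}$ is a null set (for fixed $x$ it is a sphere in $v$, so by Fubini it has measure zero), one obtains $f = f_0$ a.e. The converse implication is immediate.

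The main technical care lies in justifying the rewriting without running into an $\infty - \infty$ ambiguity. I would argue from the pointwise identity $\Phi(f) - \Phi(f_0) + E(f - f_0) = [\Phi(f) - \Phi(f_0) + (E-E_0)(f-f_0)] + E_0(f - f_0)$, noting that $E_0 f$ and $E_0 f_0$ are separately integrable (each with integral $E_0 M$), so the term $E_0(f - f_0)$ integrates to $0$ unambiguously, whereas the bracketed term has a nonnegative integrand and hence a well-defined integral in $[0,\infty]$; this simultaneously shows that $d(f,f_0)$ itself is well-defined in $[0,\infty]$. The only further point requiring attention is that $f_0$ must be used in its pointwise representative from the Euler--Lagrange equation, so that the relation $\Phi'(f_0) = E_0 - E$ holds everywhere on $\{E < E_0\}$ rather than merely a.e.; this is legitimate because $U_0^M \in C^1(\R^3)$ (Lemma \ref{lemma minimizers are continuous}), whence $E$ is continuous.
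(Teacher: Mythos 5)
Your proof is correct and takes essentially the same route as the paper: both use the mass constraint to insert the vanishing term $E_0\iint(f-f_0)\,\diff x\,\diff v$, then combine the convexity inequality $\Phi(f)-\Phi(f_0)\geq\Phi'(f_0)(f-f_0)$ (strict off $\{f=f_0\}$) with the defining relation $\Phi'(f_0)=E_0-E$ on $\{f_0>0\}$ and the sign of $(E-E_0)f\geq 0$ on $\{f_0=0\}$. Your extra care about well-definedness in $[0,\infty]$ and the explicit region split merely spell out what the paper leaves tacit, and your null-set argument for $\{E=E_0\}$ is in fact unnecessary, since there $f_0=0$ and the integrand reduces to $\Phi(f)$, which already forces $f=0$ by strict convexity.
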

	
	\begin{proof}
		Since $\iint f\diff x\diff v = \iint f_0\diff x\diff v = M$ and $\Phi$ is convex,
		\begin{align*}
		d(f,f_0) & = \iint \left( \Phi(f) - \Phi(f_0) + (E-E_0)(f-f_0) \right) \diff x\diff v\\
		& \geq \iint \left[ \Phi'(f_0) + E - E_0 \right] (f-f_0) \diff x \diff v.
		\end{align*}
		Due to the definition of $f_0$ the term in the brackets vanishes if $f_0>0$. Hence
		\begin{equation*}
		d(f,f_0) \geq 0.
		\end{equation*}
		Moreover, since $\Phi$ is strictly convex, there is for every $f\in\mathcal F_M$ with $f\neq f_0$ a set of positive measure where
		\begin{equation*}
		\Phi(f)-\Phi(f_0) > \Phi'(f_0)(f-f_0).
		\end{equation*}
		Hence $d(f,f_0) = 0$ if and only if $f=f_0$.
	\end{proof}
	
	Thus $d$ is an appropriate tool to measure distances between $f_0$ and $f\in\mathcal F_M$. We have to take care of the remainder terms. Since
	\begin{equation*}
	\Epot^N(f) = -\frac{1}{8\pi} \int |\nabla U^N_f|^2 \diff x = - \frac{1}{2} \iint \frac{\rho_f(x)\rho_f(y)}{|x-y|}\diff x \diff y
	\end{equation*}
	is quadratic in $f$, its Taylor expansion stops after the second term and the corresponding remainder term is simple. For $\Epot^Q$ the Taylor expansion does not stop and we have to estimate the corresponding remainder term using Lemma \ref{lemma Q differentiable}. This leads to the following lemma.
	
	\begin{lem} \label{lemma Taylor of HC}
		Let $f\in \mathcal F_M\cap L^\infty(\R^6)$ be compactly supported, then
		\begin{equation*}
		\left| \Hb(f) - \Hb(f_0) - d(f,f_0) \right| \leq C\left( \|\nabla U^N_f -\nabla U^N_0 \|_2^2 + \|\nabla U^N_f - \nabla U^N_0\|_{3/2}^{3/2} \right).
		\end{equation*}
	\end{lem}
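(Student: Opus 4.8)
The plan is to expand the left-hand side, cancel the kinetic and Casimir contributions, and reduce everything to a remainder living in the potential energy alone. Writing out the definitions of $\Hb$ and $d$, the Casimir difference $\mathcal C(f)-\mathcal C(f_0)=\iint(\Phi(f)-\Phi(f_0))\diff x\diff v$ cancels against the corresponding piece of $d$, while the term $\iint E(f-f_0)\diff x\diff v$ splits as $\Ekin(f)-\Ekin(f_0)+\int U^M_0(\rho_f-\rho_0)\diff x$, so that the kinetic difference cancels too. Setting $\sigma:=\rho_f-\rho_0$ --- which lies in $L^\infty(\R^3)$, is compactly supported, and satisfies $\int\sigma\diff x=0$, since $\rho_0\in C_c(\R^3)$ by Lemma \ref{lemma minimizers are continuous} and $\rho_f$ inherits compact support and boundedness from $f$ --- one is left with
\begin{equation*}
\Hb(f)-\Hb(f_0)-d(f,f_0)=\big[\Epot^N(\rho_f)-\Epot^N(\rho_0)\big]+\big[\Epot^Q(\rho_f)-\Epot^Q(\rho_0)\big]-\int U^M_0\,\sigma\diff x,
\end{equation*}
and I would treat the Newtonian and Mondian parts separately after splitting $U^M_0=U^N_{\rho_0}+U^\lambda_{\rho_0}$.

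For the Newtonian part I would use that $\nabla U^N_{\rho_f}=\nabla U^N_0+\nabla U^N_\sigma$ and expand the square. The linear cross term, by Lemma \ref{lemma EpotN for rho in L6/5}, equals exactly $\int U^N_{\rho_0}\sigma\diff x$, which cancels the Newtonian piece of $\int U^M_0\sigma\diff x$, leaving precisely $-\tfrac{1}{8\pi}\|\nabla U^N_f-\nabla U^N_0\|_2^2$. This is already of the desired form and produces the first term on the right-hand side.

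For the Mondian part I would invoke the pointwise first-order Taylor estimate of Lemma \ref{lemma Q differentiable} with $u=\nabla U^N_{\rho_f}(x)$ and $v=\nabla U^N_{\rho_0}(x)$, so that $u-v=\nabla U^N_\sigma=\nabla U^N_f-\nabla U^N_0$ pointwise. Integrating the estimate over $\R^3$ bounds the error between $\Epot^Q(\rho_f)-\Epot^Q(\rho_0)$ and the linear term $-\tfrac1{4\pi}\int\lambda(|\nabla U^N_{\rho_0}|)\nabla U^N_{\rho_0}\cdot(\nabla U^N_f-\nabla U^N_0)\diff x$ by $C\int|\nabla U^N_f-\nabla U^N_0|^{3/2}\diff x=C\|\nabla U^N_f-\nabla U^N_0\|_{3/2}^{3/2}$; this integral is finite by Proposition \ref{prop nabla UN phi is in every Lp if int phi is zero} applied to $\sigma$. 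Finally, the Fubini identity established at the end of the proof of Lemma \ref{lemma derivative of EpotQ}, applied with $\rho=\rho_0$ and $\phi=\sigma$, identifies this linear term with $\int U^\lambda_{\rho_0}\sigma\diff x$, cancelling the remaining Mondian piece of $\int U^M_0\sigma\diff x$. Adding the two estimates yields the claim.

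The main obstacle, and the only genuinely delicate point, is the Mondian remainder: I must confirm that the hypotheses of Lemma \ref{lemma derivative of EpotQ} and Proposition \ref{prop nabla UN phi is in every Lp if int phi is zero} are met by $\sigma$ (compact support, boundedness, vanishing integral), so that the pointwise Taylor estimate integrates to a finite $L^{3/2}$ quantity and the Fubini interchange identifying the linear term with $\int U^\lambda_{\rho_0}\sigma\diff x$ is legitimate. Once these regularity bookkeeping points are in place, the Newtonian expansion and the cancellations are routine by comparison.
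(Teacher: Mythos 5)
Your proposal is correct and follows essentially the same route as the paper's proof: the same reduction of $\Hb(f)-\Hb(f_0)-d(f,f_0)$ to $\Epot^M(f)-\Epot^M(f_0)-\int U^M_0(\rho_f-\rho_0)\diff x$, the exact quadratic expansion of $\Epot^N$ via Lemma \ref{lemma EpotN for rho in L6/5}, the pointwise Taylor estimate of Lemma \ref{lemma Q differentiable} for the Mondian remainder, and the identification of the linear term with $\int U^\lambda_0(\rho_f-\rho_0)\diff x$ via the Fubini computation from Lemma \ref{lemma derivative of EpotQ}, with Proposition \ref{prop nabla UN phi is in every Lp if int phi is zero} guaranteeing the finiteness of the $L^{3/2}$ norm. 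Your explicit verification that $\sigma=\rho_f-\rho_0$ is bounded, compactly supported and of zero integral is exactly the bookkeeping the paper invokes implicitly with ``as in the proof of Lemma \ref{lemma derivative of EpotQ}''.
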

	
	\begin{proof}
		We have
		\begin{equation} \label{equ proof Taylor of HC complete expansion}
		\Hb(f) - \Hb(f_0) - d(f,f_0) = \Epot^M(f) - \Epot^M(f_0) - \iint U^M_0(f-f_0) \diff x \diff v.
		\end{equation}
		A second order Taylor expansion under the integral sign gives
		\begin{align*}
		\Epot^N(f)  &- \Epot^N(f_0)= - \frac{1}{8\pi} \int \left(|\nabla U^N_f|^2 - |\nabla U_0|^2 \right)\diff x \\
		& = -\frac{1}{4\pi} \int\nabla U^N_0\cdot(\nabla U^N_f-\nabla U^N_0)\diff x - \frac{1}{8\pi}\int |\nabla U^N_f - \nabla U^N_0|^2 \diff x.
		\end{align*}
		Since $\rho_f,\rho_0\in\mathcal R_M\subset L^{6/5}(\R^3)$, Lemma \ref{lemma EpotN for rho in L6/5} gives
		\begin{equation*}
		-\frac{1}{4\pi} \int \nabla U^N_0 \cdot (\nabla U^N_f - \nabla U^N_0) \diff x = \iint U^N_0(f-f_0) \diff x \diff v.
		\end{equation*}
		Thus we have
		\begin{equation} \label{equ proof Taylor of HC formula for EpotN}
		\Epot^N(f) - \Epot^N(f_0) - \iint U^N_0(f-f_0) \diff x\diff v = -\frac{1}{8\pi} \|\nabla U^N_f - \nabla U^N_0\|_2^2.
		\end{equation}
		Further Lemma \ref{lemma Q differentiable} implies
		\begin{align*}
		\left| \Epot^Q(f) - \Epot^Q(f_0) + \frac{1}{4\pi}\int \lambda(|\nabla U^N_0|)\nabla U^N_0\cdot (\nabla U^N_f - \nabla U^N_0) \diff x \right| \leq C\|\nabla U^N_f - \nabla U^N_0\|_{3/2}^{3/2}.
		\end{align*}
		Since $\supp \rho_f, \supp \rho_0$ are compact, $\|\rho_f\|_\infty,\|\rho_0\|_\infty<\infty$ and $\int \rho_f\diff x = \int \rho_0\diff x =M$, we get as in the proof of Lemma \ref{lemma derivative of EpotQ}
		\begin{equation*}
		\frac{1}{4\pi}\int\lambda(|\nabla U^N_0|)\nabla U^N_0\cdot (\nabla U^N_f - \nabla U^N_0) \diff x = -\iint U^\lambda_0(f-f_0) \diff x \diff v.
		\end{equation*}
		Thus
		\begin{equation} \label{equ proof Taylor of HC formula for EpotQ}
		\left|\Epot^Q(f) - \Epot^Q(f_0) - \iint U^\lambda_0(f-f_0) \diff x\diff v \right|\leq C \|\nabla U^N_f - \nabla U^N_0\|_{3/2}^{3/2}.
		\end{equation}
		Taking \eqref{equ proof Taylor of HC complete expansion}, \eqref{equ proof Taylor of HC formula for EpotN} and \eqref{equ proof Taylor of HC formula for EpotQ} together implies
		\begin{equation*}
		\left| \Hb(f) - \Hb(f_0) - d(f,f_0) \right| \leq C\left( \|\nabla U^N_f -\nabla U^N_0 \|_2^2 + \|\nabla U^N_f - \nabla U^N_0\|_{3/2}^{3/2} \right).
		\end{equation*}		
	\end{proof}	
	
	Now we have everything we need to prove the following stability result.
	
	\begin{thm} \label{thm stability}
		Assume that the minimizer $f_0\in\mathcal F_M$ of $\Hb$ over $\mathcal F_M$ from Theorem \ref{thm f0 is minimizer of HC and definition of f0} is unique. Then for every $\epsilon>0$ there is a $\delta>0$ such that for every reasonable solution $f$ of the collisionless Boltzmann equation with $f(0)\in \mathcal F_M\cap L^\infty(\R^6)$, $f(0)$ compactly supported and 
		\begin{equation*}
		d(f(0),f_0) + \|\nabla U^N_{f(0)} - \nabla U^N_0\|_2 + \| \nabla U^N_{f(0)} - \nabla U^N_0\|_{3/2} < \delta
		\end{equation*}
		holds
		\begin{equation*}
		d(f(t),f_0) + \|\nabla U^N_{f(t)} - \nabla U^N_0\|_2 + \| \nabla U^N_{f(t)} - \nabla U^N_0\|_{3/2} < \epsilon, \quad 0\leq t < \infty.
		\end{equation*}
		
	\end{thm}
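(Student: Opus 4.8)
The plan is to argue by contradiction using the energy--Casimir method. Write
\begin{equation*}
D(f) := d(f,f_0) + \|\nabla U^N_f - \nabla U^N_0\|_2 + \|\nabla U^N_f - \nabla U^N_0\|_{3/2}
\end{equation*}
for the quantity whose smallness we wish to propagate in time. Suppose the assertion were false. Then there is an $\epsilon>0$ and, for each $j\in\N$, a reasonable solution $f_j$ with $D(f_j(0))<1/j$ but $D(f_j(t_j))\geq\epsilon$ for some time $t_j>0$. Set $g_j:=f_j(t_j)$. The entire proof then reduces to showing that $(g_j)$ is a minimizing sequence of $\Hb$ over $\mathcal F_M$ that is forced to converge to $f_0$, so that $D(g_j)\to 0$, contradicting $D(g_j)\geq\epsilon$.

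First I would check that $g_j\in\mathcal F_M\cap L^\infty(\R^6)$ is compactly supported: this is exactly the content of the reasonable-solution properties \ref{reasonable solution f geq 0, sph sym, comp supp} and \ref{reasonable solution f preserves Lp norm}, which preserve non-negativity, spherical symmetry, compact support, the $L^\infty$-norm and the mass $M$. Next, combining \ref{reasonable solution f preserves energy} and \ref{reasonable solution f preserves Casimir functional} shows that $\Hb$ is conserved along reasonable solutions, so $\Hb(g_j)=\Hb(f_j(0))$. Applying Lemma \ref{lemma Taylor of HC} to $f_j(0)$ and using $D(f_j(0))\to 0$ (hence each of the three summands tends to $0$) yields $\Hb(f_j(0))\to\Hb(f_0)=\min_{\mathcal F_M}\Hb$. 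Therefore $\Hb(g_j)\to\min_{\mathcal F_M}\Hb$, i.e. $(g_j)$ is indeed a minimizing sequence.

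Now I would descend to the reduced fluid problem. By Theorem \ref{thm f0 is minimizer of HC and definition of f0} we have $\Hb(g_j)\geq\He(\rho_{g_j})\geq\min_{\mathcal R_M}\He=\min_{\mathcal F_M}\Hb$, so a squeeze forces $(\rho_{g_j})$ to be a minimizing sequence of $\He$ over $\mathcal R_M$. Theorem \ref{thm existence of minimizers of Hr} then provides a subsequence along which $\rho_{g_j}\rightharpoonup\tilde\rho_0$ weakly in $L^{1+1/n}$, with $\tilde\rho_0$ a minimizer of $\He$, $\nabla U^N_{\rho_{g_j}}\to\nabla U^N_{\tilde\rho_0}$ strongly in $L^2$, and---since the minimizer is compactly supported by Lemma \ref{lemma minimizers have compact support}---also $\nabla U^N_{\rho_{g_j}}-\nabla U^N_{\tilde\rho_0}\to 0$ strongly in $L^{3/2}$. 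This is exactly where the uniqueness hypothesis enters: since the minimizer of $\Hb$ is unique, the one-to-one correspondence of Theorem \ref{thm f0 is minimizer of HC and definition of f0} forces the minimizer of $\He$ to be unique as well, hence $\tilde\rho_0=\rho_0$ and $\nabla U^N_{\tilde\rho_0}=\nabla U^N_0$. Consequently the two potential-gradient terms in $D(g_j)$ tend to zero along the subsequence.

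It remains to handle $d(g_j,f_0)$. Applying Lemma \ref{lemma Taylor of HC} to $g_j$ and rearranging,
\begin{equation*}
d(g_j,f_0)=\bigl(\Hb(g_j)-\Hb(f_0)\bigr)-\bigl(\Hb(g_j)-\Hb(f_0)-d(g_j,f_0)\bigr),
\end{equation*}
where the first bracket tends to $0$ because $(g_j)$ is minimizing and the second is controlled by $C\bigl(\|\nabla U^N_{g_j}-\nabla U^N_0\|_2^2+\|\nabla U^N_{g_j}-\nabla U^N_0\|_{3/2}^{3/2}\bigr)\to 0$ by the previous step. Hence $d(g_j,f_0)\to 0$ and thus $D(g_j)\to 0$ along the chosen subsequence, contradicting $D(g_j)\geq\epsilon$. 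I expect the main obstacle to be the verification that $(g_j)$ is genuinely a minimizing sequence lying in $\mathcal F_M$---the step where every conservation property of a reasonable solution is consumed---together with the indispensable use of the uniqueness assumption, without which the minimizing sequence could a priori approach some minimizer other than $f_0$ and the entire argument would collapse.
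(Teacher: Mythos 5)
Your proposal is correct and follows essentially the same route as the paper's own proof: a contradiction argument in which conservation of $\Hb$ along reasonable solutions (via \ref{reasonable solution f geq 0, sph sym, comp supp}--\ref{reasonable solution f preserves Casimir functional}) makes $(f_j(t_j))$ a minimizing sequence, the reduction through Theorem \ref{thm f0 is minimizer of HC and definition of f0} and the compactness of Theorem \ref{thm existence of minimizers of Hr} yield the $L^2$ and $L^{3/2}$ gradient convergences, and Lemma \ref{lemma Taylor of HC} then forces $d(f_j(t_j),f_0)\to 0$. If anything, you are slightly more explicit than the paper in spelling out the squeeze argument for the reduced problem and where the uniqueness hypothesis and the compact support of the minimizer (Lemma \ref{lemma minimizers have compact support}) are consumed.
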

	
	\begin{proof}
		Assume that there exist reasonable solutions $f_j$ of the collisionless Boltzmann equation with $f_j(0)\in\mathcal F_M\cap L^\infty(\R^6)$, $(t_j)\subset [0,\infty)$ and $\epsilon>0$
		such that
		\begin{equation} \label{equ proof stability norm initially < 1/j}
		d(f_j(0),f_0) + \|\nabla U^N_{f_j(0)} - \nabla U^N_0\|_2 + \| \nabla U^N_{f_j(0)} - \nabla U^N_0\|_{3/2} < \frac{1}{j}
		\end{equation}
		but
		\begin{equation} \label{equ proof stability norm at time tj > epsilon}
		d(f_j(t_j),f_0) + \|\nabla U^N_{f_j(t_j)} - \nabla U^N_0\|_2 + \| \nabla U^N_{f_j(t_j)} - \nabla U^N_0\|_{3/2} > \epsilon.
		\end{equation}
		First we check that $f_j(t_j)\in\mathcal F_M$ for every $j\in\N$. From \ref{reasonable solution f geq 0, sph sym, comp supp} we know that $f_j(t_j)$ is non-negative, spherically symmetric and has compact support. Further \ref{reasonable solution f preserves Lp norm} ensures that $f_j$ preserves $L^p$-norms. Thus
		\begin{equation*}
		\iint f_j(t_j) \diff x \diff v = \|f_j(t_j)\|_1 = \|f_j(0)\|_1 = M
		\end{equation*}
		and
		\begin{equation*}
		\|f_j(t_j)\|_\infty = \|f_j(0)\|_\infty < \infty.
		\end{equation*}
		Since $f_j(t_j)$ is compactly supported and bounded,
		\begin{equation*}
		|\Epot^Q(f_j(t_j))| + \Ekin(f_j(t_j)) + \mathcal C(f_j(t_j)) < \infty.
		\end{equation*}
		Thus $f_j(t_j)\in \mathcal F_M$.
		
		Now \eqref{equ proof stability norm initially < 1/j} and Lemma \ref{lemma Taylor of HC} imply that
		\begin{equation*}
		\Hb(f_j(0)) \rightarrow \Hb(f_0) \quad \text{for }j\rightarrow \infty.
		\end{equation*}
		By \ref{reasonable solution f preserves energy} and \ref{reasonable solution f preserves Casimir functional} the total energy and the Casimir functional are conserved quantities of $f_j$. Thus
		\begin{equation*}
		\Epot^M(f_j(t_j)) + \Ekin(f_j(t_j)) = \Epot^M(f_j(0)) + \Ekin(f_j(0)).
		\end{equation*}
		and
		\begin{align*}
		\mathcal C (f_j(t_j)) = \iint \Phi(f_j(t_j)) \diff x \diff v = \iint \Phi(f_j(0)) \diff x \diff v 
		= \mathcal{C}(f_j(0)).
		\end{align*}
		Hence
		\begin{equation} \label{equ proof stability convergence of H(fj(tj))}
		\Hb(f_j(t_j)) = \Hb(f_j(0)) \rightarrow \Hb(f_0) \quad \text{for }j\rightarrow\infty.
		\end{equation}
		Thus $(f_j(t_j))\subset \mathcal F_M$ is a minimizing sequence of $\Hb$ over $\mathcal{F}_M$. Theorem \ref{thm f0 is minimizer of HC and definition of f0} implies that $(\rho_{f_j}(t_j))\subset\mathcal R_M$ is also a minimizing sequence of $\He$ over $\mathcal R_M$. Hence Theorem \ref{thm existence of minimizers of Hr} implies
		\begin{equation*}
		\|\nabla U^N_{f_j(t_j)} - \nabla U^N_0 \|_2 + \|\nabla U^N_{f_j(t_j)} - \nabla U^N_0 \|_{3/2} \rightarrow 0 \quad \text{for }j\rightarrow\infty.
		\end{equation*}
		Together with \eqref{equ proof stability convergence of H(fj(tj))} and Lemma \ref{lemma Taylor of HC} this implies
		\begin{equation*}
		d(f_j(t_j),f_0) \rightarrow 0 \quad \text{for }j\rightarrow\infty.
		\end{equation*}
		Thus for $j$ sufficiently large
		\begin{equation*}
		d(f_j(t_j),f_0) + \|\nabla U^N_{f_j(t_j)} - \nabla U^N_0\|_2 + \| \nabla U^N_{f_j(t_j)} - \nabla U^N_0\|_{3/2} < \epsilon,
		\end{equation*}
		which contradicts the assumption \eqref{equ proof stability norm at time tj > epsilon}.
		
	\end{proof}
	
	\subsection{Stability in the fluid situation}
	
	We turn from the collisonless Boltzmann equation \eqref{Boltzmann equation} to the Euler equations \eqref{Euler equations}. With an analog proof as in Section \ref{section stability collisionless} we prove that $\rho_0$ from Section \ref{section existance minimizer pressure supported} and \ref{section Euler Lagrange equation pressure supported} is stable against small spherically symmetric perturbations. First we need a concept of \textit{reasonable} time dependent solutions of the Euler equations.
	
	\begin{defn}
		Let $T>0$, $\rho:[0,T)\times \R^3 \rightarrow [0,\infty)$ be a density and $u:[0,T)\times \R^3 \rightarrow \R^3$ be a velocity field such that the tuple $(\rho,u)$ solves the Euler equations \eqref{Euler equations} in some sense on the time interval $[0,T)$. We call the tuple $(\rho,u)$ a \textit{reasonable} solution of the Euler equations if \ref{reasonable solution rho geq 0, sph sym} to \ref{reasonable solution rho preserves energy} hold.
		\begin{enumerate}[label=($\rho$\arabic*)]
			
			\item \label{reasonable solution rho geq 0, sph sym} If the initial data $\rho(0)$ is non-negative, spherically symmetric, compactly supported and bounded, then the solution $\rho(t)$ remains non-negative, spherically symmetric, compactly supported and bounded for all $t\in[0,T)$.
			
			\item \label{reasonable solution rho preserves mass} The solution preserves mass, i.e., for all $t\in[0,T)$
			\begin{equation*}
			\| \rho(t) \|_1 = \| \rho(0) \|_1.
			\end{equation*}
			\item \label{reasonable solution rho preserves energy} The solution preserves energy, i.e., for all $t\in[0,T)$
			\begin{equation*}
				E(\rho(t),u(t)) = E(\rho(0),u(0))
			\end{equation*}
			where
			\begin{equation*}
				E(\rho(t),u(t)) := \Epot^M(\rho(t)) + \frac{1}{2} \int |u(t)|^2\rho(t) \diff x + \int \Psi(\rho(t)) \diff x.
			\end{equation*}
		\end{enumerate}
	\end{defn}

	In contrast to the notion of reasonable solutions of the collisionless Boltzmann equation, we define reasonable solution of the Euler equations only on some (possibly finite) time interval $[0,T)$. The reason for this is that it is unknow under which conditions global in time solutions of the Euler equations exists. This is even unknown when the Euler equations are coupled with the simpler Newtonian field equations. Further, it is also not know under which conditions solutions to the initial value problem of the Euler equation really preserve energy; neither in Newtonian nor in Mondian physics. Thus the stability result for fluid models is somewhat weaker than its collisionless counterpart. However, we stress that this weakness only arises from an incomplete understanding of time dependent solutions of the Euler equations. Our stability analysis itself is perfectly rigorous.
	
	To prove stability, we need tools to measure distances and we construct them analogously to the collisionless situation. Let $\rho\in\mathcal R_M$ and let $u$ be a velocity field such that the kinetic energy of $(\rho,u)$ is finite. Expanding $\Epot^M$ as before, we find
	\begin{equation*}
		E(\rho,u) - E(\rho,0) = \frac{1}{2} \int |u|^2\rho\diff x + d(\rho,\rho_0) + \text{remainder terms},
	\end{equation*}
	where now
	\begin{equation*}
		d(\rho,\rho_0) := \int \left[ \Psi(\rho) - \Psi(\rho_0) + U^M_0(\rho-\rho_0) \right];
	\end{equation*}
	we use the notion $U^M_0 := U^M_{\rho_0}$. As in Lemma \ref{lemma d is approriate to measure distances} we get for $\rho\in\mathcal R_M$ that $d(\rho,\rho_0)\geq 0$ and $d(\rho,\rho_0) = 0$ if and only if $\rho=\rho_0$. Since
	\begin{equation*}
		E(\rho,u) - E(\rho,0) - \frac{1}{2} \int |u|^2\rho\diff x - d(\rho,\rho_0) = \Epot^M(\rho) - \Epot^M(\rho_0) - \int U^M_0(\rho-\rho_0) \diff x,
	\end{equation*}
	we can estimate the remainder terms exactly in the same way as done in Lemma \ref{lemma Taylor of HC}. Thus with the same proof as in the collisionless situation, the following stability result for $\rho_0$ follows:
	
	\begin{thm} \label{thm stability Euler}
		Assume that the minimizer $\rho_0\in\mathcal R_M$ of $\He$ over $\mathcal R_M$ from Theorem \ref{thm existence of minimizers of Hr} is unique. Then for every $\epsilon>0$ there is a $\delta>0$ such that for every reasonable solution $(\rho,u)$ of the Euler equations with $\rho(0)\in \mathcal R_M\cap L^\infty(\R^3)$, $\rho(0)$ compactly supported and 
		\begin{equation*}
		\frac{1}{2} \int |u(0)|^2\rho(0) \diff x + d(\rho(0),\rho_0) + \|\nabla U^N_{\rho(0)} - \nabla U^N_0\|_2 + \| \nabla U^N_{\rho(0)} - \nabla U^N_0\|_{3/2} < \delta
		\end{equation*}
		holds
		\begin{equation*}
		\frac{1}{2} \int |u(t)|^2\rho(t) \diff x + d(\rho(t),\rho_0) + \|\nabla U^N_{\rho(t)} - \nabla U^N_0\|_2 + \| \nabla U^N_{\rho(t)} - \nabla U^N_0\|_{3/2} < \epsilon, \quad 0\leq t < T.
		\end{equation*}
		
	\end{thm}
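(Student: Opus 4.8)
The plan is to argue by contradiction, reproducing the structure of the proof of Theorem~\ref{thm stability}; the only new ingredient is the bookkeeping of the kinetic term. The key algebraic fact is the exact splitting $E(\rho,u)=\He(\rho)+\tfrac12\int|u|^2\rho\diff x$, valid for every $\rho\in\mathcal R_M$ and every velocity field $u$ of finite kinetic energy. Since $\rho_0$ minimizes $\He$ over $\mathcal R_M$, this yields the sandwiching $E(\rho,u)\geq\He(\rho)\geq\He(\rho_0)=\min_{\mathcal R_M}\He$, which is what lets one read off a minimizing sequence from energy conservation. Suppose the assertion fails. Writing, for a solution $(\rho,u)$,
\[
\mathcal D(\rho,u):=\tfrac12\int|u|^2\rho\diff x+d(\rho,\rho_0)+\|\nabla U^N_\rho-\nabla U^N_0\|_2+\|\nabla U^N_\rho-\nabla U^N_0\|_{3/2},
\]
there are then $\epsilon>0$, reasonable solutions $(\rho_j,u_j)$ with $\rho_j(0)\in\mathcal R_M\cap L^\infty(\R^3)$ compactly supported, and times $t_j\in[0,T_j)$, such that $\mathcal D(\rho_j(0),u_j(0))<1/j$ but $\mathcal D(\rho_j(t_j),u_j(t_j))>\epsilon$. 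As in Theorem~\ref{thm stability}, I would first verify $\rho_j(t_j)\in\mathcal R_M$: by \ref{reasonable solution rho geq 0, sph sym} the density $\rho_j(t_j)$ stays non-negative, spherically symmetric, compactly supported and bounded, by \ref{reasonable solution rho preserves mass} it keeps mass $M$, and compact support together with boundedness forces $|\Epot^Q(\rho_j(t_j))|+\mathcal C(\rho_j(t_j))<\infty$.

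Next I would transport the energy to the minimizer. The fluid analogue of Lemma~\ref{lemma Taylor of HC}, already granted in the text, reads
\[
\left|E(\rho,u)-\He(\rho_0)-\tfrac12\int|u|^2\rho\diff x-d(\rho,\rho_0)\right|\leq C\left(\|\nabla U^N_\rho-\nabla U^N_0\|_2^2+\|\nabla U^N_\rho-\nabla U^N_0\|_{3/2}^{3/2}\right).
\]
Evaluating this at $t=0$ and using $\mathcal D(\rho_j(0),u_j(0))<1/j$ gives $E(\rho_j(0),u_j(0))\to\He(\rho_0)$. Energy conservation \ref{reasonable solution rho preserves energy} then propagates this: $E(\rho_j(t_j),u_j(t_j))=E(\rho_j(0),u_j(0))\to\He(\rho_0)$. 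Applying the sandwiching at $t_j$,
\[
\He(\rho_0)\leq\He(\rho_j(t_j))\leq\He(\rho_j(t_j))+\tfrac12\int|u_j(t_j)|^2\rho_j(t_j)\diff x=E(\rho_j(t_j),u_j(t_j)),
\]
squeezes both non-negative excess terms to zero: $\tfrac12\int|u_j(t_j)|^2\rho_j(t_j)\diff x\to0$ and $\He(\rho_j(t_j))\to\He(\rho_0)$. In particular $(\rho_j(t_j))\subset\mathcal R_M$ is a minimizing sequence of $\He$ over $\mathcal R_M$.

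Finally I would run the compactness machinery of Theorem~\ref{thm existence of minimizers of Hr} together with the uniqueness hypothesis. That theorem delivers, along a subsequence of $(\rho_j(t_j))$, a minimizer as weak limit in $L^{1+1/n}(\R^3)$ together with strong $L^2$-convergence of the associated $\nabla U^N$; uniqueness forces this weak limit to be $\rho_0$, and a subsequence-of-every-subsequence argument upgrades the convergence to $\nabla U^N_{\rho_j(t_j)}\to\nabla U^N_0$ in $L^2$ for the whole sequence. Since $\rho_0$ is compactly supported by Lemma~\ref{lemma minimizers have compact support}, the same theorem additionally yields $\nabla U^N_{\rho_j(t_j)}-\nabla U^N_0\to0$ in $L^{3/2}$. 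Feeding these two convergences, together with $\He(\rho_j(t_j))\to\He(\rho_0)$ and the vanishing kinetic energy, back into the Taylor estimate forces $d(\rho_j(t_j),\rho_0)\to0$. Hence $\mathcal D(\rho_j(t_j),u_j(t_j))\to0$, contradicting $\mathcal D(\rho_j(t_j),u_j(t_j))>\epsilon$, and the theorem follows.

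The step I expect to be the genuine crux is the passage from the subsequential convergence supplied by Theorem~\ref{thm existence of minimizers of Hr} to convergence of the entire sequence; this is exactly where the uniqueness of $\rho_0$ is indispensable, since otherwise the minimizing sequence could drift toward a different minimizer and $d(\rho_j(t_j),\rho_0)$ need not vanish. Everything else is a faithful replica of the collisionless argument, the one structural change being that the conserved energy splits cleanly into $\He(\rho_j(t_j))$ plus a manifestly non-negative kinetic term. This explicit kinetic term plays precisely the role that the total energy $\Etot$ played in the proof of Theorem~\ref{thm stability}, and its non-negativity is what makes the sandwiching work.
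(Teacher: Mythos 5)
Your proposal is correct and follows essentially the same route as the paper, which proves Theorem \ref{thm stability Euler} verbatim as the collisionless argument of Theorem \ref{thm stability}: contradiction setup, the splitting $E(\rho,u)=\He(\rho)+\tfrac12\int|u|^2\rho\diff x$ with energy conservation turning $(\rho_j(t_j))$ into a minimizing sequence of $\He$, the fluid analogue of Lemma \ref{lemma Taylor of HC}, and the compactness of Theorem \ref{thm existence of minimizers of Hr} combined with uniqueness of $\rho_0$. One small remark: your subsequence-of-every-subsequence upgrade, while valid, is not actually needed, since within the contradiction framework it suffices to produce a single subsequence along which $\mathcal D(\rho_j(t_j),u_j(t_j))\rightarrow 0$ to contradict the assumed lower bound $\epsilon$.
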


	\section{Discussion about spherical symmetry} \label{section getting rid of spherical symmetry}
	
	Is it possible to prove the stability results from Section \ref{section stability} without symmetry assumptions? The first point where we used the assumption of spherical symmetry was Lemma \ref{lemma estimates for Epot} where  we proved bounds for the potential energy. For the Mondian part of the potential energy this proof relied very much on the assumption of spherical symmetry. In order to get a similar bound also without symmetry assumption we suspect that one has to prove first a confining property like the one from Lemma \ref{lemma masses remain concentrated along minimizing sequences} where $\Epot^Q(\rho)$ controls how far apart the mass of $\rho$ can be scattered. But also this proof used spherical symmetry and a new idea is necessary to prove such a confining property without symmetry assumptions. We suspect that this could be achieved using the following, \textit{formal} property for the Mondian part of the potential energy. For simplicity we set $\lambda(u)=1/\sqrt u$, $u>0$, and thus $Q(u)=2/3 \, u^{3/2}$.  Then
	
	\begin{align*}
		- \frac{1}{4\pi} \int Q\left(\left| \nabla U^N_\rho \right|\right) \diff x
		&  = - \frac{1}{6\pi} \int \left| \nabla U^N_\rho \right|^{3/2} \diff x
		= - \frac{1}{6\pi} \int \nabla U^N_\rho \cdot \frac{\nabla U^N_\rho}{\left| \nabla U^N_\rho \right|^{1/2}} \diff x
	\end{align*}
	As already mentioned in Section \ref{section Mondian potentials} $\nabla U^\lambda_\rho$ is the irrotational part of $\nabla U^N_\rho / \left| \nabla U^N_\rho \right|^{1/2}$ and we write
	\begin{equation*}
	\nabla U^\lambda_\rho = H\left(\frac{\nabla U^N_\rho}{\left| \nabla U^N_\rho \right|^{1/2}}\right);
	\end{equation*}
	where $H$ is the operator that extracts the irrotational part of a vector field \citep[Definition 3.1.]{2024Frenkler}. Using that every gradient is already irrotational and that $H$ is symmetric \citep[Lemma 3.6.]{2024Frenkler} we get
	\begin{align*}
		- \frac{1}{4\pi} \int Q\left(\left| \nabla U^N_\rho \right|\right) \diff x
		& = - \frac{1}{6\pi} \int H\left(\nabla U^N_\rho\right) \cdot \frac{\nabla U^N_\rho}{\left| \nabla U^N_\rho \right|^{1/2}} \diff x
		= - \frac{1}{6\pi} \int \nabla U^N_\rho \cdot H\left(\frac{\nabla U^N_\rho}{\left| \nabla U^N_\rho \right|^{1/2}}\right) \diff x\\
		& = - \frac{1}{6\pi} \int \nabla U^N_\rho \cdot \nabla U^\lambda _\rho \diff x
	\end{align*}
	With integration by parts
	\begin{align*}
		- \frac{1}{4\pi} \int Q\left(\left| \nabla U^N_\rho \right|\right) \diff x & = \frac{1}{6\pi} \int \Delta U^N_\rho \cdot U^\lambda_\rho \diff x + \text{border terms} = \frac{2}{3} \int \rho \, U^\lambda_\rho \diff x  + \text{border terms}.
	\end{align*}
	Such a relation would allow us to relate the density $\rho$ to the Mondian part $U^\lambda_\rho$ of the potential $U^M_\rho$. Since $U^\lambda_\rho$ should diverge logarithmically at infinity, this can be used to prove the desired confining property. But one has to treat the above, formal relation with the utmost care! First, $\int Q (\ldots)$ is not finite. One has always to study energy differences in MOND. At the end of the above derivation, the infinite terms are hidden in the border terms. And second, $U^\lambda_\rho$ is only defined up to an additive constant. Since $U^\lambda_\rho$ diverges at infinity, it is unclear how it should be normalized. The correct normalization constant will depend on the reference density $\bar \rho$ that must be introduced to get a finite value for the integral $\int [Q(\ldots) - Q(\ldots) ]$. Thus there are several arguments that must be elaborated in detail.
	
	The next problem that one faces is that for a minimizing sequence $(\rho_j)$ that converges weakly to a minimizer $\rho_0$ we must prove
	\begin{equation*}
	\nabla U^N_{\rho_j} - \nabla U^N_{\rho_0} \rightarrow 0 \quad \text{strongly in } L^{3/2}(\R^3).
	\end{equation*}
	This is an essential ingredient in the proof of the stability statements in the Theorems \ref{thm stability} and \ref{thm stability Euler}. Proofing that $\nabla  U^N_{\rho_j} - \nabla U^N_{\rho_0}$ converges strongly to zero in $L^{3/2}(\R^3)$ relied on the assumption of spherical symmetry, too, and it did so in a somewhat sophisticated way. Also there we need a new proof. Probably the best ansatz is to search for a generalization of Lemma \ref{lemma compactness of Laplace inverse}, which is taken from Lemma 2.5 in \cite{2007Rein}. There it was proven that if a weakly convergent sequence of densities remains concentrated, then the $L^2$-norm of the corresponding Newtonian potentials converges. This must be generalized to a convergence in the $L^{3/2}$ norm.
	
	If we manage to prove that without the assumption of spherical symmetry there are minimizers for the variational problem from Section \ref{section existance minimizer pressure supported}, it is easy to get the corresponding Euler-Lagrange equation since our proof from Section \ref{section Euler Lagrange equation pressure supported} did not depend on the assumption of spherical symmetry. Further we would expect that the minimizers are still spherically symmetric.
	This holds in the Newtonian situation of the Vlasov-Poisson system and there is no reason why this should be different in the Mondian situation of the \VQMS. In the Newtonian situation we are aware of two proofs to show that minimizers are still spherically symmetric but both proofs fail in the Mondian situation:
	
	The first proof uses that under the symmetric rearrangement of a density $\rho$ the corresponding potential energy $\Epot^N(\rho)$ decreases \cite[Theorem 3.7]{2010LiebLoss}. But this argument relies on the fact that the potential energy can be written in the form
	\begin{equation*}
	\Epot^N(\rho) = - \frac 12 \iint \frac{\rho(x)\rho(y)}{|x-y|} \diff x \diff y.
	\end{equation*}
	We do not have such a representation for $\Epot^Q(\rho)$ and therefore this proof cannot be applied in the Mondian situation.
	
	The second proof relies on a symmetry result for solutions of elliptic equations \citep{1979CMaPh..68..209GidasNirenberg}. In the Newtonian situation we can deduce from the Poisson equation and from the Euler-Lagrange equation that the potential $U^N$ of a minimizer $\rho_0$ solves an equation of the form
	\begin{equation*}
	\Delta U^N = 4\pi\rho_0 = 4\pi g(U^N)
	\end{equation*}
	with a suitable function $g$. For such an equation one can deduce from Theorem 4 of \cite{1979CMaPh..68..209GidasNirenberg} that $U^N$ must be spherically symmetric. Now we consider again the Mondian situation and we switch from the QUMOND formulation, which we used in Section \ref{section Potential theory}, to the AQUAL\footnote{The theory is called AQUAL because the field equation derives from an aquadratic Lagrangian \citep{2012LRR....15...10FamaeyMcGaugh}.} formulation of the MOND theory. This enables us to see more directly why the result of \citeauthor{1979CMaPh..68..209GidasNirenberg} cannot be applied in the Mondian situation. In AQUAL the Mondian potential $U^M$ is derived as the solution of the PDE
	\begin{equation*}
	\divergence\left( \mu(|\nabla U^M|)\nabla U^M \right) = 4\pi\rho_0, \quad \lim_{|x|\rightarrow \infty} |\nabla U^M(x)| = 0;
	\end{equation*}
	$\mu$ shall be such that $\mu(\tau)\approx 1$ for $\tau \gg 1$ and $\mu(\tau) \approx \tau$ for $0<\tau \ll 1$. If $\mu$ and $\lambda$ are connected in the correct way, the potentials $U^M$ derived from the AQUAL and the QUMOND theory are identical in spherically symmetric situations. But without the assumption of spherical symmetry these potentials are in general different. As in the Newtonian situation a minimizer $\rho_0$ is connected to the potential $U^M$ via an equation of the form $\rho_0 = g(U^M)$. Then we have
	\begin{align*}
	0 = \divergence & \left( \mu(|\nabla U^M|)\nabla U^M \right) - 4\pi g(U^M) \\
	& = \mu(|\nabla U^M|)\sum_{i=1}^3 \partial_{x_i}^2 U^M + \frac{\mu'(|\nabla U^M|)}{|\nabla U^M|} \sum_{i,j=1}^3 \partial_{x_i} U^M \partial_{x_j} U^M \partial_{x_i}\partial_{x_j} U^M - 4\pi g(U^M) \\
	& = G( U^M, \partial_{x_i} U^M, \partial_{x_i}\partial_{x_j} U^M)
	\end{align*}
	with a suitable function $G=G(a,x,H)$, $a\in\R$, $x\in\R^3$, $H=(h_{ij})\in \R^3\times\R^3$. To apply Theorem 4 of \cite{1979CMaPh..68..209GidasNirenberg} we need that for all $a\in\R$, $x\in\R^3$ the matrix
	\begin{equation*}
	(\partial_{h_{ij}} G)_{1\leq i,j\leq 3} = \mu(|x|)E_3 + \frac{\mu'(|x|)}{|x|}xx^T
	\end{equation*}
	is positive definite; $E_3$ denotes the identity matrix with dimension 3. Assume for simplicity that $\mu(|x|)=|x|$ if $|x|$ is small. Then we have for $|x|$ small
	\begin{equation*}
	(\partial_{h_{ij}} G)_{1\leq i,j\leq 3} = |x|\left(E_3 + \frac{xx^T}{|x|^2} \right).
	\end{equation*}
	Thus
	\begin{equation*}
	(\partial_{h_{ij}} G)_{1\leq i,j\leq 3} \rightarrow 0 \quad \text{for }|x|\rightarrow 0.
	\end{equation*}
	Hence $(\partial_{h_{ij}} G)$ is not positive definite for $x=0$ and the symmetry result of \cite{1979CMaPh..68..209GidasNirenberg} cannot be applied in the Mondian situation.
	
	Summarizing we can say, that with some effort it seems possible to remove the assumption of spherical symmetry from the treatment of the above variational problems. But especially establishing analytically that the resulting model is again spherically symmetric will be quite a challenging task since the classical results from the literature cannot be applied to the Mondian situation directly.
	
	\section{Applying our stability result and verifying the uniqueness condition} \label{section minimizer must be unique}
	
	In this paper we have searched for minimizers of two variational problems and have proved that such minimizers exist  (Theorem \ref{thm existence of minimizers of Hr} and Theorem \ref{thm f0 is minimizer of HC and definition of f0}). Subsequently, we have proved the non-linear stability of these minimizers under the assumption that there is exactly one minimizer of the variational problem (Theorem \ref{thm stability} and Theorem \ref{thm stability Euler}).
	How to deal with this uniqueness requirement?
	In most situations, we are interested in one particular model. Then verifying the uniqueness condition can be done numerically as we demonstrate in the sequential.
	
	Consider a polytropic fluid model. Such a model has an equation of state of the form
	\begin{equation*}
		p(x) = C\rho(x)^\gamma
	\end{equation*}
	with $C,\gamma>0$. We choose our ansatz function $\Psi(\rho) = \frac{1}{2} \rho^2$; this $\Psi$ satisfies all assumptions from the above sections. With \eqref{equ equation of state in introduction} from the introduction, this ansatz leads to the equation of state
	\begin{equation*}
		p(x) = \rho(x)^2 - \frac{1}{2} \rho(x)^2 = \frac{1}{2} \rho(x)^2.
	\end{equation*}
	Let $\rho_0$ be a minimizer of the variational problem from Section \ref{section existance minimizer pressure supported} with mass $M=\|\rho_0\|_1 >0$. We want to show that $\rho_0$ is unique. The Euler-Lagrange equation from Theorem \ref{thm Euler Lagrange equation} tells us that everywhere where $\rho_0$ is greater than zero
	\begin{equation*}
		\rho_0(r) = E_0 - U^M_{\rho_0}(r)
	\end{equation*}
	since $(\Psi')^{-1}(\eta) = \eta$, $\eta \geq 0$. In view of Lemma \ref{lemma Ulambda sph sym}, taking the derivative with respect to $r$ gives
	\begin{align} \label{equ ode for rho}
		\rho_0'(r) & =  - U^{N\prime}_{\rho_0}(r) - \lambda\left( \left| U^{N\prime}_{\rho_0}(r)\right|\right) U^{N\prime}_{\rho_0}(r)  = - \frac{M_{\rho_0}(r)}{r^2} - \frac{\sqrt{M_{\rho_0}(r)}}{r},
	\end{align}
	where we assumed for simplicity that $\lambda(u) = 1/\sqrt{u}$, $u>0$. Let us search all possible solutions of this integro-differential equation. Since $\rho_0$ is continuous (Lemma \ref{lemma minimizers are continuous}), the central value $\rho_0(0)$ is well defined. \cite{2015Rein} has proven that for every $s>0$, there exists exactly one solution of \eqref{equ ode for rho} with central value $s$. Let us call this solution $\rho_s$. Thus $\rho_s(0) = s$, $\rho_s$ solves \eqref{equ ode for rho} on some interval $[0,R_s)$ and is positive there, and $\rho_s$ vanishes on the interval $[R_s,\infty)$.
	
	\begin{figure}
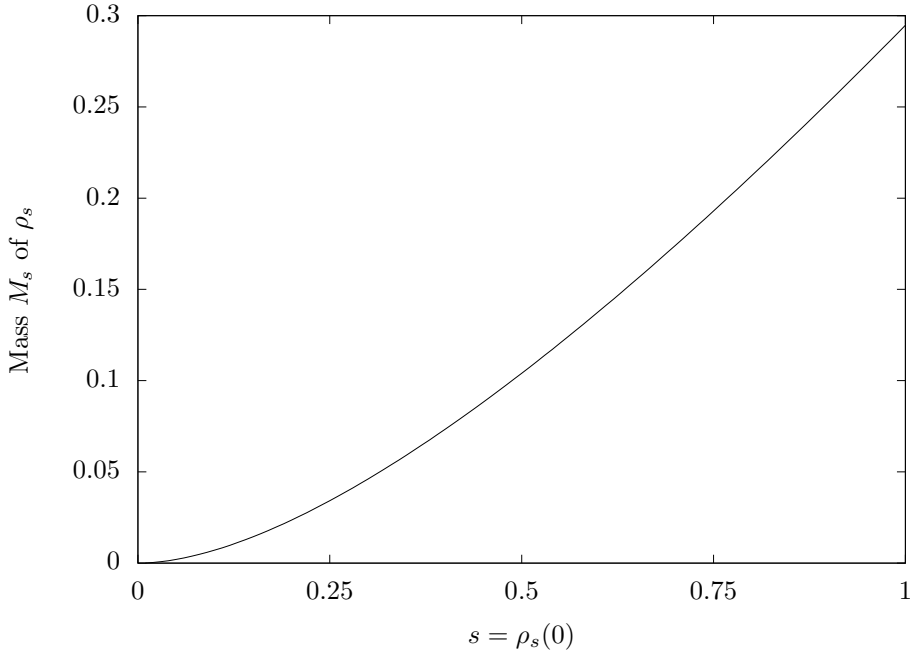

		\begin{center}
			\include{figure-mass-of-minimizer}
		\end{center}
		\caption{We calculated numerically the solution $\rho_s$ of \eqref{equ ode for rho} with central value $\rho_s(0)=s$. In the above figure we plotted the mass $M_s$ of $\rho_s$ against the central value $s\in[0.01,1]$. We see that in the depicted range for every given mass $M>0$ there is only one $s$ such that $\rho_s$ has mass $M$. The behaviour of the graph of $M_s$ for $s$ very small (the deep MOND limit) and for $s$ very large (the Newtonian limit) is given in Figure \ref{figure-mass-of-minimizer-limits}.}
		\label{figure-mass-of-minimizer}
	\end{figure}

	To check whether our minimizer is unique, we wrote a script that calculates for every parameter $s$ the density $\rho_s$. We deliver this script along with this paper. In Figure \ref{figure-mass-of-minimizer} we have plotted the mass $M_s$ of $\rho_s$ against the parameter $s\in[0.01,1]$. We see that for for $s \rightarrow 0$ the graph looks like a parabola and for $s\rightarrow 1$ it becomes more and more a straight line. This is due to the behaviour of $\rho_s$ in the deep MOND limit and the Newtonian limit. If $s$ is small, then the square root term in \eqref{equ ode for rho} dominates and \eqref{equ ode for rho} becomes essentially
	\begin{equation*}
		\rho_s'(r) = -\frac{\sqrt{M_{\rho_s}(r)}}{r}.
	\end{equation*}
	This is the deep MOND limit. From this equation we can derive the scaling relation
	\begin{equation*}
		M_s = \frac{s^2}{s_0^2} M_{s_0}, \quad s,s_0> 0 \text{ small.}
	\end{equation*}
	Thus $M_s$ as a function of $s$ becomes a parabola. This we have visualized in the left diagram of Figure \ref{figure-mass-of-minimizer-limits} where we have zoomed in on the range $s\in[0.0001,0.01]$ and plotted additionally the parabola $1.06\,s^2$. In contrast, in the Newtonian limit, when $s\rightarrow\infty$, equation \ref{equ ode for rho} becomes
	\begin{equation*}
		\rho_s'(r) = - \frac{M_{\rho_s}(r)}{r^2}
	\end{equation*}
	and this leads to the scaling relation
	\begin{equation*}
		M_s = \frac{s}{s_0} M_{s_0}, \quad s,s_0> 0 \text{ large.}
	\end{equation*}
	Thus in the Newtonian limit, $M_s$ becomes a linear function. This we have visualized in the right diagram of Figure \ref{figure-mass-of-minimizer-limits} where we have zoomed out on the range $s\in[10,1000]$.
	
	From the above three plots and the scaling relations in the deep MOND and the Newtonian limit we see that for every prescribed mass $M>0$, there is exactly one $s_0\in[0,\infty)$ such that $\rho_{s_0}$ has mass $M$. Since the family $\{\rho_s\}$ contains all possible solutions of the Euler-Lagrange equation \eqref{equ ode for rho}, this implies that the minimizer $\rho_0=\rho_{s_0}$ of $\He$ over $\mathcal R_M$ is unique. Thus Theorem \ref{thm stability Euler} implies that $\rho_0$ is a non-linearly stable equilibrium solution of the Euler equations \ref{Euler equations}.
	
	Additionally, Theorem \ref{thm f0 is minimizer of HC and definition of f0} tells us that we can lift $\rho_0$ to a minimizer $f_0$ of the variational problem treated in Section \ref{section existance of minimizer in collisionless situation}. Since the lifting process is one-to-one and onto, $f_0$ is unique too. Thus Theorem \ref{thm stability} implies that $f_0$ is a non-linearly stable equilibrium solution of the collisionless Boltzmann equation \eqref{Boltzmann equation}.

	\begin{figure}
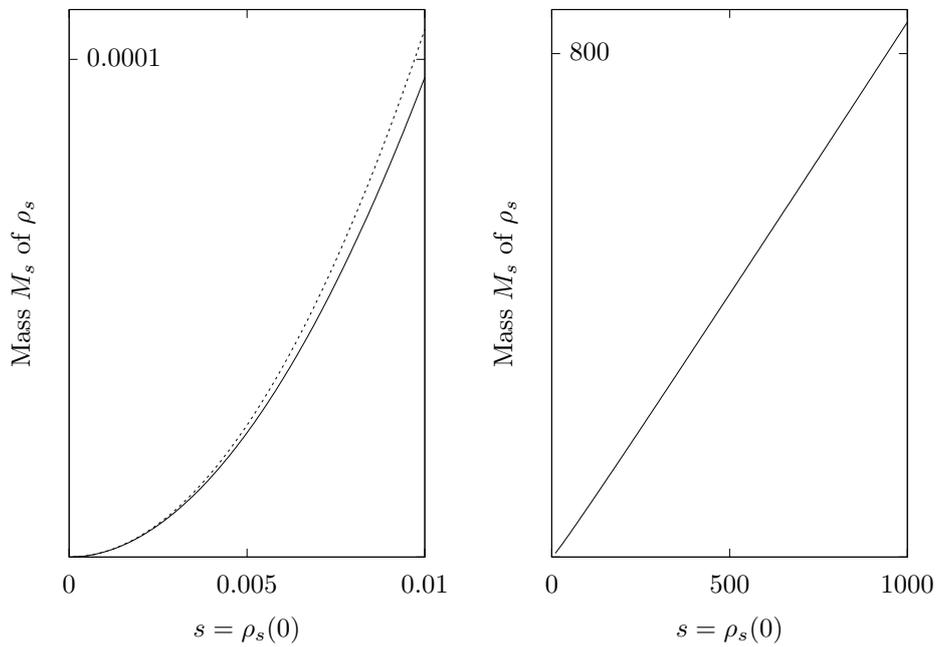

		\begin{center}
			\include{figure-mass-of-minimizer-limits}
		\end{center}
		\caption{The two diagrams show the same information as Figure \ref{figure-mass-of-minimizer} but with different ranges. In the left diagram we zoomed in on the range $s\in[0.0001,0.01]$. This is the deep MOND limit. In this range the graph of $M_s$ is approximately a parabola. For comparison we plotted (dashed line) also the parabola $1.06\,s^2$. In the right diagram we zoomed out on the range $s\in[10,1000]$. This is the Newtonian limit. There the graph of $M_s$ becomes a straight line. Putting the informations from Figure \ref{figure-mass-of-minimizer} and Figure \ref{figure-mass-of-minimizer-limits} together, we conclude that for every given mass $M>0$ there is exactly one $s>0$ such that $\rho_s$ has mass $M$.}
		\label{figure-mass-of-minimizer-limits}
	\end{figure}

	\bibliographystyle{mnras}
	\bibliography{bibliography_math,bibliography_phys,bibliography_mond}

	\appendix
	
	\section{Appendix}

	We give a proof that the Mondian potential energy of a density $\rho$ is finite, provided that $\rho$ has finite support.
	
	\begin{lem} \label{lemma Mondian potential energy is finite}
		Assume that \ref{lambda bounded from above} holds. Let $p>3$ and $\rho,\bar\rho\in L^1\cap L^p(\R^3)$, $\geq 0$ with compact support and $\|\rho\|_1=\|\bar\rho\|_1$, then
		\begin{equation*}
		\int \left| \scriptQ\left( \left| \nabla U^N_\rho \right| \right) - \scriptQ\left( \left| \nabla U^N_{\bar\rho} \right| \right) \right| \diff x < \infty
		\end{equation*}
		and hence
		\begin{align*}
		\Epot^M(\rho) = - \frac{1}{8\pi} \int  \left| \nabla U^N_\rho \right|^{2} \diff x - \frac{1}{4\pi} \int \left( \scriptQ\left( \left| \nabla U^N_\rho \right| \right) - \scriptQ\left( \left| \nabla \bar U^N \right| \right) \right) \diff x
		\end{align*}
		is finite.
	\end{lem}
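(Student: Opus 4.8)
The plan is to exploit the equal-mass hypothesis $\|\rho\|_1=\|\bar\rho\|_1$, which is exactly what rescues integrability: each of $\scriptQ(|\nabla U^N_\rho|)$ and $\scriptQ(|\nabla\bar U^N|)$ decays only like $|x|^{-3}$ at infinity and is therefore \emph{not} integrable over $\R^3$, but in their difference the leading $|x|^{-2}$ behaviour of the two Newtonian fields cancels. First I would record that $p>3$ together with compact support gives $\nabla U^N_\rho,\nabla\bar U^N\in L^\infty(\R^3)$: splitting the representation of $\nabla U^N_\rho$ from Lemma \ref{lemma Newtonian potential} into the regions $|x-y|<1$ and $|x-y|\geq1$, the far part is bounded by $\|\rho\|_1$ and the near part by H\"older, the required exponent condition $2p/(p-1)<3$ being precisely $p>3$. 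Then I fix $R_0>0$ with $\supp\rho,\supp\bar\rho\subset B_{R_0}$ and split $\R^3=B_{2R_0}\cup\{|x|>2R_0\}$.

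On the ball $B_{2R_0}$ the integrand is harmless: by Lemma \ref{lemma Q continuous}, $\scriptQ$ is continuous and monotone with $\scriptQ(u)\leq\frac{2\Lambda_2}{3}u^{3/2}$, so both terms are controlled by the $L^\infty$-bounds just obtained and the integral over $B_{2R_0}$ is finite. The heart of the matter is the exterior region, where I would use the elementary pointwise estimate, valid for $a,b\geq0$,
\begin{equation*}
\left|\scriptQ(a)-\scriptQ(b)\right|\leq\frac{2\Lambda_2}{3}\left|a^{3/2}-b^{3/2}\right|\leq\Lambda_2\,\max(a,b)^{1/2}\,|a-b|,
\end{equation*}
where the first inequality is Lemma \ref{lemma Q continuous} (note that only \ref{lambda bounded from above} is assumed, so I deliberately avoid differentiating $\scriptQ$) and the second is the mean value theorem applied to $t\mapsto t^{3/2}$. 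Applying this with $a=|\nabla U^N_\rho(x)|$, $b=|\nabla\bar U^N(x)|$ and using $\bigl||\nabla U^N_\rho|-|\nabla\bar U^N|\bigr|\leq|\nabla U^N_{\rho-\bar\rho}(x)|$ reduces everything to two decay statements on $\{|x|>2R_0\}$: the immediate $\max(a,b)=O(|x|^{-2})$, and the crucial $|\nabla U^N_{\rho-\bar\rho}(x)|=O(|x|^{-3})$.

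This last decay is the main obstacle and the only place the equal-mass hypothesis is used. I would establish it exactly as in Proposition \ref{prop nabla UN phi is in every Lp if int phi is zero}, by Taylor expanding the kernel $y\mapsto(x-y)/|x-y|^3$ about $y=0$ for $|x|\geq2R_0$: the zeroth-order term is $\tfrac{x}{|x|^3}\int(\rho-\bar\rho)\diff y=0$ precisely because $\|\rho\|_1=\|\bar\rho\|_1$, the first-order term is $O(|x|^{-3})$ multiplied by $\int|y|\,|\rho-\bar\rho|\diff y<\infty$, and the quadratic remainder is $O(|x|^{-4})$; compact support together with $\rho,\bar\rho\in L^1$ makes all these moments finite, so (unlike in that proposition) no $L^\infty$-hypothesis on $\rho-\bar\rho$ is needed, the boundedness there having been used only for global boundedness and not for the far-field rate. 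Combining the two decays yields $|\scriptQ(|\nabla U^N_\rho|)-\scriptQ(|\nabla\bar U^N|)|=O(|x|^{-4})$ on $\{|x|>2R_0\}$, which is integrable in $\R^3$; together with the bounded contribution from $B_{2R_0}$ this proves the first claim.

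Finally, to deduce that $\Epot^M(\rho)$ is finite it only remains to handle the Newtonian part: by interpolation $\rho\in L^1\cap L^p\subset L^{6/5}(\R^3)$ (since $1<6/5<3<p$), so Lemma \ref{lemma EpotN for rho in L6/5} and the Hardy--Littlewood--Sobolev inequality give $|\Epot^N(\rho)|<\infty$. As $\Epot^Q(\rho)=-\frac{1}{4\pi}\int\bigl(\scriptQ(|\nabla U^N_\rho|)-\scriptQ(|\nabla\bar U^N|)\bigr)\diff x$ is finite by the estimate just established, the sum $\Epot^M(\rho)=\Epot^N(\rho)+\Epot^Q(\rho)$ is finite.
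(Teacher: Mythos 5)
Your proof is correct, and while its scaffolding matches the paper's (the $L^\infty$ bound on the gradients from $p>3$, the splitting of $\R^3$ at $|x|=2R$, the trivial interior estimate, and $\Epot^N$ handled via $L^1\cap L^p\subset L^{6/5}$), the decisive exterior estimate takes a genuinely different route. The paper never forms the field difference $\nabla U^N_{\rho-\bar\rho}$: instead it sandwiches \emph{both} $|\nabla U^N_\rho|$ and $|\nabla U^N_{\bar\rho}|$ between the same two radial envelopes
\begin{equation*}
\sqrt{1-\frac{R^2}{|x|^2}}\,\frac{M}{(|x|+R)^2}\;\leq\;\left|\nabla U^N(x)\right|\;\leq\;\frac{M}{(|x|-R)^2},\qquad |x|>R,
\end{equation*}
proven geometrically in Proposition \ref{prop asymptotic behaviour of gradient UN}; the envelopes coincide for the two fields precisely because $\|\rho\|_1=\|\bar\rho\|_1=M$, and by monotonicity of $\scriptQ$ together with Lemma \ref{lemma Q continuous} the integrand on $\{|x|\geq 2R\}$ is dominated by the difference of the $3/2$-powers of the envelopes, which splits into an $O(|x|^{-4})$ and an $O(|x|^{-9/2})$ piece. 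You instead factor $|\scriptQ(a)-\scriptQ(b)|\leq\Lambda_2\max(a,b)^{1/2}|a-b|$ and place the equal-mass cancellation into the vanishing monopole of $\rho-\bar\rho$, obtaining $|\nabla U^N_{\rho-\bar\rho}(x)|=O(|x|^{-3})$ by Taylor expansion of the kernel, hence again an $O(|x|^{-4})$ integrand; all steps check out, including your observation that only the upper-bound half of Lemma \ref{lemma Q continuous} (i.e.\ only \ref{lambda bounded from above}) is needed, and your replacement of Morrey's inequality by a direct kernel splitting with H\"older, where $2p/(p-1)<3\iff p>3$ is indeed the right condition. The comparison: the paper's squeeze argument is more elementary but uses $\rho,\bar\rho\geq 0$ essentially, since the lower envelope rests on the sign of the density in the geometric argument; your multipole argument never uses non-negativity, so it proves the lemma for signed densities of equal total integral, and it isolates the cancellation more transparently in the linear quantity $\nabla U^N_{\rho-\bar\rho}$ -- the same mechanism as Proposition \ref{prop nabla UN phi is in every Lp if int phi is zero}, whose boundedness hypothesis you correctly identify as unnecessary for the far-field rate. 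One cosmetic quibble: your opening remark that $\scriptQ(|\nabla U^N_\rho|)$ is \emph{not} integrable by itself presupposes the lower bound \ref{lambda bounded from below}, which the lemma does not assume (under \ref{lambda bounded from above} alone, $\lambda\equiv 0$ is admissible); since this is purely motivational and unused in the argument, it is not a gap.
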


	For the proof of this lemma we need a technical proposition.
	
	\begin{prop} \label{prop asymptotic behaviour of gradient UN}
		Let $p>1$, $R>0$ and $\rho\in L^1\cap L^p(\R^3),\geq 0$ with $\supp \rho\subset B_R$. Then
		\begin{equation*}
		\sqrt{1-\frac{R^2}{|x|^2}} \frac{\|\rho\|_1}{(|x|+R)^2} \leq \left|\nabla U^N_\rho(x)\right| \leq \frac{\|\rho\|_1}{(|x|-R)^2}
		\end{equation*}
		for a.e. $x\in\R^3$ with $|x|> R$.
	\end{prop}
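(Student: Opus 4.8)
The plan is to work directly from the integral representation of the gradient provided by Lemma \ref{lemma Newtonian potential}, namely
\begin{equation*}
\nabla U^N_\rho(x) = \int \frac{x-y}{|x-y|^3}\rho(y)\diff y,
\end{equation*}
which for $|x|>R$ is a genuinely convergent integral, since the kernel is bounded on $\supp\rho\subset B_R$. Because $\rho\geq 0$, any pointwise bound on the kernel $\frac{x-y}{|x-y|^3}$ transfers to an integral bound after multiplying by $\rho$ and integrating, using $\int\rho\diff y=\|\rho\|_1$.

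For the upper bound I would estimate the kernel in modulus: for $y\in B_R$ and $|x|>R$ the triangle inequality gives $|x-y|\geq|x|-R$, hence $\bigl|\frac{x-y}{|x-y|^3}\bigr|=|x-y|^{-2}\leq(|x|-R)^{-2}$; integrating against $\rho$ yields $|\nabla U^N_\rho(x)|\leq\|\rho\|_1/(|x|-R)^2$.

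The lower bound is the substantive part and requires a geometric argument. Since only a lower bound on the \emph{length} of the vector is needed, I would project it onto the radial direction. Setting $e:=x/|x|$, I use $|\nabla U^N_\rho(x)|\geq e\cdot\nabla U^N_\rho(x)=\int\frac{e\cdot(x-y)}{|x-y|^3}\rho(y)\diff y$ and write each integrand as $\cos\psi(y)\,|x-y|^{-2}$, where $\psi(y)$ is the angle between $x-y$ and $e$. Two elementary estimates then close the proof: first $|x-y|\leq|x|+R$, so $|x-y|^{-2}\geq(|x|+R)^{-2}$; second, the angular bound $\cos\psi(y)\geq\sqrt{1-R^2/|x|^2}$. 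The latter is the crux: seen from the point $x$, the ball $B_R$ subtends a cone whose half-angle $\psi_{\max}$ is determined by the line through $x$ tangent to the sphere $|y|=R$, for which $\sin\psi_{\max}=R/|x|$; consequently $\psi(y)\leq\psi_{\max}$ and $\cos\psi(y)\geq\cos\psi_{\max}=\sqrt{1-R^2/|x|^2}$ for every $y\in B_R$. Combining the two estimates gives integrand $\geq\sqrt{1-R^2/|x|^2}\,(|x|+R)^{-2}$ pointwise, and integrating against $\rho\geq 0$ produces the claimed lower bound.

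The main obstacle is making the angular estimate $\cos\psi(y)\geq\sqrt{1-R^2/|x|^2}$ rigorous and recognising that the radial projection already suffices; the remainder is merely bounding a fixed kernel on the complement of a ball. I note in passing that the two bounds $|x-y|\leq|x|+R$ and $\cos\psi(y)\geq\sqrt{1-R^2/|x|^2}$ are attained at different source points $y$, so the stated lower bound is valid but not optimal — which is all that is required for its subsequent use in the proof of Proposition \ref{prop nabla UN phi is in every Lp if int phi is zero}.
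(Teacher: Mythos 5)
Your proposal is correct and follows essentially the same route as the paper: the triangle inequality $|x-y|\geq|x|-R$ for the upper bound, and for the lower bound the radial projection $|\nabla U^N_\rho(x)|\geq\left|\nabla U^N_\rho(x)\cdot\frac{x}{|x|}\right|$ combined with $|x-y|\leq|x|+R$ and the tangent-cone estimate $\cos\psi(y)\geq\sqrt{1-R^2/|x|^2}$. The only difference is presentational: the paper justifies the angular bound by a sketch, whereas you spell out the tangent-line argument with $\sin\psi_{\max}=R/|x|$ explicitly, which is if anything more rigorous.
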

	
	\begin{proof}
		Let $x,y\in\R^3$ with $|y|<R<|x|$, then $|x-y|\geq |x|-R$, and hence
		\begin{equation*}
		\left|\nabla U^N_\rho(x)\right| \leq \int\frac{\rho(y')}{(|x|-R)^2} \diff y' = \frac{\|\rho\|_1}{(|x|-R)^2}.
		\end{equation*}
		Let $\alpha$ be the angle between the vectors $x$ and $x-y$. With $\alpha_0$ as in the following sketch
		\begin{center}
			\includegraphics{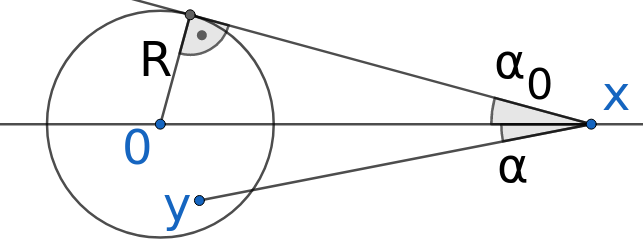}
		\end{center}
		we have
		\begin{equation*}
		|\alpha| \leq \alpha_0 \leq \frac{\pi}{2}
		\end{equation*}
		and
		\begin{equation*}
		\cos \alpha \geq \cos \alpha_0 = \sqrt{1-\sin^2\alpha_0} = \sqrt{1-\frac{R^2}{|x|^2}}.
		\end{equation*}
		Together with the estimate $|x-y|\leq |x|+R$, this yields
		\begin{equation*}
		\left|\nabla U^N_\rho(x)\right| \geq \left| \nabla U^N_\rho(x)\cdot\frac{x}{|x|} \right| = \left| \int \frac{\cos\alpha}{|x-y'|^2}\rho(y') \diff y' \right| \geq \sqrt{1-\frac{R}{|x|^2}} \frac{\|\rho\|_1}{(|x|+R)^2}.
		\end{equation*}
	\end{proof}
	
	\begin{proof}[Proof of Lemma \ref{lemma Mondian potential energy is finite}]
		Lemma \ref{lemma Newtonian potential} implies that $D^2U^N_\rho, D^2U^N_{\bar\rho}\in L^p(\R^3)$. Since $p>3$, Morrey's inequality \cite[Section 5.6. Theorem 5]{2010Evans} implies that there is a $C>0$ such that for every $x\in\R^3$
		\begin{equation*}
		\|\nabla U^N_\rho\|_{L^\infty(B_1(x))},\, \|\nabla U^N_{\bar\rho}\|_{L^\infty(B_1(x))} < C.
		\end{equation*}
		Hence
		\begin{equation*}
		\nabla U^N_\rho, \, \nabla U^N_{\bar\rho} \in L^\infty(\R^3).
		\end{equation*}
		Hence for $R>0$ 
		\begin{align*}
		\int_{|x|<2R} \left| \scriptQ  \left( \left| \nabla U^N_\rho \right| \right) - \scriptQ\left( \left| \nabla U^N_{\bar\rho} \right| \right) \right| \diff x  <  \infty.
		\end{align*}
		Fix $R>0$ such that $\supp\rho,\supp\bar\rho \subset B_R$. Using Lemma \ref{lemma Q continuous} we can estimate
		\begin{align*}
		\int_{|x|\geq 2R} \left| \scriptQ  \left( \left| \nabla U^N_\rho \right| \right) - \scriptQ\left( \left| \nabla U^N_{\bar\rho} \right| \right) \right| \diff x \leq C  \int_{|x|\geq 2R}  \left( \left| \nabla U^N_\rho \right|^{3/2}  -  \left| \nabla U^N_{\bar\rho} \right|^{3/2}  \right).
		\end{align*}
		Using Proposition \ref{prop asymptotic behaviour of gradient UN} we can estimate further
		\begin{align*}
		\int_{|x|\geq 2R} & \left| \scriptQ  \left( \left| \nabla U^N_\rho \right| \right) - \scriptQ\left( \left| \nabla U^N_{\bar\rho} \right| \right) \right| \diff x \\
		\leq & \,C \|\rho\|_1^{3/2} \int_{|x|\geq 2R} \left( \frac{1}{(|x|-R)^3} - \left(1-\frac{R^2}{|x|^2}\right)^{3/4} \frac{1}{(|x|+R)^3}\right) \diff x \\
		\leq &\, C \int_{|x|\geq 2R} \left(\frac{1}{(|x|-R)^3} -\frac{1}{(|x|+R)^3}\right) \diff x \\
		& + C\int_{|x|\geq 2R} \frac{1}{(|x|+R)^3}\left( 1 - \left(1-\frac{R^2}{|x|^2}\right)^{3/4} \right) \diff x \\
		\leq& \,C \int_{|x|\geq 2R} \frac{6R}{(|x|-R)^4}\diff x + C\int_{|x|\geq 2R} \frac{1}{(|x|+R)^3}\frac{R^{3/2}}{|x|^{3/2}} \diff x < \infty.
		\end{align*}
		So
		\begin{equation*}
		\int \left| \scriptQ\left( \left| \nabla U^N_\rho \right| \right) - \scriptQ\left( \left| \nabla U^N_{\bar\rho} \right| \right) \right| \diff x < \infty.
		\end{equation*}
		In particular this implies that the second integral in the difference of the potential energies exists. Further Lemma \ref{lemma Newtonian potential} implies that the first integral exists. So
		\begin{equation*}
		\tilde E_{pot}(\rho) - \tilde E_{pot}(\bar\rho)
		\end{equation*}
		is well defined and finite.
	\end{proof}
	
\end{document}